\documentclass[mnsc,nonblindrev]{informs3_nojournal}
\usepackage[utf8]{inputenc}

\usepackage{fix-cm}
\usepackage{lmodern}

\usepackage{algorithm}
\usepackage{algorithmic}
\usepackage{amsmath,amssymb,amsfonts}

\usepackage{graphicx}
\usepackage{xcolor}
\usepackage{subcaption}
\usepackage{booktabs}
\usepackage{mathtools}
\usepackage{url}
\usepackage{enumitem}
\usepackage{tikz}
\usetikzlibrary{arrows.meta,positioning,calc}

\usepackage{natbib}
 \bibpunct[, ]{(}{)}{,}{a}{}{,}%
 \def\bibfont{\small}%

\usepackage{tikz}
\usetikzlibrary{arrows.meta,positioning,calc}
\usepackage{pifont}
\usepackage{bm}   
\usepackage{xfrac}

\allowdisplaybreaks[4]
\def\BibTeX{{\rm B\kern-.05em{\sc i\kern-.025em b}\kern-.08em
    T\kern-.1667em\lower.7ex\hbox{E}\kern-.125emX}}

\renewcommand{\Halmos}{{$\blacksquare$}}

\OneAndAHalfSpacedXI

\TheoremsNumberedThrough
\EquationsNumberedThrough

\begin{document}
\def\thefootnote{*}\footnotetext{These authors contributed equally to this work.}
\def\thefootnote{\arabic{footnote}}
\RUNAUTHOR{Chen, Kim, Elmachtoub, and Xu}
\RUNTITLE{Fair Aggregation in Virtual Power Plants}
\TITLE{Fair Aggregation in Virtual Power Plants}

\ARTICLEAUTHORS{%
\AUTHOR{Liudong Chen\textsuperscript{*}}
\AFF{Department of Earth and Environmental Engineering, Columbia University, New York, NY 10027, \EMAIL{lc3671@columbia.edu}}
\AUTHOR{Hyemi Kim\textsuperscript{*}}
\AFF{Department of Industrial Engineering and Operations Research, Columbia University, New York, NY 10027, \EMAIL{hk3181@columbia.edu}}
\AUTHOR{Adam N. Elmachtoub}
\AFF{Department of Industrial Engineering and Operations Research and Data Science Institute, Columbia University, New York, NY 10027, \EMAIL{adam@ieor.columbia.edu}}
\AUTHOR{Bolun Xu}
\AFF{Department of Earth and Environmental Engineering, Columbia University, New York, NY 10027, \EMAIL{bx2177@columbia.edu}}
}

\ABSTRACT{
A virtual power plant (VPP) is operated by an aggregator that acts as a market intermediary to aggregate energy consumers to participate in wholesale power markets. 
By setting incentive prices, a VPP aggregator induces consumers to sell energy, and profits by providing this aggregated energy to the market. 
The energy supply is enabled by consumers’ flexibility to adjust electricity consumption in response to market conditions.
However, heterogeneity in consumers’ flexibility implies that profit-maximizing VPP pricing can generate inequalities in participation and benefit allocation across consumers.
In this paper, we develop a fairness-aware pricing framework to analyze how alternative fairness notions 
reshape the system performance, as measured by consumer Nash welfare, total consumer utility, and social welfare. 
We consider three fairness criteria: energy fairness, which ensures equitable energy provision, price fairness, which ensures consistency in incentive prices, and utility fairness, which ensures comparable levels of consumer utility. 
We model the aggregator-consumer interaction as a Stackelberg game and derive consumers’ optimal responses to incentive prices.
Using a stylized model, we show that profit-only pricing systematically disadvantages less flexible consumers. 
We show that incorporating energy fairness can either improve or worsen overall system performance across all measures, and the regimes that increase most system performance are attained only under moderate fairness levels. Surprisingly, price fairness never benefits less flexible consumers, even if it reduces the gap in incentive prices between consumers. On the other hand, utility fairness protects less flexible consumers without providing benefits to more flexible ones.
We validate our findings using data from an experiment in Norway under a tiered pricing scheme. Our results provide regulators and VPP operators with a systematic map linking fairness definitions and the level of enforcement to operational and welfare outcomes.

}

\KEYWORDS{Pricing, fairness, virtual power plants, demand response, social welfare}

\maketitle
\section{Introduction}
Virtual power plant (VPP) aggregators are rapidly emerging as market intermediaries that pool consumer-owned distributed energy resources (DERs)--such as residential batteries, electric vehicles (EVs), and smart devices--and coordinate their participation in wholesale power markets. This intermediation arises from broader institutional constraints, including rules set by the Federal Energy Regulatory Commission (FERC) that limit the direct participation of small-scale resources in wholesale markets~\citep{FERC}. 
For example, the utility Consolidated Edison offers $\$18$/kW-month for committed electricity demand reductions~\citep{ConEd}, but participation requires a $50$~kW pledge, which is typically beyond a single household. 
By setting incentive prices, a VPP aggregator induces consumers to provide energy to the aggregator, which then aggregates this energy to support grid operations and generate revenue. The provided energy comes from discharging residential batteries or reducing electricity use through home appliances and thermostat adjustments~\citep{flexibility1}. 
Importantly, consumers differ in flexibility--their ability to provide energy in response to incentive prices--due to differences in technology, usage patterns, operational constraints, and socioeconomic characteristics. For example, higher flexibility can arise because higher-income households tend to own larger DER portfolios (e.g., multiple EVs or larger batteries)~\citep{flexibility2}. This heterogeneity introduces bias, whereby more flexible consumers are able to provide more energy and potentially receive greater benefits. 
This raises a central design question: \textit{what constitutes fairness in VPP operations, and how should VPP operations be designed to incorporate fair participation and benefit allocation across heterogeneous consumers?}

VPPs are already deployed at scale. In the United States, approximately $30$~GW of VPP capacity--operated by utilities and private companies--can serve about $3.75\%$ of peak demand at an estimated cost of $\$43$/kW-year, which is $37.7\%$ and $56.6\%$ lower than that of utility-scale batteries and gas peaker plants~\citep{DOE_VPP}. Prominent examples include Tesla’s VPP program, which aggregates residential Powerwall batteries to provide grid support in California and Texas. Consumers enroll by setting a minimum state-of-charge (SoC) threshold, below which their battery will not be discharged, and are compensated at $\$2$--$\$5$ per kWh discharged--significantly above retail electricity rates~\citep{Tesla_VPP}. Another emerging company is Base Power, which recently raised \$1 billion~\citep{BasePower}. Instead of selling residential batteries, Base Power leases batteries to households and operates them as a VPP, controlling aggregated storage while maintaining the SoC above $20\%$ to preserve backup capability for individuals~\citep{BasePower1}. VPP models are also expanding to EV and smart home devices through platforms such as WeaveGrid and EnergyHub~\citep{WeaveGrid, EnergyHub}.

With the rapid expansion of VPP deployment, fairness concerns have become increasingly salient in both policy and program practice.
The U.S. Department of Energy has emphasized fairness in VPP development, calling for broader access and fair mechanisms for enrollment and compensation~\citep{DOE_VPP, DOE_VPP2}. For example, PG\&E has implemented a VPP initiative requiring that at least $60$\% of participants come from disadvantaged or low-income communities~\citep{PGE}. EnergyHub reports that aligning program design with user behavioral trends and equity goals can increase enrollment among EV drivers~\citep{Energyhub_report}. Similarly, Tesla ensures that all participating consumers receive some compensation~\citep{Tesla_VPP}, and Base Power maintains each battery’s SoC above $20\%$~\citep{BasePower1}, reflecting attention to participant protection and fairness. 
Furthermore, socioeconomic studies also suggest that fairness considerations in daily operations can enhance consumer engagement, thereby supporting the scalability and reliability of VPP programs~\citep{ito2018moral, social2}. While these initiatives highlight the importance of fairness, existing regulatory and programmatic structures provide limited guidance on how specific incentive rules and fairness requirements affect participation, consumer welfare, and system performance.

\subsection{Summary of Main Contributions and Implications}
We develop a fairness-aware VPP pricing framework to study how incentive prices shape participation and benefit allocation when consumers are heterogeneous in their flexibility. We model the aggregator–consumer interaction as a Stackelberg game in which the aggregator sets incentive prices to maximize profit while anticipating consumers’ optimal energy responses derived from a utility maximization model. Consumer utility is defined as the payment received minus the cost of providing energy, and the amount of energy provided is limited by capacity--the maximum amount a consumer can provide to the VPP. We formalize three fairness criteria that can be operationalized through VPP pricing: energy fairness (promotes equitable energy provision across consumers), price fairness (promotes price consistency across consumers), and utility fairness (promotes balance in consumer utility). We evaluate the operational and welfare implications of these criteria using three complementary performance measures: (i) consumer Nash welfare (CNW), which captures efficiency and distributional balance across consumers, (ii) total consumer utility, and (iii) social welfare, which captures overall system well-being. We vary the level of fairness via a dimensionless parameter $\alpha \in [0,1]$, where $\alpha = 0$ corresponds to the status quo without fairness considerations and $\alpha = 1$ represents perfect enforcement under a given fairness criterion.

Using a stylized model with two consumer types differentiated by their ability to provide energy in response to incentive prices, i.e., flexibility, we provide a comprehensive analytical spectrum from $\alpha=0$ to $\alpha=1$ that characterizes how fairness constraints reshape optimal incentive prices and energy provided by consumers, and how performance changes as the fairness level changes. First, in the absence of any fairness considerations, a profit-only pricing framework favors consumers with greater response ability (higher flexibility), and yields decreasing CNW as the flexibility gap between consumers grows (Theorem~\ref{prop:profit_only}). Second, we prove that it is practically infeasible to enforce all three fairness criteria simultaneously, as doing so drives the aggregator’s profit to zero (Theorem~\ref{thm:impossibility}). Third, the choice of the fairness criterion matters. Different fairness criteria induce qualitatively different operating outcomes, and increasing the level of fairness $\alpha$ does not necessarily improve CNW, total consumer utility, or social welfare. We show that each fairness criterion induces a set of operating regimes, where a ``regime'' is a region of $\alpha$ over which the system performance measure (e.g., CNW) maintains a constant direction of change--either increasing, remaining constant, or decreasing. As $\alpha$ varies, the system may transition across different operating regimes.
\begin{itemize}
    \item Energy fairness exhibits four regimes where CNW and total consumer utility may increase or decrease, and the regimes that improve all system performance are attained only at moderate fairness levels, which must transition to a different regime with a higher $\alpha$ (Theorem~\ref{thm:energy_fairness}).
    \item Price fairness yields three regimes and does not benefit less flexible consumers in any regime. When $\alpha$ is large, less flexible consumers may even be excluded. Prior to this threshold, CNW may either increase or decrease, while social welfare continues to increase (Theorem~\ref{thm:price_fairness}).
    \item Utility fairness induces four possible regimes, which cause no harm to less flexible consumers and provide no benefit to more flexible consumers. While CNW may increase or decrease, social welfare does not increase under any regime (Theorem~\ref{thm:utility_fairness}).
\end{itemize}

Lastly, we conduct a case study using data from the \textit{iFlex} field experiment in Norway under a tiered pricing design. We show that our theoretical qualitative regime patterns can be extended to multi-consumer settings, which can also be effectively reduced to an equivalent two-consumer representation. When price response is calibrated using real-world energy consumer data, we first observe higher flexibility consumers corresponding to higher-income households. Furthermore, when the aggregator collects $80\%$ of the total baseline demand of all consumers, separately enforcing perfect demand, price, and utility fairness (i.e., $\alpha=1$) leads to increases in total consumer utility of $11.91\%$, $29.42\%$, and $50.47\%$, respectively, at the cost of profit reductions of $0.69\%$, $2.32\%$, and $6.45\%$.

For regulators, utilities, and aggregators seeking to operate VPP programs in a fair and economically viable manner, our findings imply the following takeaways.
\begin{enumerate}
    \item \textbf{Fairness must be incorporated into mechanism design}. Under profit-maximizing incentives, VPP operations tend to favor more flexible consumers, while less flexible consumers are less likely to participate, leading to unfairness regarding participation and benefit allocation. This highlights the need to account for fairness at the design stage rather than as an afterthought.
    \item \textbf{The choice of fairness metric is itself a policy decision}. Different fairness metrics define a different notion of what it means to be fair and therefore redistribute benefits in different ways. Policymakers should choose the fairness criterion carefully: (i) when prioritizing higher total consumer utility and social welfare, both energy fairness and price fairness can be appropriate with a moderate level of $\alpha$; (ii) from the consumer perspective, equalizing prices and energy provisions is not necessarily desirable, as doing so may reduce CNW and place a greater burden on less flexible consumers; and (iii) when the primary objective is to protect less flexible consumers from utility losses, utility fairness serves as an effective safeguard.
    \item \textbf{More enforcement is not always better}. Increasing the fairness level does not monotonically improve CNW, total consumer utility, or social welfare. Regulators should avoid one-size-fits-all mandates and instead calibrate both the fairness criterion and $\alpha$ to local system conditions, consumer heterogeneity, and policy objectives.
    \item \textbf{Fairness is a tool to compensate flat-rate payments}. Fairness-aware VPP incentives can function as a targeted subsidy, channeling program payments to less flexible consumers who face uniform flat-rate prices but unequal access to DER resources.
\end{enumerate}

\subsection{Related Literature}
To contextualize our work, we organize the related literature into three areas--fairness in pricing, power systems, and VPP operations. 

\paragraph{Fairness in Pricing.}
A substantial body of research establishes key frameworks and metrics for analyzing fairness in pricing outside the VPP context, providing important insights for our analysis. These studies examine how various fairness constraints--related to price, demand, consumer utility, and access--shape profit, consumer utility, and social welfare in diverse settings, including monopolistic markets \citep{price_discrimination}, vehicle-sharing systems \citep{vehicle_sharing}, and personalized pricing models \citep{kallus2021fairness}.

Some research studies focus on implementing pricing fairness based on specific consumer characteristics or market conditions. These works analyze diverse criteria, ranging from individualized fairness, which requires similar prices for consumers with similar characteristics \citep{das2022individual}, and group fairness, which limits price disparities across consumer groups \citep{chen2023personalized}. Further work addresses fairness in dynamic settings, including constraints across consumer groups and time periods \citep{cohen2025dynamic}, and the enforcement of utility fairness using contextual bandit algorithms \citep{chen2025utility}.

Our work shares the goal of examining fairness effects on system performance and consumer outcomes with the aforementioned studies, yet these insights are unexplored in VPPs comprehensively. The key distinction is that, unlike standard settings in which consumers only pay, VPP consumers provide energy and are compensated. This reciprocal interaction fundamentally alters the pricing structure and fairness considerations, necessitating a focused analysis.

\paragraph{Fairness in VPP operations.}
Fairness in VPP design is formalized through specific metrics and account for consumer heterogeneity. Existing fairness criterion for DERs relies on a Rawlsian perspective, enforcing fairness by maximizing the minimum consumer utility via social welfare maximization~\citep{aggregator_consider}. Heterogeneity across consumers is further quantified using inequality measures, such as Gini-based or variance-based indices, which guide VPP decision-making~\citep{define_a}. Beyond affecting individual participation, incorporating fairness at the consumer level alters intra-aggregator interactions, leading to quasi-concave games and fundamentally changing market outcomes~\citep{each_aggregator}. A study further extends the VPP framework to energy communities, where consumers can directly trade energy within the group, showing that embedding fairness directly at the governance level--through community agreements and market rules--can promote equitable outcomes~\citep{include_fairness}.

Recent work adopts a long-term perspective, demonstrating that embedding fairness as a weighted objective enables VPPs to balance efficiency, equity, and network constraints, thereby improving consumer retention and sustaining participation~\citep{fair_aggregation}. From a control perspective, fairness is further interpreted as requiring long-run prices or incentives to be independent of initial conditions--a property not satisfied by standard controllers and instead requiring carefully designed feedback mechanisms~\citep{VPP_fairness}.

Prior work embeds fairness into VPP models using a single fairness criterion imposed at a fixed level, either via hard constraints or weighted welfare objectives, yielding point solutions that illustrate fairness–efficiency tradeoffs under a chosen notion. Our work differs from and generalizes the existing literature in three ways. First, we formulate the aggregator–consumer interaction as a Stackelberg game with closed-form consumer responses. Second, we study multiple fairness definitions within a unified framework. Third, and most importantly, we characterize the entire spectrum of fairness enforcement by varying the fairness level continuously from profit-only pricing to full enforcement, identifying all feasible operating regimes and their transitions. This reveals non-monotonic welfare effects and regime changes that are invisible in fixed-level or single-metric formulations.

\paragraph{Fairness in power systems.}
In the power system context, fairness challenges arise not only from consumer heterogeneity but also from power grid structure, infrastructure deployment, and policy priorities. Because system operation is inherently shaped by network topology, certain nodes play a disproportionate role in maintaining voltage stability and supply–demand balance. In radial distribution networks, photovoltaic (PV) installations located toward the end of feeders are more likely to induce voltage rise; consequently, economically efficient modeling tends to curtail these PVs more frequently than those closer to substations to the main voltage level, leading to location-based fairness concerns. To mitigate such disparities, recent studies incorporate fairness into PV curtailment and voltage control by modifying voltage sensitivity matrices~\citep{fairness_in}, imposing curtailment-equality constraints~\citep{add_fairness}, or introducing fairness-weighted objectives~\citep{curtailment}, revealing an inherent trade-off between operational efficiency and consumer fairness and suggesting potential solutions by careful objective design or targeted incentive mechanisms~\citep{PV_for, similar_to}. Beyond voltage control, network structure also creates differences in outage duration and restoration priority across locations. Recent work embeds fairness into resilience indices alongside system performance~\citep{equally_dis} and employs local search over feasible post-fault network configurations to balance efficiency and fairness in restoration~\citep{min_outage}, while longer-term analyses further demonstrate that fairness-aware restoration strategies can improve both power recovery and economic resilience~\citep{restore_should}.

Power system infrastructure investment costs are ultimately reflected in higher electricity prices, which, if not carefully designed, can impose disproportionate burdens on low-income consumers~\citep{infrastructure_investment}. Similar fairness concerns arise in technology adoption, where large-scale AI computing can exacerbate regional inequities by creating uneven demand growth and inducing heterogeneous decarbonization pressures across locations~\citep{technology_AI}. Potential pathways to mitigating these disparities include the deployment of clean energy technologies, such as heat pumps, which have been shown to reduce energy equity gaps across income groups~\citep{technology_development}, as well as the design of tariff structures that explicitly account for consumer heterogeneity, including differences between solar and non-solar consumers~\citep{utility_set}.

\section{Framework and Preliminaries}
This section introduces the modeling and evaluation framework to study fairness throughout the paper.
Section~\ref{section21} introduces the basic VPP pricing model and consumer price response structure, formulated as a Stackelberg game between consumers and a VPP aggregator. Section~\ref {section22} defines the fairness metrics considered in this paper, followed by Section~\ref {section23}, which presents the performance measures used to evaluate system outcomes.

\subsection{Consumer and Aggregator Models}\label{section21}
We consider a single-period setting where consumers respond to incentive prices offered by a VPP aggregator to provide energy. By aggregating this provided energy, the aggregator can participate in upper-level markets--such as utility programs (e.g., Con Edison’s demand response initiatives~\citep{ConEd}) or wholesale electricity markets--to offer grid services and generate revenue~\citep{DOE_VPP}. We assume that the aggregator has full visibility into each consumer’s information, including DER status and price response behavior. This assumption allows us to isolate the impact of fairness considerations from uncertainties associated with prediction errors. In practice, DERs such as EVs and home batteries are typically monitored directly by aggregators~\citep{Energyhub_report}, while other response behaviors may be estimated using predictive models.

Let $D_i$ denote the provided energy of consumer $i\in [N]:=\{1,2,\ldots,N\}$ in response to an incentive price $p_i$ per unit energy. The provided energy comes from discharging home batteries or behavioral adjustments (e.g., changes in thermostat setpoints), which are accomplished at a cost that captures opportunity costs or discomfort from these actions. We denote the cost of providing energy $D_i$ by $C_i(D_i)$, thereby each consumer chooses $D_i$ to maximize individual utility 
\begin{equation}
    U_i: = p_iD_i - C_i(D_i),\label{utility_define}
\end{equation}
We assume $C_i(D_i)$ is strictly convex and non-decreasing, reflecting the increasing marginal cost of providing additional energy.

The \textbf{consumer’s utility maximization problem} is 
\begin{equation}
\begin{aligned}
     d_i(p_i) := \max_{D_i}&\quad U_i (p_i,D_i)\\
     \text{s.t. }&\quad0 \le D_i  \le \bar{D}_i.
\end{aligned}
\label{consumer}
\end{equation}
where $d_i(p_i)$ is the consumer’s price response function, $\bar D_i >0$ is the capacity of consumer~$i$, representing the maximum energy that consumer $i$ can provide. The capacity depends on the current operating conditions of consumer-owned DERs, such as home battery SoC or thermostat setpoint.

Then, for a given price $p_i$, the optimal solution $D_i^\ast$ to \eqref{consumer} defines $d_i(p_i)$. Under standard convexity assumptions, the optimal response can be written as
\begin{equation}
d_i(p_i) = 
\begin{cases}
  0, & \text{if } p_i \le C_i'(0)\\
  \left(C_i'\right)^{-1}(p_i), &\text{else if } 
  C_i'(0) <p_i < C_i'(\bar D_i),\\
  \bar{D}_i, & \text{otherwise}.
\end{cases}
\label{eq:response}
\end{equation}

The aggregator pays each consumer $p_i D_i$ and derives a gross benefit from the aggregated energy, denoted by a concave and non-decreasing function $V\big(\sum_{i\in [N]}D_i\big)$, reflecting diminishing marginal returns in the upper-level market. The aggregator has a maximum aggregated energy amount $D_s >0$, which is determined by the upper-level market's operating conditions. For instance, in a peak demand shaving event, the system operator first issues a target peak demand reduction amount, then the aggregator achieves this target by incentivizing consumers~\citep{coupon}.

The aggregator seeks to maximize its profit by setting the incentive prices $\{p_i\}$. The \textbf{aggregator's profit maximization} problem is given by
\begin{subequations}\label{aggregator_problem}
\begin{align}
    \max_{p_i\ge0}&\quad\Pi := V\Big(\sum_{i\in [N]}D_i\Big) - \sum_{i\in [N]}p_iD_i\label{aggregator_profit}\\
    \text{s.t. }&\quad D_i = d_i(p_i),~\forall i\in [N],\nonumber\\
    &\quad\sum_{i\in [N]}D_i \le D_{\mathrm{s}}.\nonumber
\end{align}
\end{subequations}
Given the prices set by the aggregator, consumers respond optimally according to \eqref{consumer}. This leader–follower structure constitutes a Stackelberg game, with the aggregator as the leader and consumers as followers. Since the aggregator’s objective is concave in prices and each consumer’s utility maximization problem is strictly concave in $D_i$, a unique Stackelberg equilibrium exists~\citep{SE}. The closed-form response \eqref{eq:response} allows us to analytically characterize optimal pricing and study how fairness constraints reshape equilibrium outcomes.

We summarize the main model assumptions as follows.
\begin{assumption}[\emph{Model assumptions}]~
\begin{itemize}
    \item The consumer's cost function $C_i(D_i)$ is strictly convex, non-decreasing, continuously differentiable, and $C_i(0)=0$.
    \item The aggregator's revenue function $V(\cdot)$ is concave, non-decreasing, and $V(0)=0$.
    \item The aggregator has full knowledge of each consumer's price-response function $d_i(p_i)$.
\end{itemize}
\end{assumption}

\subsection{Fairness Metrics}\label{section22}
We define three fairness metrics in VPP operations: energy, price, and utility.

\paragraph{(1) Energy fairness} evaluates how equitably consumers provide energy to the VPP program, measured by the absolute difference in the ratios of energy provided to capacity across consumers, $|{D_i}/{\bar{D}_i}-{D_j}/{\bar{D}_j}|$. In resource allocation contexts, particularly in energy systems, such a criterion is essential for guaranteeing equal access to grid participation. For instance, in EV charging systems, energy fairness dictates that all users should have equitable access to charging opportunities~\citep{demand_ev}. Similarly, during emergency load shedding events, system operators enforce equity outcomes by maintaining minimum energy access levels across consumers~\citep{demand_Shedding}. In the VPP context, where consumers are also energy providers, energy fairness resembles fair task allocation principles~\citep{demand_task}, ensuring that provided energy is equitably distributed. 

\paragraph{(2) Price fairness} requires that the incentive price per unit of energy offered to each consumer is similar. It is mathematically expressed as the pairwise difference $|p_i-p_j|$. This metric supports the principle that a uniform pricing structure represents a fair allocation--a perspective widely accepted in marketing and economic theory~\citep{price1}.
Although energy consumers have traditionally been viewed as passive recipients of electricity, VPPs transform them into active energy providers. Under this paradigm, price fairness aligns with the labor economics principle of ``equal pay for equal work''~\citep{price2}, ensuring that equivalent units of provided energy receive the same compensation, irrespective of the provider’s type or circumstance.

\paragraph{(3) Utility fairness} aims to ensure that all consumers attain a similar level of satisfaction--measured in terms of utility--regardless of their circumstances. This notion can be expressed mathematically by minimizing the difference in utility, e.g., $|U_i - U_j|$. In other words, utility fairness guarantees that participation yields benefits comparable across individuals, ensuring that no participant is significantly worse off in utility. Defining fairness as equality of utility has been regarded as normatively attractive.
\citet{kolm1997justice}, for example, notes that ``\emph{If the fundamental preference ordering can be represented by an ordinal utility function, this justice becomes equality of the utilities of the different persons}''.

Incorporating the fairness metrics defined above, we augment the aggregator profit maximization problem \eqref{aggregator_problem} by introducing each of them as constraints. To capture the trade-off between profit and fairness, we introduce a parameter $\alpha \in [0,1]$. Specifically, $\alpha = 0$ corresponds to a setting that maximizes profit without any consideration of fairness, whereas $\alpha = 1$ corresponds to a fully fairness-oriented setting, under which the fairness metrics are strictly equalized across all consumers. Let $M_i(p_i)$ denote a generic fairness metric as a function of price for consumer $i$.
\begin{itemize}
    \item Price fairness, $M_i(p_{i}) = p_i$,
    \item energy fairness, $M_i(p_i)=d_i(p_i)$,
    \item Utility fairness, $M_i(p_{i}) = U_i(p_{i})$.
\end{itemize}

We can then formally define the \textbf{fair aggregator profit maximization} problem as follows:
\begin{equation*}
    \begin{aligned}
    \max_{p_i}&~\Pi := V\Big(\sum_{i\in [N]}D_i\Big) - \sum_{i\in [N]}p_iD_i\\
    \text{s.t. }&~ D_i = d_i(p_i),~\forall i\in [N],\\
    &~\sum_{i\in [N]}D_i \le D_{\mathrm{s}},\\
    &~\lvert M_i(p_{i}) - M_{j}(p_{j}) \rvert \le (1-\alpha)\Delta,~\forall i,j \in [N],
    \end{aligned}
\end{equation*}
where $\Delta$ denotes the maximum disparity under the chosen fairness metric, i.e., $\Delta:=\max_{k,l\in [N]} \lvert M_k(p^\ast_{k}) - M_{l}(p^\ast_{l}) \rvert$. Here, $p_i^\ast$ denotes the optimal incentive prices obtained from the aggregator profit-only maximization problem~\eqref{aggregator_problem}, which serves as a baseline for evaluating fairness.

\subsection{Performance Measures}\label{section23}
To assess the impact of adding fairness criteria on the profit maximization VPP operations, we introduce the following three performance measures.

\paragraph{(1) Consumer Nash welfare} (CNW) is adapted from Nash social welfare, a classical economic metric widely used in resource allocation and auction design to capture equity among participants~\citep{NSW}. While Nash social welfare is defined as the product of individual utilities (equivalently, the sum of their logarithms), we focus exclusively on consumer equity and exclude the aggregator, whose profit has a fundamentally different economic structure. Accordingly, CNW is defined as $W_\mathrm{CNW} = \sum_{i\in [N]}\log \left(U_i \right)$, where $U_i>0$ is consumer $i$'s utility as defined in~\eqref{utility_define}. 
Higher CNW reflects more equitable utility allocation and is associated with greater consumer participation and retention~\citep{NSW_ref, fair_aggregation, VPP_fairness}, while extreme inequality drives CNW toward negative infinity, penalizing allocations that concentrate among a few consumers~\citep{NSW_resource}.

\paragraph{(2) Total consumer utility} is the sum of all individual utilities, i.e., $U = \sum_{i\in[N]}U_i$, which evaluates the VPP program performance from the consumer perspective, complementing the original problem, which focuses on maximizing the aggregator’s profit. While CNW effectively balances efficiency and equity, it places greater emphasis on equity across consumers and is therefore less sensitive to the magnitude of improvements. Total consumer utility complements this limitation by capturing the overall scale of consumer utility.

\paragraph{(3) Social welfare} is defined as the sum of aggregator profit (\ref{aggregator_profit}) and total consumer utility, i.e., $W_\mathrm{SW} = \Pi + U$. It captures the overall economic well-being of the VPP system and serves as an important metric for regulatory approval and institutional support~\citep{breyer2009regulation}. By jointly accounting for aggregator profit and consumer utility, social welfare reveals whether gains in consumer utility outweigh potential reductions in aggregator profit, thereby indicating whether the system-level outcome is socially beneficial.

Interpreting the performance measure from multiple perspectives is essential for understanding the impact of the fairness criteria. For instance, even if the aggregator's profit decreases, social welfare may increase, indicating higher total consumer utility ($U$) and better consumer outcomes. Improving CNW can enhance perceived fairness among consumers and, in turn, the program’s reputation. Moreover, supporting less flexible consumers (higher $U_1$) aligns with corporate social responsibility goals and may build long-term trust. These factors can ultimately contribute to the program’s profitability.

\section{Theoretical Analysis of a Stylized Model}
We begin by analyzing a stylized model with two consumers, i.e., $N = 2$. 
Each consumer $i \in [2]$ has a capacity $\bar{D}_i$, and without loss of generality, we assume $\bar{D}_1 < \bar{D}_2$. We assume that each consumer's cost function takes the form
\begin{equation}
C_i(D) = \tfrac{1}{2} a D^2 + (b - a \bar{D}_i) D,
\label{eq:cost_function}
\end{equation}
where $a > 0$ and $\frac{b}{a} > \max_i \bar{D}_i$. This specification ensures that the cost function is strictly convex ($C_i''(D) = a > 0$) and monotonically increasing ($C_i'(D) > 0$) on $[0, \bar{D}_i]$. 
Thus, larger provided energy $D$ becomes increasingly costly. Moreover, in this stylized setting, consumers with higher capacity $\bar{D}_i$ face lower marginal costs for the same response level. We treat this as a modeling assumption to capture one important class of consumers--those for whom larger available capacity makes a given response less burdensome, i.e., if $\bar{D}_1 < \bar{D}_2$, then $C_1(D) > C_2(D)$ for all $D \ge 0$. This is also supported by the economic literature, where consumers with larger capacity tend to have greater ability to adjust usage, and they are typically high-income consumers~\citep{ito2014consumers}. With this setting, consumer $2$ has a strong ability to provide energy in response to the incentive price, indicating greater flexibility. We also exclude a trivial case with $\bar{D}_1 = \bar{D}_2$, since the consumers are homogeneous and fairness constraints become unnecessary.

The cost function~\eqref{eq:cost_function} can be derived from the cost associated with reducing current energy consumption by $D$. Specifically, let $f(D)$ denote the comfort (or utility) from energy consumption level $D$. Following standard formulations in the price response literature~\citep{samadi2012advanced,yang2022prosumer}, comfort is assumed to be quadratic, strictly concave, and increasing:
$$
f(D) = -\tfrac{1}{2} a D^2 + b D,
$$
where $a > 0$ and $b > a \max_i \bar{D}_i$. Substituting this functional form into the definition of discomfort, i.e., the comfort loss from reducing energy consumption by $D$, yields
$$
C_i(D) = f(\bar{D}_i) - f(\bar{D}_i - D) = \tfrac{1}{2} a D^2 + (b - a \bar{D}_i) D.
$$

This formulation parallels the cost curves of conventional generators in electricity markets. In this analogy, the quadratic term $\tfrac{1}{2}a D_i^2$ reflects the increasing marginal cost associated with higher provided energy. The parameter $b-a\bar{D}_i$ resembles a start-up cost, representing the minimum incentive required for participation--typically dictated by the generator’s physical characteristics~\citep{wood2013power}.

Each consumer solves
$$
\max_{0 \le D_i \le \bar{D}_i} \; p_i D_i - C_i(D_i),
$$
The first-order condition $p_i = C_i'(D_i)$ yields
\begin{equation}
p_i = a D_i + (b - a \bar{D}_i)
\quad \Rightarrow \quad
D_i^{\ast} = \min\!\left(\bar D_i,\max\!\left(0,~\frac{p_i-b}{a}+\bar D_i\right)\right)
\label{linear_response}
\end{equation}
Note that to elicit participation from consumer~$i$, i.e., $D_i^\ast > 0$, the incentive price $p_i$ must satisfy $p_i > b - a\bar D_i$.

We assume that the aggregator's profit function is linear, defined as the electricity price from the upper-level market multiplied by the aggregated energy. Accordingly, we define
$$V(D_1 + D_2) = \pi (D_1 + D_2),$$
where $\pi$ denotes the upper-level market price. We assume $\pi > b - a \bar{D}_1$, ensuring that profitable aggregation remains feasible under consumer participation constraints. Additionally, we assume that $D_{\mathrm{s}} < \bar{D}_1 + \bar{D}_2$, meaning that the aggregator does not wish to aggregate an amount exceeding the total available capacity on the consumers’ side. The case $D_{\mathrm{s}} = \bar{D}_1 + \bar{D}_2$ is excluded, because when $D_1^\ast = \bar{D}_1$ and $D_2^\ast = \bar{D}_2$, the provided energy already achieves both energy fairness $\left(\sfrac{D_1^\ast}{\bar{D}_1} = \sfrac{D_2^\ast}{\bar{D}_2} = 1\right)$ and price fairness (since $p_1^\ast = p_2^\ast = b$).

\subsection{Implications of Profit-Only Optimization}
We first analyze the optimal solution to the profit maximization problem without fairness constraints, as defined in \eqref{aggregator_problem}, which serves as a baseline for evaluating the implications of each fairness criterion. The closed-form optimal solution is presented in Lemma~\ref{lemma:opt_sol}. All proofs are provided in the Appendix.
\begin{lemma}[Profit-Only Optimal Solution]
Assume $a>0$ and $\pi - b + a\bar D_i > 0$ for all $i \in [2]$. Define
\begin{equation*}
    D_1^{\dagger} = \min\!\left( \frac{\pi - b}{2a} + \frac{\bar{D}_1}{2},\ \bar{D}_1 \right)\quad\text{and}\quad D_2^{\dagger}=\frac{\pi - b}{2a} + \frac{\bar{D}_2}{2}.
\end{equation*}
Then, the optimal solution to the profit maximization problem~\eqref{aggregator_problem} is given by
\begin{enumerate}
    \item[(1)] If $D_1^{\dagger} + D_2^{\dagger} \le D_{\mathrm{s}}$, then $D_i^{*} = D_i^{\dagger}$ for all $i \in [2]$.
    \item[(2)] Otherwise,
    \begin{equation*}
        D_1^{\ast} = 
        \min\left(\bar D_1,\max\left(0,\frac{D_{\mathrm{s}}}{2} + \frac{\bar{D}_1 - \bar{D}_2}{4}\right)\right),
        \qquad
        D_2^{\ast} = D_{\mathrm{s}} - D_1^{\ast}.
    \end{equation*}
\end{enumerate}
\label{lemma:opt_sol}
\end{lemma}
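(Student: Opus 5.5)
Since each consumer's response \eqref{linear_response} is monotone and invertible on the participation range, I will optimize over quantities $D_i$ rather than prices. On $D_i\in[0,\bar D_i]$ the aggregator can implement any $D_i$ with the inverse price $p_i = aD_i + b - a\bar D_i$; prices above $b$ yield no extra energy and only raise payments, so they are never optimal. Substituting $V(D_1+D_2)=\pi(D_1+D_2)$, problem \eqref{aggregator_problem} becomes
\begin{equation*}
\max_{0\le D_i\le \bar D_i}\ \sum_{i\in[2]} \Big[(\pi - b + a\bar D_i)D_i - aD_i^2\Big] \quad\text{s.t.}\quad D_1+D_2\le D_{\mathrm s}.
\end{equation*}
This is a separable, strictly concave quadratic program over a convex polytope. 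It therefore has a unique maximizer, characterized by the KKT conditions.

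\textbf{Unconstrained case.} I first drop the coupling constraint. Each term is maximized at its stationary point $(\pi-b)/(2a) + \bar D_i/2$, which is positive by the assumption $\pi-b+a\bar D_i>0$. Clipping at $\bar D_i$ gives the maximizer. For consumer 2, I need $(\pi-b)/(2a)+\bar D_2/2\le \bar D_2$, i.e.\ $\pi\le b+a\bar D_2$. I would argue this holds implicitly. If it failed, the clipped value $\min(\cdot,\bar D_2)$ would be used; but whenever the coupling constraint is inactive we need $D_1^\dagger+D_2^\dagger\le D_{\mathrm s}<\bar D_1+\bar D_2$. I would check that this case is consistent with the statement as written, or note that $D_2^\dagger$ should be read with the same cap. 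Then, if $D_1^\dagger+D_2^\dagger\le D_{\mathrm s}$, the unconstrained maximizer is feasible and hence optimal, which gives case (1).

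\textbf{Binding case.} Otherwise the coupling constraint must bind at the optimum: by concavity, if it were slack, the optimum would coincide with the infeasible unconstrained point. I then substitute $D_2 = D_{\mathrm s}-D_1$ to get a one-dimensional strictly concave quadratic in $D_1$. Its first-order condition is
\begin{equation*}
\pi-b+a\bar D_1-2aD_1 = \pi-b+a\bar D_2-2a(D_{\mathrm s}-D_1),
\end{equation*}
which yields $D_1 = D_{\mathrm s}/2 + (\bar D_1-\bar D_2)/4$. Projecting onto the feasible interval gives the answer. The upper bound $\bar D_1$ and lower bound $0$ produce exactly the stated $\min/\max$. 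I also need $D_2 = D_{\mathrm s}-D_1\in[0,\bar D_2]$. This uses $D_{\mathrm s}<\bar D_1+\bar D_2$, so that $D_1=\bar D_1$ leaves $D_2<\bar D_2$. It also uses $\bar D_1<\bar D_2$, so that when the interior formula gives $D_1=0$, then $D_2=D_{\mathrm s}\le\bar D_2$ follows from $D_{\mathrm s}/2\le(\bar D_2-\bar D_1)/4$.

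\textbf{Main obstacle.} The hardest part is the boundary bookkeeping. I must verify that the projection in the binding case is exactly the KKT solution with the multiplier $\lambda\ge 0$: the multiplier equals the common marginal profit, and its nonnegativity follows from the infeasibility of the unconstrained point. I must also confirm that no mixed corner is missed, such as $D_2$ hitting $\bar D_2$ while $D_1$ is interior. For the latter I would argue that $D_2$ capped at $\bar D_2$ with $D_1<\bar D_1$ contradicts optimality, since the marginal profit of consumer 2 at $\bar D_2$ is $\pi-b-a\bar D_2$, which is below consumer 1's marginal profit at any $D_1<\bar D_1$ under the stated parameter ranges.
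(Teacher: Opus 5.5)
Your proposal is correct and takes essentially the paper's route: reduce to the strictly concave quadratic program in $(D_1,D_2)$, split on whether $D_1+D_2\le D_{\mathrm s}$ binds, clip the stationary points, and use $D_{\mathrm s}<\bar D_1+\bar D_2$ and $\bar D_1<\bar D_2$ to keep $D_2$ off its bounds; your one-dimensional substitution and marginal-profit exchange argument are a tidier version of the paper's multiplier bookkeeping. The point you left open is resolved the same way the paper resolves it: if $D_2^\dagger>\bar D_2$ then $\pi-b>a\bar D_2>a\bar D_1$, so $D_1^\dagger=\bar D_1$ and $D_1^\dagger+D_2^\dagger>\bar D_1+\bar D_2>D_{\mathrm s}$; hence case (1) can only occur with $D_2^\dagger\le\bar D_2$, and the statement holds as written without any extra cap.
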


Notice that there are two possible cases in Lemma \ref{lemma:opt_sol}. Case (1) is when the aggregated energy is not binding to $D_\mathrm{s}$, and case (2) is when the aggregated energy is binding. In case (1), the provided energy is limited by consumers rather than the aggregated energy from the upper-level market quota. This corresponds to a situation of a scarce consumer pool, in which attracting and retaining participants becomes important, thereby motivating the inclusion of fairness considerations that may facilitate participation. On the other hand, in case (2), the provided energy is constrained by the maximum aggregated energy limit. In this case, the aggregator must decide which consumers to collect energy from to avoid excluding consumers from participation, which naturally raises fairness concerns in the pricing schemes.

The following Theorem~\ref{prop:profit_only} motivates the need for fairness criteria by characterizing the consequences of profit-only optimization. It shows that without fairness considerations, the aggregator tends to favor consumers with higher capacity, resulting in an imbalance in the allocation of benefits.
\begin{theorem}[Effectiveness of Profit-Only Solution]
\label{prop:profit_only}
    The optimal solution to the profit-only maximization problem~\eqref{aggregator_problem} always satisfies $D_1^\ast < D_2^\ast$. Furthermore, under the model assumption $\bar D_1+\bar D_2 > D_\mathrm{s}$, when the aggregated energy constraint is binding, i.e., $D_1^\ast + D_2^\ast = D_\mathrm{s}$, except for the boundary case where $D_1^\ast = \bar{D}_1$, an increase in the disparity between consumer capacities $\bar{D}_2-\bar{D}_1$ leads to a decrease in the CNW but an increase in the total consumer utility and social welfare.
\end{theorem}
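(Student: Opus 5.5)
The plan is to read everything off the closed form in Lemma~\ref{lemma:opt_sol} and then do a one-variable computation in the binding, interior case.

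\textbf{Step 1: $D_1^\ast<D_2^\ast$.} I split according to the two cases of Lemma~\ref{lemma:opt_sol}.
\begin{itemize}
\item In case (1), $D_1^\ast=D_1^\dagger\le \frac{\pi-b}{2a}+\frac{\bar D_1}{2}<\frac{\pi-b}{2a}+\frac{\bar D_2}{2}=D_2^\dagger$, because $\bar D_1<\bar D_2$.
\item In case (2), $D_2^\ast=D_{\mathrm s}-D_1^\ast$, so it suffices to show $D_1^\ast<D_{\mathrm s}/2$. There are three sub-cases by which branch of the $\min/\max$ is active.
\begin{itemize}
\item If $D_1^\ast=0$, then $D_2^\ast=D_{\mathrm s}>0$.
\item If $D_1^\ast=\frac{D_{\mathrm s}}{2}+\frac{\bar D_1-\bar D_2}{4}$, this is strictly below $D_{\mathrm s}/2$.
\item If $D_1^\ast=\bar D_1$, then $\bar D_1\le \frac{D_{\mathrm s}}{2}+\frac{\bar D_1-\bar D_2}{4}<\frac{D_{\mathrm s}}{2}$.
\end{itemize}
\end{itemize}

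\textbf{Step 2: closed forms in the binding interior case.} Set $g:=\bar D_2-\bar D_1>0$. In the binding case with $D_1^\ast\neq\bar D_1$, I also need $D_1^\ast>0$ so that the CNW is finite. Then
\[
D_1^\ast=\tfrac{D_{\mathrm s}}{2}-\tfrac{g}{4},\qquad D_2^\ast=\tfrac{D_{\mathrm s}}{2}+\tfrac{g}{4}.
\]
From \eqref{linear_response}, $p_i=aD_i+b-a\bar D_i$, and hence
\[
U_i=p_iD_i-C_i(D_i)=\tfrac{a}{2}D_i^2 .
\]
This simplification is the key observation.

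\textbf{Step 3: CNW and total consumer utility.} The CNW is
\[
W_{\mathrm{CNW}}=2\log\tfrac a2+2\log(D_1^\ast D_2^\ast),\qquad D_1^\ast D_2^\ast=\tfrac{D_{\mathrm s}^2}{4}-\tfrac{g^2}{16}.
\]
This is strictly decreasing in $g$. The total consumer utility is
\[
U=\tfrac a2\bigl(\tfrac{D_{\mathrm s}^2}{2}+\tfrac{g^2}{8}\bigr),
\]
which is strictly increasing in $g$.

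\textbf{Step 4: social welfare.} Since $V$ is linear, $W_{\mathrm{SW}}=\pi D_{\mathrm s}-C_1(D_1^\ast)-C_2(D_2^\ast)$. Expanding the costs,
\[
\textstyle\sum_i C_i=\frac a2\sum_i D_i^{\ast2}+bD_{\mathrm s}-a(\bar D_1D_1^\ast+\bar D_2D_2^\ast).
\]
Writing $\bar D_1=m-g/2$ and $\bar D_2=m+g/2$ gives $\bar D_1D_1^\ast+\bar D_2D_2^\ast=mD_{\mathrm s}+g^2/4$. Therefore
\[
W_{\mathrm{SW}}=(\pi-b+am)D_{\mathrm s}-\tfrac{aD_{\mathrm s}^2}{4}+\tfrac{3a}{16}g^2,
\]
which is increasing in $g$ when the mean capacity $m$ is held fixed.

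\textbf{Main obstacle: how the disparity is varied.} The delicate point is that ``increase in disparity'' must be given a precise parametrization, and the social-welfare claim should be checked under it. I would verify the other natural choice as well, namely $\bar D_1$ fixed and $\bar D_2$ increasing. In that case $m$ also increases, and the extra term $a\,\Delta m\,D_{\mathrm s}>0$ only reinforces the increase. The CNW and $U$ expressions depend only on $g$, so their monotonicity holds under any parametrization.

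Finally, I must note that the monotonicity claims are local to the regime. They apply as long as the binding, interior case persists, i.e.\ $0<\frac{D_{\mathrm s}}{2}-\frac g4<\bar D_1$ and $D_1^\dagger+D_2^\dagger>D_{\mathrm s}$.
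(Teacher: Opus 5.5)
Your proof is correct. It shares the paper's skeleton: the closed form of Lemma~\ref{lemma:opt_sol}, the identity $U_i=\tfrac a2 D_i^2$, and monotonicity in the gap within the binding interior case. The execution differs in two places.

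\textbf{Ordering $D_1^\ast<D_2^\ast$.} In the non-binding case the paper splits into six sub-cases on the sign and size of $c=\tfrac{\pi-b}{2a}$. Your bound $D_1^\dagger\le c+\tfrac{\bar D_1}{2}<c+\tfrac{\bar D_2}{2}=D_2^\dagger$ settles it in one line. The binding case is argued the same way in both.

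\textbf{Comparative statics.} The paper differentiates $\sum_i\log D_i$, $U$ and $W_{\mathrm{SW}}$ with respect to $D_1^\ast$ along $D_1+D_2=D_{\mathrm s}$. It then chains with $\partial D_1^\ast/\partial(\bar D_2-\bar D_1)=-\tfrac14$, and for social welfare adds the direct terms $\partial W_{\mathrm{SW}}/\partial\bar D_2-\partial W_{\mathrm{SW}}/\partial\bar D_1$. You instead write everything in closed form in $(g,m)$, which gains two things.
\begin{itemize}
\item Exactness: along the symmetric direction the derivative of social welfare is $\tfrac{3a}{8}g$. The paper's displayed $\tfrac a2(\bar D_2-\bar D_1)$ does not match this, though the sign agrees.
\item More importantly, it shows that $W_{\mathrm{SW}}$ depends on $m$ as well as $g$, so the social-welfare claim depends on how the gap is widened. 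The paper's direct term moves $\bar D_1$ and $\bar D_2$ oppositely by equal amounts, so it implicitly uses your $m$-fixed direction.
\end{itemize}

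Your caveat is therefore substantive, not pedantic. With $\bar D_2$ fixed and $\bar D_1$ decreasing, your formula gives $dW_{\mathrm{SW}}/dg=a\bigl(\tfrac{3g}{8}-\tfrac{D_{\mathrm s}}{2}\bigr)$, which is negative whenever $g<\tfrac43 D_{\mathrm s}$. For example, $a=1$, $b=5$, $\pi=7$, $\bar D_1=3$, $\bar D_2=4$, $D_{\mathrm s}=5$ is in the binding interior regime ($D_1^\ast=2.25$, $D_1^\dagger+D_2^\dagger=5.5>D_{\mathrm s}$), and social welfare falls as $\bar D_1$ is lowered. So the social-welfare part of the theorem must be read with $m$ or $\bar D_1$ held fixed, which is exactly the version you prove.

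Your extra restriction $D_1^\ast>0$ is also genuinely needed. If $D_1^\ast=0$, then $U=\tfrac a2 D_{\mathrm s}^2$ does not change with the gap. This restriction matches the paper's proof, which likewise treats only $0<D_1^\ast<\bar D_1$.
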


Theorem~\ref{prop:profit_only} shows that the optimal solution to the profit-only problem consistently favors consumer~$2$ with a higher capacity $\bar{D}$. This shows, in practice, that aggregators are likely to concentrate energy collection among a few more flexible consumers, who are typically consistent participants. This may leave limited participation opportunities and profit for less flexible consumers and raise severe fairness concerns.

Moreover, when the maximum aggregated energy $D_\mathrm{s}$ is limited, consumer heterogeneity has a pronounced impact on participation opportunities and system performance. The greater the disparity between the two consumers, as reflected in $\bar{D}_2- \bar{D}_1$, the lower the CNW, despite an increase in total consumer utility. This implies that the gains in utility are concentrated among a few more flexible consumers, while the resulting drop in CNW reflects worsening fairness. In contrast, when the aggregated energy constraint is not binding, each consumer's provided energy is determined only by their characteristics (parameters $\bar{D}$), and their outcomes are independent, making the disparity less consequential. This observation is intuitive, as fairness concerns tend to become salient under resource scarcity.

\subsection{Impossibility of Perfect Fairness Across Criteria}
We now analyze the conditions under which the aggregator seeks to satisfy all fairness criteria simultaneously. The ideal scenario is to achieve $1$-fairness ($\alpha = 1$) simultaneously in terms of energy, price, and utility. However, in Theorem~\ref{thm:impossibility}, we show that perfect fairness is attainable only in limited and impractical scenarios. Even then, it yields zero profit, rendering the solution economically meaningless.

\begin{theorem}[\emph{Impossibility of Perfect Fairness}]
Perfect fairness across energy, price, and utility criteria is achieved only when $p_1 = p_2 \le b-a\bar{D}_2$, which results in zero profit.
\label{thm:impossibility}
\end{theorem}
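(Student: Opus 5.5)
Perfect fairness means $\alpha=1$ for all three criteria at once: $p_1=p_2=:p$, $D_1/\bar D_1=D_2/\bar D_2=:r$ with $D_i=d_i(p)$ from \eqref{linear_response}, and $U_1=U_2$. The plan is to use price fairness to reduce everything to a single common price $p$. Then I check, region by region in $p$, when the energy-ratio equality can hold, and finally when the utility equality can hold.

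\textbf{Step 1 (energy fairness under a common price).} The breakpoints of $d_i$ are $b-a\bar D_i$ (entry) and $b$ (saturation). Since $\bar D_1<\bar D_2$, we have $b-a\bar D_2<b-a\bar D_1<b$, which gives four cases.

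If $p\le b-a\bar D_2$, both $D_i=0$ and $r=0$, so energy fairness holds.

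If $b-a\bar D_2<p\le b-a\bar D_1$, then $D_1=0<D_2$. Energy fairness fails.

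If $b-a\bar D_1<p<b$, then $D_i=\bar D_i-(b-p)/a$, so $D_i/\bar D_i=1-(b-p)/(a\bar D_i)$. This is strictly increasing in $\bar D_i$ because $b-p>0$. Hence the two ratios differ and energy fairness fails.

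If $p\ge b$, both consumers saturate, $D_i=\bar D_i$, and $r=1$. This requires $D_1+D_2=\bar D_1+\bar D_2\le D_{\mathrm s}$, which contradicts the standing assumption $D_{\mathrm s}<\bar D_1+\bar D_2$, so this case is infeasible. As a cross-check, this boundary case was explicitly excluded in the setup precisely because it would trivially be fair. Even ignoring the cap, the utilities there would be $U_i=b\bar D_i-\tfrac12 a\bar D_i^2-(b-a\bar D_i)\bar D_i=\tfrac12 a\bar D_i^2$, which differ, so utility fairness would fail anyway. I would include this remark for robustness.

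\textbf{Step 2 (utility and profit).} In the only surviving case, $p\le b-a\bar D_2$, we have $D_1=D_2=0$. So $U_1=U_2=0$, and utility fairness holds automatically. Profit is $\pi\cdot 0-p\cdot 0=0$.

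Conversely, any $p_1=p_2\le b-a\bar D_2$ satisfies all three criteria, which establishes the "only when" characterization.

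\textbf{Main obstacle.} The delicate point is the interior case $b-a\bar D_1<p<b$, which needs the strict monotonicity of $D_i/\bar D_i$ in $\bar D_i$. It is also the $p\ge b$ case, which must be ruled out by invoking $D_{\mathrm s}<\bar D_1+\bar D_2$ rather than by the fairness conditions. I would handle the latter by citing that assumption directly, and add the unequal-utility computation as a secondary argument.
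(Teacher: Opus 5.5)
Your proof is correct. It differs from the paper's in which energy-fairness metric it actually checks.

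The paper works with the unnormalized gap $d_2(p)-d_1(p)$, consistent with its listing $M_i(p_i)=d_i(p_i)$. This gap equals $\frac{p-b}{a}+\bar D_2>0$ on $(b-a\bar D_2,\,b-a\bar D_1]$ and $\bar D_2-\bar D_1>0$ for every $p>b-a\bar D_1$. The paper therefore concludes in a few lines that only $p\le b-a\bar D_2$ survives, with no appeal to $D_{\mathrm s}$ or to utilities.

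You instead use the ratio definition $D_i/\bar D_i$ from Section~\ref{section22}, which is the one the paper itself uses in Lemma~\ref{lemma:equivalent_model} and throughout the energy-fairness analysis. Under that metric two things change:
\begin{itemize}
\item The interior case $b-a\bar D_1<p<b$ needs your monotonicity observation $D_i/\bar D_i=1-\frac{b-p}{a\bar D_i}$.
\item The saturation case $p\ge b$ genuinely satisfies both energy and price fairness, since both ratios equal $1$. It has to be excluded by something else, and you correctly use the hard constraint $D_1+D_2\le D_{\mathrm s}<\bar D_1+\bar D_2$. This is exactly why the paper's model setup sets aside the case $D_{\mathrm s}=\bar D_1+\bar D_2$.
\end{itemize}
So your route is the more careful one for the ratio-based notion. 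The paper's is shorter, but read with the ratio metric its $p\ge b$ step silently relies on the unnormalized difference. Your backup argument via utility fairness adds something: it shows the impossibility holds under the ratio metric even without the cap assumption.

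One small correction to that backup argument. For $p\ge b$ the utilities are $U_i=(p-b)\bar D_i+\tfrac12 a\bar D_i^2$, which equals $\tfrac12 a\bar D_i^2$ only at $p=b$. Both terms are strictly increasing in $\bar D_i$, so you still get $U_1<U_2$ for every $p\ge b$, and the remark stands.
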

\begin{proof}{Proof.}
We first observe that \emph{1-price fairness} is achieved if-and-only-if there exists a uniform price $p$ such that $p_1 = p_2 = p$. Under a uniform price $p$, \emph{1-energy fairness} is trivially satisfied for $p \le b-a\bar D_2$, since $d_1(p)=d_2(p)=0$. 
For $b-a\bar D_2 < p \le b-a\bar D_1$, we have 
$d_2(p)-d_1(p)=\frac{p-b}{a}+\bar D_2>0$. 
Finally, for $p > b-a\bar D_1$, the difference remains strictly positive, given by 
$d_2(p)-d_1(p)=\bar D_2-\bar D_1>0$. Therefore, \emph{1-energy fairness} is achieved only when $p \le b-a\bar D_2$. When $p \le b-a\bar{D}_2$, both consumers provide zero energy, hence \emph{1-utility fairness} is achieved with $U_1 = U_2 = 0$. Therefore, all fairness criteria can be achieved simultaneously when $p\le b-a\bar{D}_2$, resulting in $D_1(1)=D_2(1)=0$ and zero profit.\hfill\Halmos
\end{proof}

As shown in Theorem~\ref{thm:impossibility}, achieving all fairness criteria simultaneously is theoretically infeasible and economically impractical. Therefore, in the following sections, we examine each fairness criterion separately.

We analyze the impact of fairness by comparing the solution of the fairness-constrained problem to the solution obtained from the profit-only optimization problem~\eqref{aggregator_problem}. Specifically, we examine how the decision variables shift as fairness constraints are introduced, and derive the system performance with all feasible fairness level~$\alpha$ for each criterion. Let $\left(p_1(\alpha), p_2(\alpha)\right)$ denote the optimal prices, and $\left(D_1(\alpha), D_2(\alpha)\right)$ the corresponding optimal provided energy, for a given fairness level~$\alpha$. These solutions enable us to quantify how incorporating each fairness metric affects the CNW, total consumer utility, and social welfare as $\alpha$ varies. Note that incorporating any fairness criterion cannot increase the aggregator’s profit, as fairness constraints reduce the feasible region.

\subsection{Energy Fairness}
Energy fairness affects the system performance measure, depending on the system parameters $(\pi,a,b,\bar{D}_1,\bar{D}_2, D_\mathrm{s})$. The following theorem systematically describes the full spectrum of performance measures across four regimes. 
\begin{theorem}
    [Energy fairness Under $N=2$] \label{thm:energy_fairness}
   Under the energy fairness criterion, four distinct regimes may emerge as $\alpha$ varies from $0$ to $1$, characterized by the directions of change in consumer utility $(U_i)$, CNW $(W_\mathrm{CNW})$, total consumer utility $(U)$, and social welfare $(W_\mathrm{SW})$. Transitions occur only along the arrows shown in the diagram. Depending on the parameters $(\pi,a,b,\bar{D}_1,\bar{D}_2,D_\mathrm{s})$, the system may start in any of the four regimes and may end in Regimes $2$, $3$, or $4$.

\vspace{3mm}
\begin{center}
\begin{tikzpicture}[
    node distance=10mm and 15mm,
    box/.style={
        draw, rounded corners, thick,
        text width=60mm,           
        align=center,
        inner xsep=6pt, inner ysep=3pt
    },
    arr/.style={-{Stealth[length=3.5mm]}, thick, shorten >=-1pt, shorten <=-1pt}
]
\node[box] (R1) {\textbf{Regime 1}\\
$U_1 -,\ U_2\uparrow,\ W_\mathrm{CNW}\uparrow,\ U\uparrow,\ W_\mathrm{SW}\uparrow$};

\node[box, right=of R1] (R2) {\textbf{Regime 2}\\
$U_1\downarrow,\ U_2\uparrow,\ W_\mathrm{CNW}\downarrow,\ U\uparrow,\ W_\mathrm{SW}\uparrow$};

\node[box, below=of R1] (R3) {\textbf{Regime 3}\\
$U_1\downarrow,\ U_2\uparrow,\ W_\mathrm{CNW}\downarrow,\ U\downarrow,\ W_\mathrm{SW}\downarrow,\ $};

\node[box, right=of R3] (R4) {\textbf{Regime 4}\\
$U_1\uparrow,\ U_2\downarrow,\ W_\mathrm{CNW}\uparrow,\ U\downarrow,\ W_\mathrm{SW}\downarrow,\ $};

\draw[arr] (R1) -- (R2);
\draw[arr] (R1) -- (R3);
\end{tikzpicture}
\end{center}
Here, $\uparrow$ (resp. $\downarrow$) indicates an increase (resp. decrease) with respect to $\alpha$. \text{--} denotes remaining constant at a positive value.
\end{theorem}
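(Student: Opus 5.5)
The plan is to reduce everything to the energy variables and trace the optimal path $\alpha\mapsto(D_1(\alpha),D_2(\alpha))$ explicitly. By \eqref{linear_response}, on the participation range the price is $p_i=aD_i+b-a\bar D_i$. Therefore
\[
U_i=p_iD_i-C_i(D_i)=\tfrac{a}{2}D_i^2,
\qquad
\Pi=\sum_{i}\big(\pi-b+a\bar D_i-aD_i\big)D_i .
\]
From these I get three closed forms:
\begin{itemize}
\item $W_\mathrm{CNW}=\sum_i\log(\tfrac a2 D_i^2)$;
\item $U=\tfrac a2(D_1^2+D_2^2)$;
\item $W_\mathrm{SW}=\sum_i(\pi-b+a\bar D_i)D_i-\tfrac a2\sum_iD_i^2$.
\end{itemize}
Each $U_i$ is strictly increasing in $D_i$, so the arrows for $U_i$ are just the directions of $D_i(\alpha)$. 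The fair problem then becomes a strictly concave quadratic program in $(D_1,D_2)$. Its constraints are the boxes $0\le D_i\le\bar D_i$, the quota $D_1+D_2\le D_\mathrm{s}$, and the band $|D_1/\bar D_1-D_2/\bar D_2|\le(1-\alpha)\Delta$, which is linear in $D$.

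\textbf{Step 1: baseline sign and binding.} Using Lemma~\ref{lemma:opt_sol}, I will compute the sign of $r_1^\ast-r_2^\ast$, where $r_i=D_i/\bar D_i$. In case (1), $r_i^\dagger=\tfrac{\pi-b}{2a\bar D_i}+\tfrac12$, so the sign depends on $\operatorname{sgn}(\pi-b)$ and on whether $D_1^\dagger$ is capped at $\bar D_1$. In case (2) it depends on where $D_\mathrm{s}$ lies. The sign of $r_1-r_2$ is preserved for all $\alpha<1$, because the band only shrinks toward $0$. For every $\alpha>0$ the band constraint is active, since the unconstrained optimum violates it and the objective is strictly concave. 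The solution then solves a KKT system with one active equality, $r_1-r_2=\pm(1-\alpha)\Delta$, plus possibly the quota or a box. This makes $D(\alpha)$ piecewise affine in $\alpha$, with pieces determined by which additional constraints are active.

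\textbf{Step 2: matching active-set cases to regimes.}
\begin{itemize}
\item \emph{Regime 1.} Here $r_1>r_2$, $D_1=\bar D_1$, and the quota is slack. The band forces $D_2$ up while $D_1$ stays at $\bar D_1$. So $U_1$ is constant, and $U_2$, $W_\mathrm{CNW}$ and $U$ increase. $W_\mathrm{SW}$ increases because $D_2$ moves from its profit-optimal point toward the welfare-optimal point $D_2=\pi-b+a\bar D_2$.
\item \emph{Regimes 2 and 3.} Here $r_1>r_2$ and the path pushes $D_1$ down and $D_2$ up. This happens either along $D_1+D_2=D_\mathrm{s}$, or in the interior with the band as the only active constraint, where I solve the one-multiplier KKT system directly. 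The derivatives are
\[
\frac{dW_\mathrm{CNW}}{d\alpha}=2\Big(\frac{D_1'}{D_1}+\frac{D_2'}{D_2}\Big),
\qquad
\frac{dU}{d\alpha}=a\big(D_1D_1'+D_2D_2'\big),
\]
and $dW_\mathrm{SW}/d\alpha$ is linear in $(D_1',D_2')$. Along the quota I have $D_1'=-D_2'$. Since $D_1<D_2$ by Theorem~\ref{prop:profit_only}, $U$ increases and $W_\mathrm{CNW}$ decreases, which is Regime 2. In the interior case, the slope ratio $D_2'/D_1'=-\bar D_2/\bar D_1$ (from the band) together with the KKT multiplier fixes the signs of $dU$ and $dW_\mathrm{SW}$. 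I expect this to give Regime 3, with $U$ and $W_\mathrm{SW}$ both decreasing.
\item \emph{Regime 4.} Here $r_1<r_2$, which requires $\pi<b$ or a binding quota. The band forces $D_1$ up and $D_2$ down, giving $U_1\uparrow$, $U_2\downarrow$ and $W_\mathrm{CNW}\uparrow$ (the gap narrows). Plugging into the derivative formulas should give $U\downarrow$ and $W_\mathrm{SW}\downarrow$.
\end{itemize}

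\textbf{Step 3: transitions.} Because the sign of $r_1-r_2$ is invariant in $\alpha$, Regime 4 can never be entered from, or left toward, Regimes 1--3. Within the $r_1>r_2$ family, $D_1$ is nonincreasing and $D_2$ nondecreasing, so constraints can only activate in one order. Regime 1 ends when either $D_2$ grows until $D_1+D_2$ reaches $D_\mathrm{s}$, giving a switch to Regime 2, or the box at $\bar D_1$ is released, giving a switch to Regime 3. A Regime 2 or Regime 3 path cannot return to Regime 1 because $D_1$ never increases again. This yields exactly the arrows $1\to2$ and $1\to3$, and shows that Regime 1 cannot be terminal: at $\alpha=1$ we need $r_2=r_1=1$, which forces $D_2=\bar D_2$, and this is infeasible under $D_\mathrm{s}<\bar D_1+\bar D_2$. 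To show each regime actually occurs as a starting regime, I will exhibit parameter choices for each.

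\textbf{Main obstacle.} I expect the hardest part to be the sign analysis of $dU/d\alpha$ and $dW_\mathrm{SW}/d\alpha$ in the interior-band cases (Regimes 3 and 4). There the direction depends on the KKT multiplier and the ratio $\bar D_2/\bar D_1$, not just on a conserved sum. I would first express the multiplier in closed form from the two stationarity equations. Then I would show that the welfare gradient's projection onto the direction $(\bar D_1,\bar D_2)$ has a fixed sign, using $\pi-b+a\bar D_1>0$ together with $D_i<\pi-b+a\bar D_i$ (prices are below $\pi$).
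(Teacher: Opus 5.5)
Step~3 has a genuine gap, and it cannot be closed as stated. You only track constraints \emph{becoming} active. Once the quota binds, you never check that its multiplier stays positive. On a quota segment with $r_1>r_2$ and $D_1<\bar D_1$, write $\Pi_i:=\pi-b+a\bar D_i-2aD_i$. Stationarity then gives $\lambda=(\bar D_1\Pi_1+\bar D_2\Pi_2)/(\bar D_1+\bar D_2)$. Since $D_1'=-D_2'<0$, this yields $\lambda'=2aD_2'(\bar D_1-\bar D_2)/(\bar D_1+\bar D_2)<0$. So $\lambda$ can reach zero before $\alpha=1$; the quota then goes slack and the path follows the band-only interior segment, which is Regime~3. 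For instance, take $a=1$, $b=5$, $\pi=6.3$, $\bar D_1=1$, $\bar D_2=3$, $D_\mathrm{s}=3.1$ (all standing assumptions hold). Then $D^\ast=(1,2.1)$ and $\Delta=0.3$. For $\alpha\le 2/3$ the optimum is $(1-0.225\alpha,\,2.1+0.225\alpha)$ with $\lambda=0.15-0.225\alpha$; $U$ and $W_\mathrm{SW}$ rise while $W_\mathrm{CNW}$ falls (Regime~2). For $\alpha\ge 2/3$ the optimum is $(1.03-0.27\alpha,\,2.19+0.09\alpha)$. This satisfies KKT with the quota slack, with band $D_1-D_2/3\le 0.3(1-\alpha)$ and band multiplier $0.24+0.54\alpha$. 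Here $dU/d\alpha=-0.081(1-\alpha)<0$, so $U$ and $W_\mathrm{SW}$ fall (Regime~3). Hence $2\to3$ occurs, and the arrow set $\{1\to2,\,1\to3\}$ that your Step~3 claims to yield exactly, and that the statement asserts, is not correct. The paper's proof has the same blind spot. In both quota cases it infers that the quota stays binding from $D_1'+D_2'=0$ and checks only box thresholds. In the sub-case $D_1^\ast<\bar D_1$, quota binding, $2D_\mathrm{s}>\bar D_1+\bar D_2$, it also labels the segment Regime~3 through a sign slip in $U'$. Your conclusion $U\uparrow$ along the quota is the correct one there.

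Separately, your interior band-only computation uses the wrong direction. The band constraint does not fix $D_2'/D_1'$; stationarity does. The path is $D(\alpha)=D^{\mathrm{uc}}+\frac{\eta(\alpha)}{2a}(-\bar D_2,\bar D_1)$, where $D_i^{\mathrm{uc}}=\frac{\pi-b}{2a}+\frac{\bar D_i}{2}$ and $\eta\ge0$ is the increasing multiplier of $D_1\bar D_2-D_2\bar D_1\le T(\alpha)$. So $D_2'/D_1'=-\bar D_1/\bar D_2$, not $-\bar D_2/\bar D_1$. The relevant direction is the band's normal $(-\bar D_2,\bar D_1)$, not $(\bar D_1,\bar D_2)$, which is tangent to the level sets of $r_1-r_2$. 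With your ratio you would get $dU/d\alpha\propto\bar D_2D_2-\bar D_1D_1>0$ and $D_1'+D_2'>0$, the opposite of Regime~3. With the correct direction:
\begin{itemize}
\item $dU/d\alpha=\frac{\eta'}{2}(\bar D_1D_2-\bar D_2D_1)<0$, because $r_1>r_2$.
\item $W_\mathrm{SW}$ falls, because $\Pi$ is nonincreasing in $\alpha$ (the feasible sets are nested).
\item $D_1'+D_2'=\frac{\eta'}{2a}(\bar D_1-\bar D_2)<0$. This is what actually rules out $3\to2$; your ``constraints activate in one order'' argument needs it.
\end{itemize}
On quota segments you also still owe the welfare sign. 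There $dW_\mathrm{SW}/d\alpha=aD_1'(D_\mathrm{s}-2D_1+\bar D_1-\bar D_2)$. The bracket is at most $(\bar D_2-\bar D_1)(D_\mathrm{s}-\bar D_1-\bar D_2)/(\bar D_1+\bar D_2)<0$, because $r_1\ge r_2$ on the quota means $D_1\ge \bar D_1D_\mathrm{s}/(\bar D_1+\bar D_2)$. The remaining pieces are sound: the sign invariance of $r_1-r_2$ isolating Regime~4, and your short argument that Regime~1 cannot be terminal.
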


Theorem~\ref{thm:energy_fairness} reveals the full system spectrum and transition relationships when $\alpha$ varies under different parameter settings, which can be categorized into four possible regimes. Figure~\ref{fig:demand_eq} shows an example of the regimes and their transitions. In Regimes~$1$, $2$, and $3$, the outcomes favor the more flexible consumer (consumer $2$) while disadvantaging--or at least not benefiting--the less flexible consumer (consumer $1$) in terms of utility. This occurs because the more flexible consumer possesses a higher capacity, $\bar{D}$, which necessitates a greater provided energy to balance the energy ratio required to satisfy the energy fairness condition.

Among all the regimes, Regime~$1$ exhibits the best system performance across all measures; however, it does not persist--with a higher $\alpha$, the system inevitably transitions to Regime~$2$ or $3$, as shown in Figure~\ref{fig:demand_eq}. Among them, Regime $3$ is the worst regime, as it benefits only the more flexible consumer’s utility while worsening all other performance measures. Therefore, when parameter settings place the system in Regime~$1$, an appropriate choice of $\alpha$ is necessary to prevent entering Regime~$3$. Notably, in Regime~$1$, the less flexible consumer provides all the energy yet receives no benefit from the energy fairness criterion. This raises concerns about whether these regimes are truly desirable, which motivates the exploration of other fairness metrics. In Regime~$2$, the more flexible consumer still gains utility, but unlike in Regime~$3$, this gain dominates the utility loss of the less flexible consumer and the aggregator's profit reduction, resulting in an overall increase in total consumer utility and social welfare. As Regimes~$1$ and~$2$ require $D_1^* = \bar D_1$, they are less likely to arise as starting regimes. Regime~$4$ mirrors Regime~$2$ in the opposite direction: the less flexible consumer’s utility increases, but this gain cannot outweigh the utility loss of the more flexible consumer, because the latter's utility is more strongly affected when the fairness level changes. Since $D_1^\ast < D_2^\ast$ from Theorem~\ref{prop:profit_only}, CNW increases in Regimes~$1$ and $4$, where the less flexible consumer’s utility is non-decreasing.

\begin{figure}[htbp]
\FIGURE{
    \includegraphics[width=\textwidth]{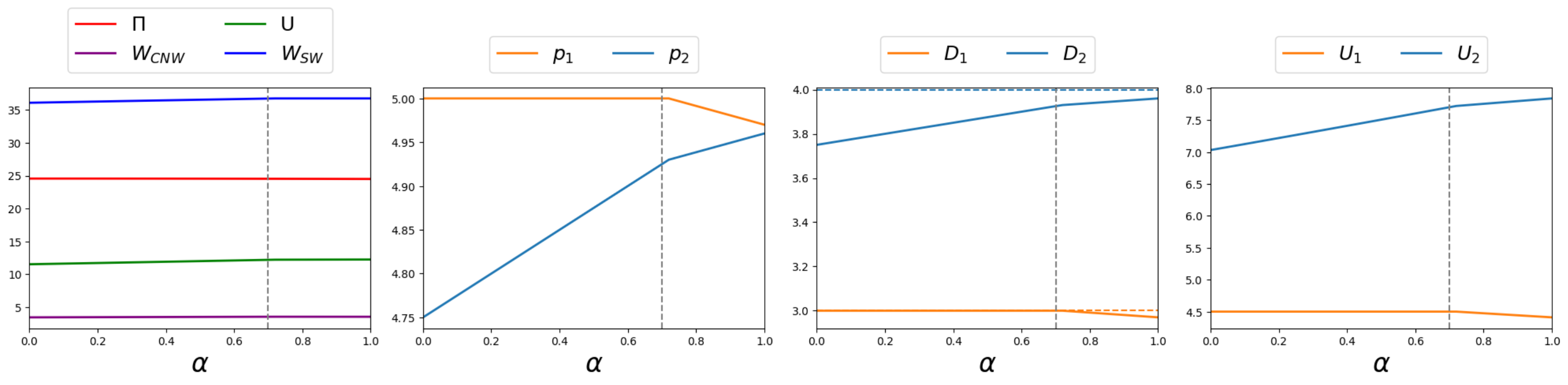}
}
{Energy fairness. \label{fig:demand_eq}
\vspace{.3cm}}
{The gray dashed lines distinguish the regimes (Regime~$1$ and $2$), shown sequentially from left to right. Parameters are set to $a = 1$, $b = 5$, $\pi = 8.5$, $\bar D_1 = 3$, $\bar D_2 = 4$, and $D_{\mathrm{s}} = 6.93$.
}
\end{figure}

\subsection{Price Fairness}
Similar to energy fairness, the implications of price fairness also critically depend on the system parameters. Theorem~\ref{thm:price_fairness} systematically characterizes the full spectrum and the implications of the fairness criterion across three distinct regimes. Transitions between regimes occur under specific conditions.

\begin{theorem}[Price Fairness Under $N=2$]
Under the price fairness criterion, three distinct regimes may emerge as $\alpha$ varies from $0$ to $1$, characterized by the directions of change in consumer utility $(U_i)$, CNW $(W_{\mathrm{CNW}})$, total consumer utility $(U)$, and social welfare $(W_{\mathrm{SW}})$. Transitions occur only along the arrows shown in the diagram. Depending on the parameters $(\pi,a,b,\bar{D}_1,\bar{D}_2,D_\mathrm{s})$, the initial and terminal regimes may each be any of the three regimes.
\begin{center}
\begin{tikzpicture}[
    node distance=15mm and 7mm,
    box/.style={
        draw, rounded corners, thick,
        text width=42mm,           
        align=center,
        inner xsep=6pt, inner ysep=3pt
    },
    arr/.style={-{Stealth[length=3.5mm]}, thick, shorten >=-1pt, shorten <=-1pt}
]
\node[box] (R1) {\textbf{Regime 1}\\
$U_1-,\ U_2\uparrow,\ W_{\mathrm{CNW}}\uparrow ,$ \\ $U\uparrow,\ W_{\mathrm{SW}}\uparrow$};

\node[box, right=of R1] (R2) {\textbf{Regime 2}\\
$U_1\downarrow,\ U_2\uparrow,\ W_{\mathrm{CNW}}\downarrow,$\\ $U\uparrow,\ W_{\mathrm{SW}}\uparrow$};

\node[box, right=of R2, text width=54mm] (R3) {\textbf{Regime 3}\\
$U_1=0,\ U_2-,\ W_{\mathrm{CNW}}=-\infty,$\\ $U-,\ W_{\mathrm{SW}}-$};

\draw[arr] (R1) -- (R2);
\draw[arr] (R2) -- (R3);
\end{tikzpicture}
\end{center}
Here, $\uparrow$ (resp. $\downarrow$) indicates an increase (resp. decrease) with respect to $\alpha$, \text{--} denotes remaining constant at a positive value, and $0$ (resp. $-\infty$) indicates remaining constant at zero (resp. negative infinity).
\label{thm:price_fairness}
\end{theorem}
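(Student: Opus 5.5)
The plan is to reduce the price-fair problem to a one-dimensional optimization along the binding fairness constraint, and then track how the optimizer moves as $\alpha$ increases.

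\emph{Baseline disparity.} First, I will compute the baseline disparity $\Delta$ from Lemma~\ref{lemma:opt_sol}. Inverting \eqref{linear_response} gives $p_i^\ast = aD_i^\ast + b - a\bar D_i$ whenever $D_i^\ast$ is interior.
\begin{itemize}
\item In case (1), $p_i^\ast = (\pi+b-a\bar D_i)/2$ when $D_1^\ast<\bar D_1$, while $p_1^\ast=b$ when $D_1^\ast=\bar D_1$.
\item In case (2), $p_1^\ast-p_2^\ast = a(\bar D_2-\bar D_1)/2$.
\end{itemize}
So in every case I expect $p_1^\ast>p_2^\ast$: the less flexible consumer is paid more per unit. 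Hence $\Delta=p_1^\ast-p_2^\ast$, and the price-fairness constraint that can bind is $p_1-p_2\le(1-\alpha)\Delta$.

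\emph{Reduction to one variable.} Profit is separately concave in each $p_i$ on the participation region, and the unconstrained optimum violates the tightened constraint. From this I will argue that the constraint binds for $\alpha>0$ as long as consumer~1 participates. Substituting $p_1=p_2+(1-\alpha)\Delta$ leaves a one-variable piecewise-quadratic problem in $p_2$, together with the quota $D_1+D_2\le D_\mathrm{s}$. I will solve it by KKT, splitting on two things: whether the quota binds, and which piece of the response \eqref{linear_response} consumer~1 is on. Implicit differentiation in $\alpha$ should show that $p_1(\alpha)$ is nonincreasing and $p_2(\alpha)$ is nondecreasing. This monotonicity yields the regime sequence:
\begin{itemize}
\item \textbf{Regime~1:} consumer~1 stays saturated at $D_1=\bar D_1$ with $p_1=b$, so $U_1$ is constant, while $p_2$ and $D_2$ rise.
\item \textbf{Regime~2:} $p_1$ falls below $b$, so $D_1$ is interior and $U_1$ decreases, while $U_2$ still increases.
\item \textbf{Regime~3:} $p_1\le b-a\bar D_1$, so $D_1=0$. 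The constraint can then be met by lowering $p_1$ at no cost, so $p_2$ sits at consumer~2's stand-alone optimum and every quantity is constant, with $U_1=0$ and $W_{\mathrm{CNW}}=-\infty$.
\end{itemize}
Because $p_1$ is monotone, consumer~1 can only move from saturated to interior to excluded. This gives exactly the arrows $1\to2\to3$, and any start or end point is possible depending on parameters.

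\emph{Sign of each performance measure.} Write $W_\mathrm{SW}=\pi(D_1+D_2)-C_1(D_1)-C_2(D_2)$.
\begin{itemize}
\item If the quota binds, $dD_1=-dD_2$ and $C_i'(D_i)=p_i$. Then $dW_\mathrm{SW}=(p_1-p_2)\,dD_2\cdot(-1)\cdot(-1)$, so $dW_\mathrm{SW}=(p_1-p_2)\,dD_2>0$, because $p_1>p_2$ and $D_2$ rises.
\item If the quota does not bind, $dW_\mathrm{SW}=(\pi-p_1)dD_1+(\pi-p_2)dD_2$. Here I use $\pi>p_2$ and $dD_1\le0$, and I expect the term $(\pi-p_2)dD_2$ to dominate via the explicit formulas.
\end{itemize}
For $U$, I will differentiate $U_i=\tfrac a2 D_i^2$. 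This form holds on interior pieces since $p_i-C_i'(0)=aD_i$. The key comparison is $D_2\,dD_2$ against $D_1\,|dD_1|$, and it uses $D_1<D_2$ from Theorem~\ref{prop:profit_only}.
\begin{itemize}
\item For $W_\mathrm{CNW}$ in Regime~1, only $U_2$ changes, so it increases.
\item In Regime~2, $d\log U_i=2\,dD_i/D_i$, and the loss $|dD_1|/D_1$ must be shown to exceed $dD_2/D_2$. I will verify this from the explicit slopes $dp_i/d\alpha$.
\end{itemize}

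\emph{Main obstacle.} I expect the hardest step to be the entry into Regime~3 together with the monotonicity claim. On the piece where consumer~1 is excluded, the reduced problem is not concave: the kink at $p_1=b-a\bar D_1$ makes the profit as a function of $p_2$ only piecewise concave. As a result, excluding consumer~1 may become optimal through a jump, not continuously. I will handle this by directly comparing the best interior-$D_1$ profit with the stand-alone profit $\max_{p_2}(\pi-p_2)d_2(p_2)$ and showing that their difference is monotone in $\alpha$. That ensures a single threshold, after which the outcome never returns to Regime~2.
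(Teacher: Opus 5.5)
Your plan follows the paper's proof in substance. You bind the fairness constraint, split on consumer~1's piece of \eqref{linear_response} and on the quota, read signs off the explicit linear paths, and decide exclusion by a profit comparison (the paper's $\alpha_3=\alpha_1/\sqrt{2}$). One simplification: on the interior piece $D_1+D_2$ is constant even when the quota is slack, because summing the two stationarity conditions eliminates the fairness multiplier. So $dW_\mathrm{SW}=(p_1-p_2)\,dD_2>0$ in both of your cases, and no domination argument is needed.

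\textbf{The gap.} The step ``$p_1$ is monotone, so consumer~1 goes saturated $\to$ interior $\to$ excluded, which gives exactly $1\to2\to3$'' does not go through. You note yourself that exclusion can occur by a jump. A monotone $p_1$ can then jump from $b$ directly to a value $\le b-a\bar D_1$ and skip Regime~2, and this does happen.

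Take $(a,b,\pi,\bar D_1,\bar D_2,D_\mathrm{s})=(1,40,50,1,30,30)$. Then $D^\ast=(1,20)$, $p^\ast=(40,30)$, $\Delta=10$ and $\Pi^\ast=410$.
\begin{itemize}
\item On the participation region the problem is concave. For $\alpha\le 0.45$ its KKT point is $p_1=40$, $p_2=30+10\alpha$, with multiplier $9-20\alpha>0$ on $D_1\le\bar D_1$ and the quota slack. Its profit is $410-100\alpha^2$.
\item Excluding consumer~1 via $p_1=p_2=30$ is perfectly price fair and earns $400$.
\end{itemize}
So at $\alpha=1/\sqrt{10}$ the optimum jumps from Regime~1 straight to Regime~3 and stays there; Regime~2 never occurs.

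The paper's proof has the same blind spot. In Regime~1 it only tracks when that multiplier vanishes, and it compares the exclusion profit only with the Regime~2 path. To close the step you must also compare the Regime~1 profit $\Pi^\ast-\alpha^2\Delta^2/a$ with the exclusion profit, which in the slack case is $\Pi^\ast-(\pi-b)\bar D_1$. That comparison shows the diagram needs a $1\to3$ arrow, or else a parameter restriction.

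\textbf{Further points.}
\begin{itemize}
\item Your claim that every terminal regime is possible fails for Regime~1. $D_1=\bar D_1$ forces $p_1\ge b$, hence $p_2\ge b-(1-\alpha)\Delta$ and $D_1+D_2\ge\bar D_1+\bar D_2-(1-\alpha)\Delta/a>D_\mathrm{s}$ for $\alpha$ near $1$. The paper's contrary conclusion comes from writing $D_2^\ast$ as $\bar D_2+\frac{\pi-b}{2a}$ in its $\alpha_5$.
\item $p_2$ is not nondecreasing: it drops back to consumer~2's stand-alone level at the exclusion jump. You do not need this monotonicity, so drop the claim.
\item In Regime~3, lowering $p_1$ only handles $p_1-p_2\le(1-\alpha)\Delta$. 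For constancy up to $\alpha=1$ you must also check that consumer~2's stand-alone price is at most $b-a\bar D_1$, so that $p_2-p_1\le(1-\alpha)\Delta$ holds too. The paper verifies this.
\end{itemize}
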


Theorem~\ref{thm:price_fairness} characterizes the outcomes of each regime, and Figure~\ref{fig:price_eq} illustrates the corresponding cases. Overall, price fairness does not harm the more flexible consumer (consumer~$2$). Rather, it either benefits or leaves them unaffected. In contrast, it consistently disadvantages the less flexible consumer (consumer~$1$). As the system transitions across regimes when $\alpha$ gradually increases, the outcomes for consumer~$1$ progressively worsen or remain unchanged.

More specifically, in Regime~$1$, price fairness benefits only consumer~$2$ (the more flexible consumer) by increasing their incentive price. In Regime~$2$, as the price fairness constraint becomes tighter due to a higher $\alpha$, consumer~$2$’s price--and consequently their provided energy and utility--rises further. However, this improvement comes at the expense of consumer~$1$, whose price, provided energy, and utility all decline. Finally, in Regime~$3$, the system effectively excludes consumer~$1$ by further reducing their price and discouraging participation, resulting in zero utility gain for consumer~$1$. In conclusion, these results indicate that enforcing price fairness systematically disadvantages the less flexible consumer, thereby failing to achieve genuine fairness.

\begin{figure}[htbp]
\FIGURE{
    \includegraphics[width=\textwidth]{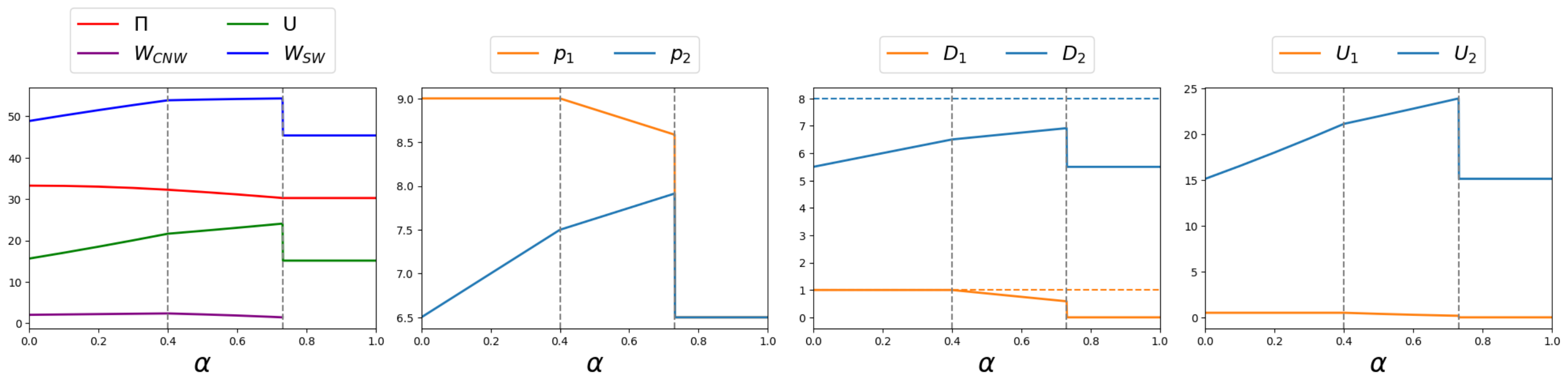}
}
{Price fairness \label{fig:price_eq}
\vspace{.3cm}}
{The gray dashed lines distinguish the three regimes (Regime~$1$, $2$, and $3$), shown sequentially from left to right. 
If $W_{\mathrm{CNW}}$ does not appear in the plot, it indicates that it attains $-\infty$. 
Parameters are set to $a = 1$, $b = 9$, $\pi = 12$, $\bar D_1 = 1$, $\bar D_2 = 8$, and $D_{\mathrm{s}} = 8$.
}
\end{figure}

\subsection{Utility Fairness}
Lastly, Theorem~\ref{thm:utility_fairness} characterizes the possible regimes under utility fairness across $\alpha$ and describes how these regimes transition across $\alpha$.

\begin{theorem}[Utility Fairness Under $N=2$]
Under the utility fairness criterion, four distinct regimes may emerge as $\alpha$ varies from $0$ to $1$, characterized by the directions of change in consumer utility $(U_i)$, CNW $(W_{\mathrm{CNW}})$, total consumer utility $(U)$, and social welfare $(W_{\mathrm{SW}})$. Transitions occur only along the arrows shown in the diagram. Depending on the parameters $(\pi,a,b,\bar{D}_1,\bar{D}_2,D_\mathrm{s})$, the system may start in Regime $1$, $3$, or $4$, and may end in Regime $1$, $2$, or $4$.
~\\
\begin{center}
\begin{tikzpicture}[
    node distance=10mm and 15mm,
    box/.style={
        draw, rounded corners, thick,
        text width=60mm,           
        align=center,
        inner xsep=6pt, inner ysep=3pt
    },
    arr/.style={-{Stealth[length=3.5mm]}, thick, shorten >=-1pt, shorten <=-1pt}
]
\node[box] (R1) {\textbf{Regime 1}\\
$U_1\uparrow,\ U_2\downarrow,\ W_{\mathrm{CNW}}\uparrow,\ U\downarrow,\ W_{\mathrm{SW}}\downarrow$};

\node[box, right=of R1] (R2) {\textbf{Regime 2}\\
$U_1\uparrow,\ U_2\downarrow,\ W_{\mathrm{CNW}}\uparrow,\  U\uparrow,\ W_{\mathrm{SW}}\downarrow$};

\node[box, below=of R1] (R3) {\textbf{Regime 3}\\
$U_1-,\ U_2\downarrow,\ W_{\mathrm{CNW}}\downarrow,\  U\downarrow,\ W_{\mathrm{SW}}\downarrow$};

\node[box, right=of R3] (R4) {\textbf{Regime 4}\\
$U_1\uparrow,\ U_2-,\ W_{\mathrm{CNW}}\uparrow,\ U\uparrow,\  W_{\mathrm{SW}}-$};

\draw[arr] (R1) -- (R2);
\draw[arr] (R1) -- (R3);
\draw[arr] 
  ([xshift=-1mm,yshift=0mm]R1.south east)
  -- ([xshift=+1mm,yshift=0mm]R4.north west);
\draw[arr] 
  ([xshift=1mm,yshift=0mm]R2.south west)
  -- ([xshift=-1mm,yshift=0mm]R3.north east);
\draw[arr] (R3) -- (R4);
\draw[arr] (R2) -- (R4);
\end{tikzpicture}
\end{center}
Here, $\uparrow$ (resp. $\downarrow$) indicates an increase (resp. decrease) with respect to $\alpha$ and \text{--} denotes remaining constant at a positive value.
\label{thm:utility_fairness}
\end{theorem}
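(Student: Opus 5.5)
The plan is to work entirely in energy coordinates and reduce the fairness-constrained problem to a two-variable program. From \eqref{linear_response}, any interior response satisfies $p_i = aD_i + b - a\bar D_i$. Hence
\[
U_i = p_iD_i - C_i(D_i) = \tfrac{a}{2}D_i^2,
\qquad
\Pi = \sum_{i}\bigl(\pi - b + a\bar D_i - aD_i\bigr)D_i ,
\]
and $W_{\mathrm{SW}} = \sum_i \bigl(\pi D_i - C_i(D_i)\bigr)$. Since $U_i$ is increasing in $D_i$, and Theorem~\ref{prop:profit_only} gives $D_1^\ast<D_2^\ast$, the utility fairness constraint becomes
\[
D_2^2 - D_1^2 \le c(\alpha) := (1-\alpha)\bigl(D_2^{\ast2}-D_1^{\ast2}\bigr),
\]
together with $0\le D_i\le \bar D_i$ and $D_1+D_2\le D_{\mathrm s}$. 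I will first check that at the optimum we always have $D_1(\alpha)\le D_2(\alpha)$, so the absolute value can be dropped. Then I show that for $\alpha>0$ the fairness constraint binds: relaxing it would return the profit-only point, which violates it. The analysis therefore reduces to maximizing a separable concave quadratic along the hyperbola $D_2^2-D_1^2=c(\alpha)$, subject to the box and quota constraints.

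Next I enumerate the active sets and solve the KKT conditions for each, obtaining $D_i(\alpha)$ piecewise.

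\begin{enumerate}
\item[(i)] Only the fairness constraint is active. Parametrize by $D_1$, with $D_2=\sqrt{D_1^2+c}$. The stationarity condition gives a smooth path along which $D_1$ increases and $D_2$ decreases as $c$ falls. Then $U_1\uparrow$, $U_2\downarrow$, and $W_{\mathrm{CNW}}\uparrow$, because $\log U_1+\log U_2=\log(a^2D_1^2D_2^2/4)$ and $D_1D_2$ increases along the path when $D_1<D_2$.
\item[(ii)] Fairness is active and $D_1=\bar D_1$ is capped. Then $U_1$ is constant and only $D_2$ falls. This gives Regime~$3$ (CNW$\downarrow$, $U\downarrow$, SW$\downarrow$).
\item[(iii)] Fairness is active and $D_2$ is frozen at a level where its marginal profit vanishes. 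Only $D_1$ rises, so the quota cannot be binding. This gives Regime~$4$ (SW constant, $U\uparrow$).
\item[(iv)] Fairness and the quota are both active. The point is then pinned down by the two equations $D_1+D_2=D_{\mathrm s}$ and $D_2^2-D_1^2=c$, giving $D_2-D_1=c/D_{\mathrm s}$, i.e.\ explicit linear motion in $\alpha$.
\end{enumerate}

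For each branch I will sign the derivatives with respect to $\alpha$ of $U=\tfrac a2(D_1^2+D_2^2)$ and of $W_{\mathrm{SW}}$. For SW the key observation is that moving mass from consumer~$2$ to the higher-cost consumer~$1$ lowers welfare whenever total energy does not increase.

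The distinction between Regimes~$1$ and~$2$ (same arrows, except $U$) comes from the sign of $D_1\dot D_1 + D_2\dot D_2$. On branch (iv) this equals $\dot c/2\cdot(\ldots)$ plus a term from $\dot D_1+\dot D_2$. I expect it to change sign at most once, from negative to positive, as $D_1$ approaches $D_2$. This is what produces the $1\to2$ transition, and it explains why Regime~$2$ never starts at $\alpha=0$.

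Finally, I will determine the admissible starting and ending regimes and the transition arrows. Starting regimes follow from which constraints are active at $\alpha=0$, using Lemma~\ref{lemma:opt_sol}: the cases are the quota binding or not, and $D_1^\ast=\bar D_1$ or not. For transitions, as $c(\alpha)\downarrow 0$, the capacity constraint $D_1\le\bar D_1$ can only become active, never inactive, because $D_1$ is monotonically increasing on the other branches. The quota can only switch off, because total energy weakly decreases in Regimes~$1$ to~$3$. These monotonicity facts rule out reverse arrows and give the graph in the statement.

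I expect the main obstacle to be the nonconvexity of the constraint set $\{D_2^2-D_1^2\le c\}$. This complicates uniqueness of the KKT point and the claim that branches are traversed monotonically. I would handle it by substituting $s=D_1+D_2$ and $t=D_2-D_1$, so that the constraint reads $st\le c$, and then arguing branch by branch that the one-dimensional reduced objective is concave along the binding curve. A second delicate point is the sign change of $dU/d\alpha$ needed for Regime~$1\to2$ and $2\to3$. I would verify it by explicit computation on branch (iv), where everything is closed-form.
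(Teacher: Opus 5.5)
Your reduction to energy coordinates with $U_i=\tfrac a2D_i^2$ is not valid under utility fairness, and the proof breaks at this step. Lemma~\ref{lemma:equivalent_model} covers only the profit-only and energy-fairness problems, where prices above $b$ are dominated. Under utility fairness they are not dominated. Once $D_1=\bar D_1$, the aggregator can pay $p_1=b+\Delta p_1$ with $\Delta p_1\ge 0$. Then $U_1=p_1\bar D_1-C_1(\bar D_1)$ rises one-for-one with the profit given up, while no energy changes. This pure transfer is the mechanism behind the paper's Regime~4: $D_1=\bar D_1$, $D_2$ frozen at $(\pi-b+a\bar D_2)/(3a)$, $U_1\uparrow$, with $U_2$ and $W_{\mathrm{SW}}$ constant.

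Your model cannot produce this regime. With $U_i=\tfrac a2D_i^2$, $U_1\uparrow$ forces $\dot D_1>0$, which changes $W_{\mathrm{SW}}$ at rate $(\pi-C_1'(D_1))\dot D_1$; this is nonzero except at isolated points. Your branch (iii) also does not exist. With the quota slack, as you assume, and the fairness multiplier $\eta>0$, stationarity gives $\partial\Pi/\partial D_2=2\eta D_2>0$. So $D_2$ is a function of $\eta$ alone and can stay frozen only if $\eta$ does. But then $D_1$ is frozen too, which is incompatible with the binding constraint as $\alpha$ varies.

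Consequently, in your model, once $D_1$ reaches $\bar D_1$ the gap can only be closed by pushing $D_2$ down toward $\bar D_1$. Regime~3 would then persist to $\alpha=1$ and be terminal, and Regime~4 could never be initial, which contradicts the statement. The missing idea is to keep $\Delta p_1$ as a decision variable, as the paper does. You eliminate it through the binding fairness equality and minimize the resulting strictly convex quadratic in $\Delta D_2$ subject to $\Delta p_1\ge 0$. The value $\alpha_1$ at which this sign constraint stops binding is the $3\to 4$ switch.

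The same kink at $p_1=b$ makes the true problem nonconvex in a way your active-set continuation argument cannot see. The arrows $1\to4$ and $2\to4$ come from a profit comparison between the interior branch $\Pi^{(1,2)}$ and the premium branch $\Pi^{(4)}$, not from a constraint becoming active. In the paper's instance $(a,b,\pi,\bar D_1,\bar D_2,D_{\mathrm s})=(1,11,9.5,2,10,10)$ one has $\eta(\alpha_4)\approx 0.851<\eta(\alpha_3)=0.875$. So the switch occurs while $D_1(\alpha)<\bar D_1$, and the optimal energies jump.

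There are two smaller errors.
\begin{itemize}
\item On branch (iv) the quota binds, so $\dot D_1+\dot D_2=0$ and $D_1\dot D_1+D_2\dot D_2=\dot D_2(D_2-D_1)<0$ throughout. Hence $U$ never changes sign there, and the branch is pure Regime~1 (it can be terminal). The $1\to2$ sign change happens instead on the quota-slack branch (i), where $D_1=D_1^\ast/(1-\eta)$ and $D_2=D_2^\ast/(1+\eta)$. It occurs at the $\eta$ solving $D_1^{\ast2}/(1-\eta)^3=D_2^{\ast2}/(1+\eta)^3$.
\item $D_1+D_2$ is not monotone on branch (i): it first falls and then returns to $D_1^\ast+D_2^\ast$ at $\alpha=1$. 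So your claim that the quota can only switch off needs this endpoint argument, not monotonicity. Moreover, the quota can be binding throughout an initial Regime~4, contrary to your branch (iii).
\end{itemize}
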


Theorem~\ref{thm:utility_fairness} shows that there are four possible regimes and clearly characterizes the transitions as $\alpha$ varies, and Figure~\ref{fig:utility_eq} illustrates an example for the corresponding regimes. In Theorem~\ref{thm:utility_fairness}, all cases either benefit (or at least do not harm) more flexible consumer (consumer~$1$) and worsen (or at least do not harm) less flexible consumer (consumer~$2$) in terms of utility. This occurs because consumer~$2$ initially has strictly higher utility (as shown in Figure~\ref{fig:utility_eq}), and the utility fairness constraint effectively transfers utility from the higher-utility consumer to the lower-utility consumer. This is consistent with the normative principle that, when some loss of efficiency is unavoidable, fairness considerations prioritize improving the welfare of the less flexible consumer.

In Regime~$1$, the aggregator increases the utility of consumer~$2$ while decreasing that of consumer~$1$, leading consumer~$1$ to provide more energy and consumer~$2$ to provide less energy. However, the utility loss of consumer~$2$ dominates the utility gain of consumer~$1$, reducing total consumer utility. In Regime~$2$, the same directional pattern persists, but the gain to consumer~$1$ exceeds the loss to consumer~$2$, so both total consumer utility and CNW increase. In Regime~$3$, consumer~$1$ hits its capacity, so the aggregator keeps consumer~$1$'s utility unchanged and reduces consumer~$2$'s utility to satisfy utility fairness, which decreases both total consumer utility and CNW. By contrast, in Regime~$4$, even though consumer~$1$ remains at its capacity boundary, the aggregator increases consumer~$1$'s utility, inducing an increase in total consumer utility and CNW.
Note that Regime~$3$ and Regime~$4$ may appear beneficial for consumer~$1$, but they require consumer~$1$ to provide all of its capacity. This is similar to the energy fairness criterion. However, unlike energy fairness--under which consumer~$1$ may increase provided energy without gaining utility--utility fairness can translate this increased provided energy into an actual utility improvement for consumer~$1$.

\begin{figure}[htbp]
\FIGURE{
    \includegraphics[width=\textwidth]{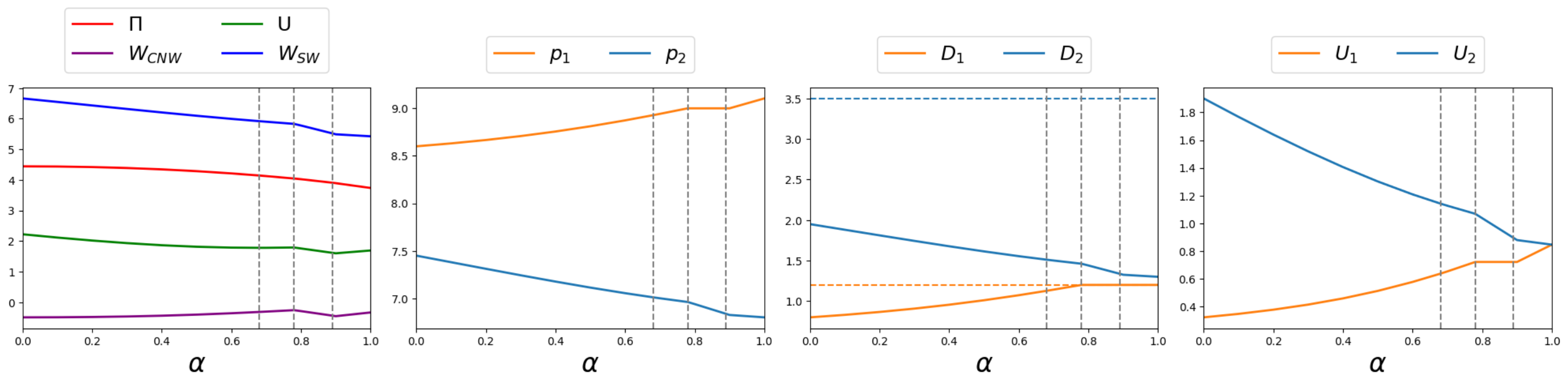}
}
{Utility Fairness. \label{fig:utility_eq}
\vspace{.3cm}}
{The gray dashed lines distinguish the four regimes (Regime~$1$, $2$, $3$, and $4$), shown sequentially from left to right. Parameters are set to $a = 1$, $b = 9$, $\pi = 9.4$, $\bar D_1 = 1.2$, $\bar D_2 = 3.5$, and $D_{\mathrm{s}} =4.5$.
}
\end{figure}

\section{Numerical Analysis with Multiple Consumers}
We extend our analysis to environments with more than two consumers. The primary objective of this section is to investigate whether additional phenomena arise in the three-consumer setting that do not appear when only two consumers are present. We conduct this numerical analysis using both an off-the-shelf optimization solver (Pyomo) and grid search, which provides cross-validation of the results. Further implementation details are provided in Appendix~\ref{appendix:experimental_details}.

\paragraph{Energy fairness.}
Figure~\ref{fig:demand_eq3} shows the results under the energy fairness constraint. The system exhibits the same regime transition from Regime~$1$ to Regime~$2$ when $\alpha$ increases as in the two-consumer case. Specifically, in Regime~$1$ (i.e., for $\alpha < 0.68$), only the most flexible consumer (consumer~$3$ with the largest capacity) increases its provided energy, while the provided energy of the other consumers remains unchanged. Beyond the threshold $\alpha = 0.68$, the system transitions to Regime~$2$, in which consumer~$3$ continues to increase provided energy, accompanied by provided energy reductions from the remaining consumers. This qualitative behavior mirrors the two-consumer setting, with the only difference being that multiple consumers now jointly reduce the provided energy. 

As a result, the three-consumer system can be interpreted as an effective two-consumer system by aggregating consumers~$1$ and~$2$, which share the same changing direction for provided energy and collectively offset the increasing provided energy from consumer~$3$. This equivalence implies that although more consumers may introduce more Regimes due to different individual energy consumption patterns, the energy fairness results derived in the two-consumer setting can be extended to multi-consumer systems with the same system performance change. Individual consumers may differ in the amount of energy they provide, but consumers whose provided energy changes in the same direction can be appropriately grouped, and the aggregate change in provided energy at the group level follows the same pattern as in the two-consumer case.

\begin{figure}[htbp]
\FIGURE{
    \includegraphics[width=\textwidth]{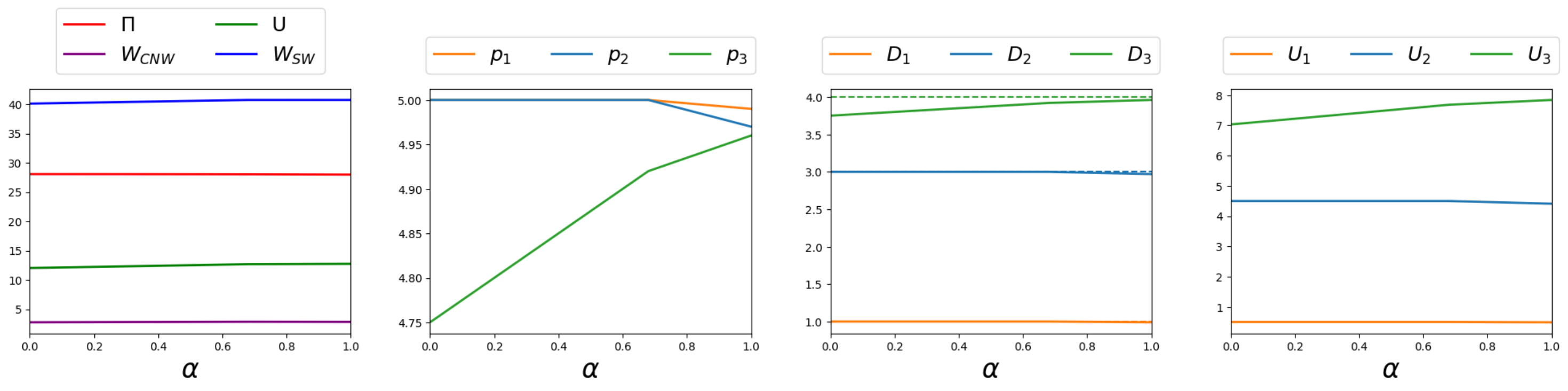}
}
{Energy fairness \label{fig:demand_eq3}
\vspace{.3cm}}
{Parameters are set to $a = 1$, $b = 5$, $\pi = 8.5$, $\bar D_1 = 1$, $\bar D_2 = 3$, $\bar D_3 = 4$, and $D_{\mathrm{s}} = 7.92$.}
\end{figure}

\paragraph{Price fairness.}
Figure~\ref{fig:price_eq3} presents the results under the price fairness constraint. A noteworthy observation emerges when we focus on the least flexible consumer (consumer~$1$ with the smallest capacity) and the most flexible consumer (consumer~$3$ with the largest capacity). Up to $\alpha = 0.7$, the system exhibits regime patterns analogous to Regimes~$1$ and~$2$ in the two-consumer setting. Specifically, $U_1$ remains constant until $\alpha = 0.1$ (Regime~$1$), after which it begins to decrease (Regime~$2$). Around $\alpha = 0.7$, consumer~$1$'s utility reduces to $0$, mirroring the behavior observed in the two-consumer case in Regime~$3$.

For $\alpha > 0.7$, the problem can be interpreted as a reduced two-consumer system consisting of consumers~$2$ and~$3$. This leads to a regime analogous to Regime~$2$ in the two-consumer setting, where the less flexible consumer (consumer~$2$) experiences a decline in utility, while the more flexible consumer (consumer~$3$) gains utility. In summary, although a larger number of consumers introduces additional regimes, the qualitative implication of price fairness remains unchanged. More flexible consumers do not incur any loss, whereas less flexible consumers do not obtain any benefit. Thus, the three-consumer case reinforces the conclusion already established in the two-consumer setting.

\begin{figure}[htbp]
\FIGURE{
    \includegraphics[width=\textwidth]{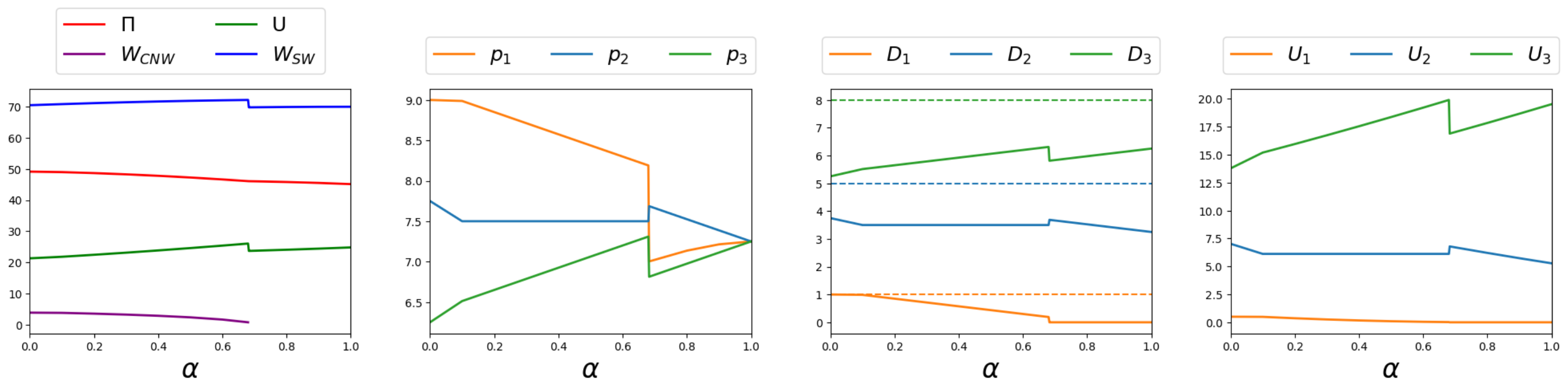}
}
{Price fairness \label{fig:price_eq3}
\vspace{.3cm}}
{Parameters are set to $a = 1$, $b = 9$, $\pi = 12$, $\bar D_1 = 1$, $\bar D_2 = 5$, $\bar D_3 = 8$, and $D_{\mathrm{s}} = 10$.
}
\end{figure}

\paragraph{Utility fairness.}
Figure~\ref{fig:utility_eq3} presents the results under the utility-fairness constraint. When we focus on the least flexible consumer (consumer~$1$ with the lowest capacity) and the most flexible consumer (consumer~$3$ with the highest capacity), we observe interesting patterns. Up to the point at which the utility of consumer~$1$ converges to that of consumer~$2$ (around $\alpha=0.9$), the regime transitions still mirror those in the two-consumer case. Specifically, Regime~$1$ persists until $\alpha = 0.5$, followed by Regime~$3$ in the interval $\alpha \in [0.5, 0.7)$, and finally Regime~$4$ thereafter up to $\alpha=0.9$. 

The difference between the three-consumer case and the two-consumer case emerges only beyond $\alpha = 0.9$. In this region, the utility of consumer~$1$ becomes equal to that of consumer~$2$. Beyond this point, their utilities increase together, while the utility of consumer~$3$ declines. If consumers~$1$ and~$2$ are regarded as a single aggregated consumer, the resulting pattern resembles Regime~$2$ in the two-consumer case. More precisely, in this regime, the aggregated utility of the system increases. The utilities of consumers~$1$ and~$2$ rise, while that of consumer~$3$ declines. Therefore, although the three-consumer setting exhibits additional regimes not present in the two-consumer case, the patterns remain similar. In particular, under utility fairness, less flexible consumers benefit, whereas more flexible consumers experience utility losses.

\begin{figure}[htbp]
\FIGURE{
    \includegraphics[width=\textwidth]{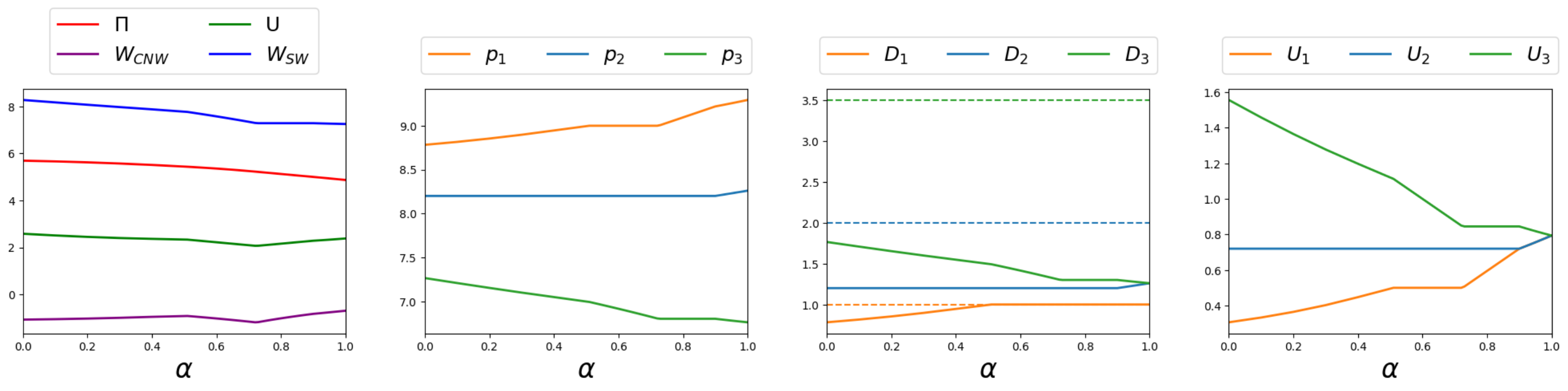}
}
{Utility fairness \label{fig:utility_eq3}
\vspace{.3cm}}
{Parameters are set to $a = 1$, $b = 9$, $\pi = 9.4$, $\bar D_1 = 1$, $\bar D_2 = 2$, $\bar D_3 = 3.5$, and $D_{\mathrm{s}} = 6.4$.
}
\end{figure}

\section{Case Study}
\label{sec:case_study}
In this section, we illustrate how our fairness framework can be applied in a real-world setting, using data from the \emph{iFlex} field experiment\footnote{\url{https://zenodo.org/records/8248802}} conducted by the Norwegian transmission system operator \emph{Statnett}. The primary objective of the \emph{iFlex} project was to quantify households’ (consumers') price sensitivity and implicit flexibility, capturing how consumers adjust electricity consumption when exposed to short-term incentive prices. To this end, participating households were exposed to experimentally designed hourly incentive price signals on selected days during two winter periods ($2019$–$2020$ and $2020$–$2021$). These price signals ranged from $2$ to $30$ NOK/kWh and followed pre-specified intra-day profiles, reflecting different incentive designs in the experiment. This dataset also includes household survey information, allowing us to examine the demographic profiles of each household.
A more detailed description of the data is provided in \citet{hofmann2023rich}. 

In this case study, we adopt a practical modification to the pricing policy. Rather than implementing a fully personalized pricing scheme, we group the $N=1{,}233$ households into three clusters and consider a tiered pricing scheme. Under this scheme, distinct prices are assigned across clusters. This approach is consistent with tier-based (or segment) pricing schemes commonly adopted in practice.\footnote{For instance, \emph{OhmConnect} employs a performance-based status system for residential participants. See its official documentation: \url{https://www.ohmconnect.com/help/en_us/what-is-an-ohmconnect-status-level-rknMwYsNu}.}. We then examine how different fairness criteria can be implemented and evaluated within this pricing framework.

We first estimate the parameter of the consumer model and provide detailed explanations of the pricing framework (Section~\ref{sec:demand_model}) with the preprocessing steps provided in Appendix~\ref{appendix:case_study_preprocessing}. We then evaluate the consequences of the fairness criteria proposed in this paper and demonstrate how their empirical implications align with the theoretical insights developed in earlier sections (Section~\ref{sec:results}).


\subsection{Parameter Estimation}
\label{sec:demand_model}
In this section, we first estimate the linear response parameters $a$ and $b$ in \eqref{linear_response} using experimental price–consumption data, and then estimate the capacity. Our analysis focuses on households that participated exclusively in Phase~$2$ of the \emph{iFlex} field experiment and were randomly assigned to the price treatment group. We further restrict the sample to participants who completed the post-experiment survey, which provides household characteristics. The detailed sample selection procedure is described in Appendix~\ref{appendix:case_study_preprocessing}.

\paragraph{Estimation of response parameters.} Note that in our stylized model, the price faced by individual consumer $i$ who provides energy $D_i$ is given by
$p_i = a D_i + b - a \bar{D}_i$.
Because the response behavior may vary across hours, we estimate these parameters separately for each hour and identify the hour with the most accurate response estimation. Importantly, this step does not require prior estimation of capacity $\bar D_i$, instead, we estimate capacity after selecting the hour, which is subsequently used for clustering and further analysis. In the experimental data, the provided energy is not directly observed. Instead, we observe consumers' electricity consumption, denoted by $Q_i$ for consumer $i$. We interpret the provided energy as the deviation from capacity, $\bar{D}_i - Q_i$, due to incentive prices. Substituting this expression into the price-consumption relationship at a given hour yields
\begin{equation}
\label{eq:OLS}
p_i = a(\bar{D}_i - Q_i) + b - a \bar{D}_i = -a Q_i + b,
\end{equation}
which eliminates the need to estimate capacity $\bar D_i$ when estimating the response function parameters, and allows the coefficients $a$ and $b$ to be identified directly from the observed pairs $(Q_i, p_i)$.

We estimate the coefficients $a$ and $b$ by regressing observed electricity consumption on prices during experimental days, as specified in~\eqref{eq:OLS}. Table~\ref{tab:hourly_ab} presents the estimated coefficients in which the estimate of $a$ is statistically significant ($p<0.05$).

\begin{table}[htbp]
\centering
\caption{Estimated price-consumption coefficients by hour}
\label{tab:hourly_ab}

\makebox[\textwidth][c]{%
\resizebox{\textwidth}{!}{%
\begin{tabular}{c|cccccccc}
\toprule
Hour & 8 & 9 & 12 & 13 & 14 & 15 & 19 & 20 \\
\midrule
$a$ (p-value)
& 0.0308 (0.0207)
& 0.0278 (0.0414)
& 0.0383 (0.00757)
& 0.0408 (0.00646)
& 0.0349 (0.0205)
& 0.0359 (0.0177)
& 0.0420 (0.00223)
& 0.0387 (0.00459) \\
$b$
& 4.9059
& 4.9864
& 4.6970
& 4.5686
& 4.4988
& 4.5287
& 5.2495
& 5.1243 \\
\bottomrule
\end{tabular}%
}}
\end{table}

We focus on hour $13$ (i.e., $12{:}00$--$13{:}00$), as it lies within the longest consecutive time window during which the estimated coefficient $a$ is statistically significant, and within this window, hour $13$ exhibits one of the largest positive estimates of $a$, indicating a meaningful price-consumption relationship during this period.

\paragraph{Estimation of capacity.} The capacity, $\bar{D}_i$, is estimated as the average electricity consumption of household $i$ during non-experiment days (i.e., days without incentive prices) for the same hour of the day (hour $13$), following \citet{hofmann2021households}. We then group individuals based on their capacity. Using the elbow method~\citep{kodinariya2013review}, we observe that the reduction in within-cluster variation slows significantly after three clusters. We therefore partition households into three clusters.
We label Clusters~$1$, $2$, and~$3$ in increasing order of their mean capacity. Table~\ref{tab:cluster_summary} summarizes the clustering results based on capacity.

We also analyze the characteristics of each cluster by examining the reported household income from the survey.
Figure~\ref{fig:income_cluster} shows the distribution of households across income categories within each cluster, where the shares sum to one for each cluster. Consistent with literature~\citep{ito2014consumers}, clusters with larger capacity $\bar{D}$ tend to include a higher share of high-income households, suggesting a positive association between capacity and income levels. Thus, tiered pricing can have effects across income levels, which is where fairness considerations become important.
As our theory shows, in the absence of fairness constraints, households in Cluster $3$--typically more high-income households--may provide more energy under profit-maximizing VPP strategies, enabling them to capture a disproportionately larger share of participation opportunities.

\begin{table}[htbp]
\centering
\begin{minipage}{0.4\textwidth}
\centering
\captionof{table}{Cluster statistics}
\label{tab:cluster_summary}

\setlength{\tabcolsep}{8pt}
\begin{tabular}{ccc}
\toprule
Cluster & Mean $\bar{D}$ & $\#$ of households \\
\midrule
1 & 0.907 & 505 \\
2 & 2.692 & 497 \\
3 & 4.991 & 231 \\
\bottomrule
\end{tabular}
\end{minipage}
\hfill
\begin{minipage}{0.55\textwidth}
\centering
\captionof{figure}{Income distribution by cluster}
\label{fig:income_cluster}

\includegraphics[width=\linewidth]{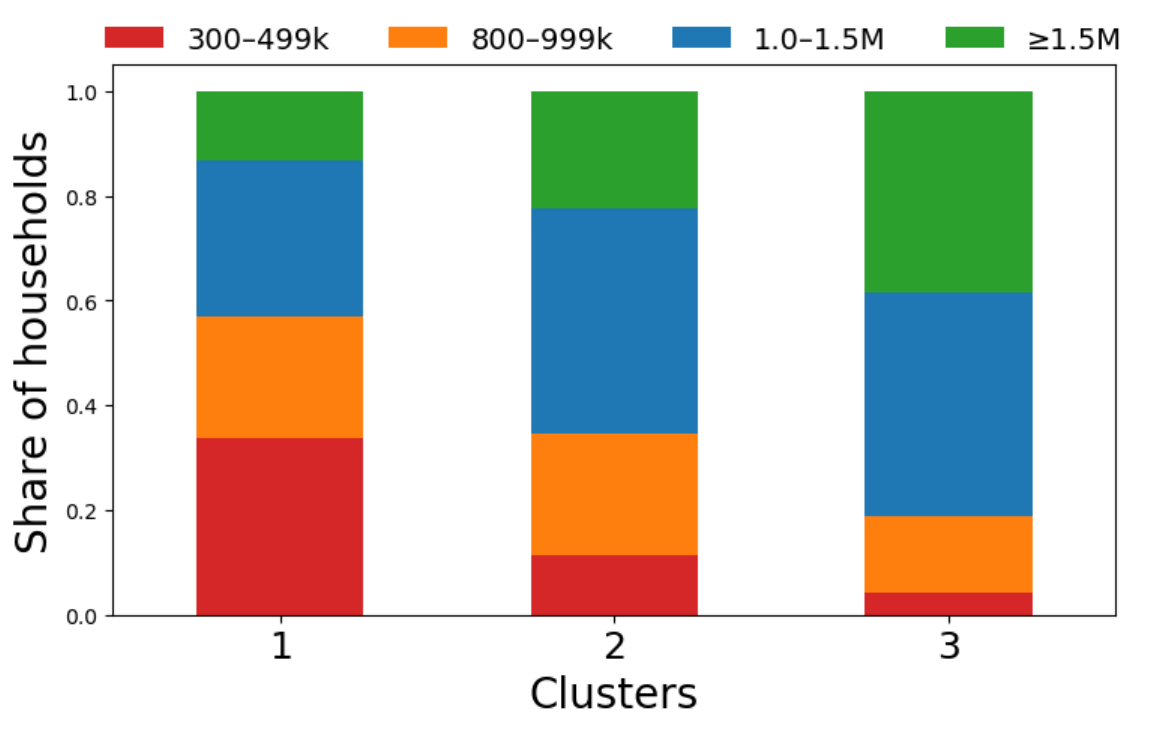}
\end{minipage}

\end{table}




\subsection{Results}
\label{sec:results}
Denote the number of households in cluster $g\in[3]$ as $n_g$ and the capacity of each household in cluster $g\in[3]$ as $\bar{D}_g$. As the pricing scheme is imposed at the cluster level, the fairness criterion is also defined at the cluster level. Accordingly, we use the mean capacity of each cluster to represent the capacity of individual consumers within that cluster. This construction induces homogeneity within each cluster, i.e., consumers in the same cluster are exposed to the same price signal and exhibit the same provided energy.

Suppose the aggregator collects energy $D_g$ from each individual household in cluster $g$. Then the total aggregated energy is $\sum_{g\in[3]} n_g D_g$, with linear aggregator profit function $V\left(\sum_{g\in[3]} n_g D_g\right)=\pi\sum_{g\in[3]} n_g D_g$.

Ignoring fairness considerations, the profit-only optimization problem can be written as
\begin{equation}
\begin{aligned}
\max_{\{D_g\}_{g\in[3]}} \quad
& \sum_{g\in[3]} n_g \left(\pi D_g - \bigl(a D_g + b - a\bar{D}_g\bigr) D_g \right) \\
\text{s.t.}\quad
& \sum_{g\in[3]} n_g D_g \le D_s, \\
& 0 \le D_g \le \bar{D}_g, \quad \forall g\in[3].
\end{aligned}
\label{eq:case_study_opt}
\end{equation}
Similar to the stylized model, the optimal price offered to households in cluster $g$ is $p_g^\ast=aD_g^\ast+b-a\bar{D}_g$ for all $g\in[3]$, where $D_g^\ast$ denotes the optimal solution to~\eqref{eq:case_study_opt}. Here, $U_g$ denotes the individual utility of group $g$. The total consumer utility is $U=\sum_{g\in[3]} n_g U_g$, and the CNW is $W_{\text{CNW}}=\sum_{g\in[3]} n_g \log(U_g)$.

When incorporating a fairness criterion, since households within the same cluster are homogeneous, comparing the relevant outcome across households is equivalent to considering comparisons across clusters, which yields the constraint
$$|M_g(p_g)- M_{g'}(p_{g'})\max_{s,s'}|\le (1-\alpha)|M_s(p_s^\ast)-M_{s'}(p_{s'}^\ast)|~\text{for all}~g,g\in [3]~\text{with}~g\ne g',$$
where $M(\cdot)$ denotes a fairness measure, such as energy, price, and utility.

Because wholesale market prices fluctuate substantially and a VPP collects energy only when market conditions are profitable, we set $\pi = 5$. Under our setting, the condition $\pi > b- a \max_i \bar D_i$ guarantees that the aggregator earns a positive profit. This setting regarding $\pi$ also implies a threshold for the VPP to participate in the wholesale market in practice. For the maximum aggregated energy $D_\mathrm{s}$, we consider several values. Larger values of $D_{\mathrm{s}}$ correspond to peak periods with extreme supply shortages, while smaller values represent mild situations. In this section, we present results for $D_{\mathrm{s}}=0.8\sum_{g\in[3]} n_g \bar{D}_g$, which corresponds to extreme supply shortages of peak-period grid operation. Results for a mild shortages situations, $D_{\mathrm{s}}=0.3\sum_{g\in[3]} n_g \bar{D}_g$, are reported in Appendix~\ref{appendix:additional_case_study}.

Figure~\ref{fig:demand_case_study} presents the outcomes under energy fairness. The observed patterns are consistent with our theoretical analysis when the three clusters are aggregated into two representative groups. In particular, more flexible consumers (cluster~$3$) increase their provided energy, while less flexible consumers (clusters~$1$ and~$2$) reduce provided energy to offset this increase. Although either cluster~$2$ or cluster~$3$ may increase or decrease provided energy as $\alpha$ varies, their aggregated energy provision remains positive. The regime pattern is analogous to Regime~$1$ and Regime~$2$ as described in Theorem~\ref{thm:energy_fairness}. This confirms that the regime characterization derived in the two-cluster setting extends to the multi-cluster case with realistic response behavior. The resulting reallocation leads to a reduction in CNW, reflecting increased dispersion in individual utilities. The utility gains of cluster~$3$ exceed the combined profit loss of the aggregator and clusters~$1$ and~$2$, thereby yielding an increase in social welfare. However, as shown in Figure~\ref{fig:income_cluster}, cluster~$3$ corresponds to high-income consumers, while clusters~$1$ and~$2$ comprise predominantly low-income consumers. Energy fairness therefore disproportionately benefits high-income consumers while burdening low-income consumers, which runs counter to the common policy objective of prioritizing benefits for low-income consumers. Numerically, comparing the $\alpha=0$ and $\alpha=1$ cases, a $0.69\%$ loss in profit is associated with an $11.91\%$ gain in total utility and a $0.50\%$ increase in social welfare.

\begin{figure}[htbp]
\FIGURE{
    \includegraphics[width=\textwidth]{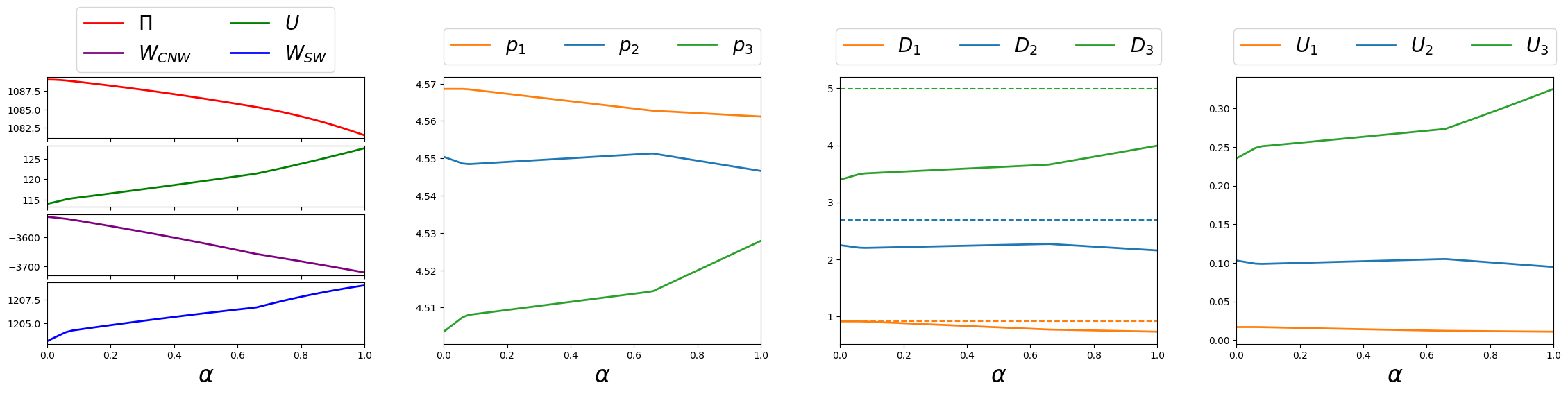}
}
{Energy fairness \label{fig:demand_case_study}
\vspace{.3cm}}
{}
\end{figure}

Figure~\ref{fig:price_case_study} illustrates the outcomes under price fairness. The observed patterns mirror the theoretical mechanisms identified earlier--more flexible consumers (cluster~$3$) experience no utility losses, while less flexible consumers (cluster~$1$) do not benefit in utility terms, which initially corresponds to Regime~$1$ in Theorem~\ref{thm:price_fairness}, except for CNW. This is likely affected by the number of consumers within each cluster. As $\alpha$ increases, less flexible consumers lose utility while more flexible consumers gain utility, indicating the transition from Regime~$1$ to Regime~$2$, as characterized in Theorem~\ref{thm:price_fairness}. The utility of cluster~$2$ is non-monotonic, with modest changes relative to the pronounced effects observed for clusters~$1$ and~$3$. 
Accepting a $2.32\%$ profit loss at $\alpha = 1$ relative to $\alpha = 0$ yields a $29.42\%$ increase in total consumer utility and increases overall social welfare by $0.68\%$. Nevertheless, these gains are unevenly distributed, as CNW decreases. Specifically, they arise alongside concentrated utility losses among less flexible consumers (cluster~$1$), while the benefits accrue primarily to more flexible consumers (high-income according to Figure~\ref{fig:income_cluster}). Similar to energy fairness, this distributional outcome challenges the conventional fairness rationale in energy policy, where fairness interventions are typically motivated by concerns for low-income or more vulnerable households. Moreover, the intermediate groups (cluster~$2$) are neither clear beneficiaries nor clear losers under price fairness policies, and the impact of such policies on these groups is highly context-dependent.

\begin{figure}[htbp]
\FIGURE{
    \includegraphics[width=\textwidth]{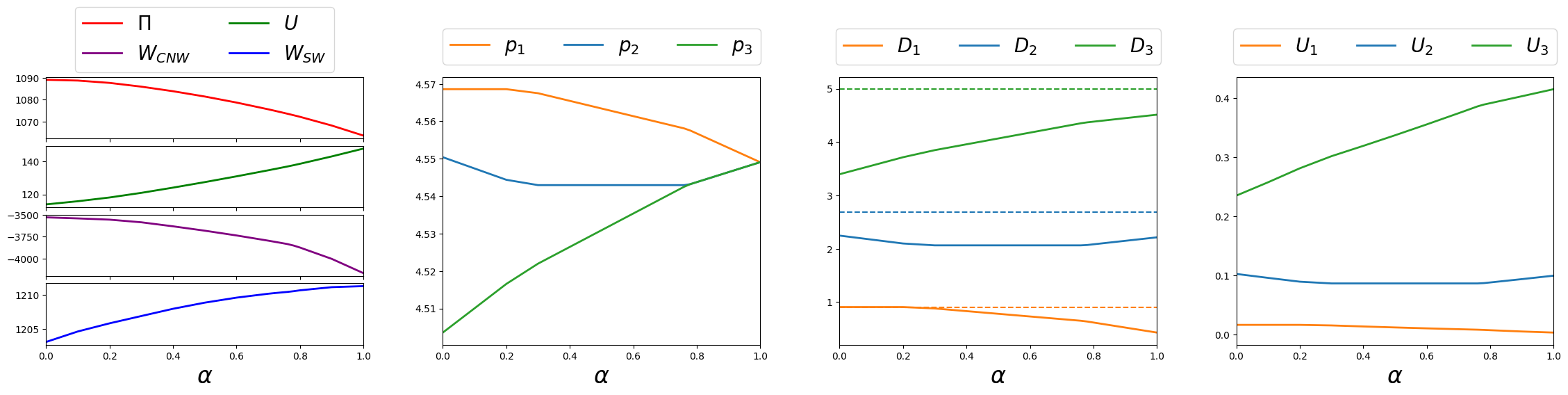}
}
{Price Fairness \label{fig:price_case_study}
\vspace{.3cm}}
{}
\end{figure}

Lastly, Figure~\ref{fig:utility_case_study} reports the outcomes under utility fairness. When the fairness constraint is first introduced, more flexible consumers (cluster~$3$) experience a substantial reduction in utility, while the less flexible consumers (cluster~$1$) remain unaffected, which mirrors Regime~$3$ in Theorem~\ref{thm:utility_fairness}, except for CNW.
With a higher $\alpha$, utility is progressively reallocated toward less flexible consumers (cluster~$1$), leading to great increases in CNW. Notably, social welfare remains approximately constant, indicating that gains in consumer utility are largely offset by losses in the aggregator's profit. In contrast to energy and price fairness, the redistributive loss under utility fairness falls primarily on more flexible consumers, while less flexible consumers emerge as the main beneficiaries, suggesting that the fairness gains accrue to low-income groups. However, this redistribution comes at the cost of the largest aggregator's profit loss among the three fairness criteria considered. As in the previous cases, the intermediate group (cluster~$2$) again plays a muted role, the magnitude of the change remains small relative to the pronounced effects observed for clusters~$1$ and~$3$, similar to the pattern observed under price fairness. When moving from the no-fairness case ($\alpha = 0$) to perfect fairness ($\alpha = 1$), profits decrease by $6.45\%$, accompanied by a $50.47\%$ increase in total utility, whereas social welfare falls by $1.06\%$. Despite this decline in social welfare, CNW increases.
\begin{figure}[htbp]
\FIGURE{
    \includegraphics[width=\textwidth]{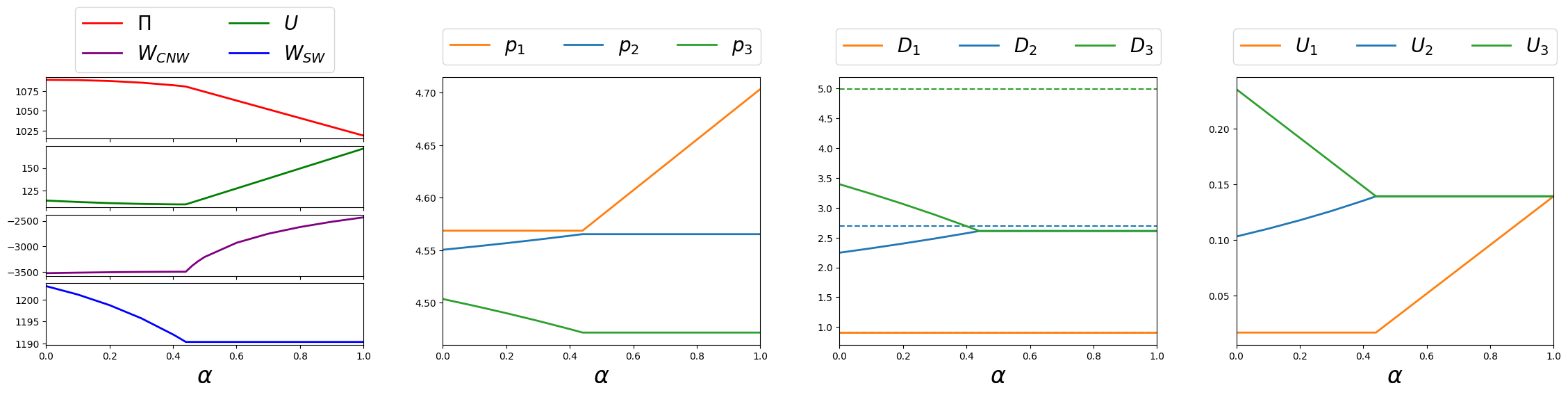}
}
{Utility Fairness \label{fig:utility_case_study}
\vspace{.3cm}}
{}
\end{figure}

\section{Conclusion}
In this paper, we analyze how fairness criteria and fairness levels in VPP incentive prices affect consumers and aggregators. We consider three fairness criteria along the dimensions of energy, price, and utility. We first show, under a stylized model with two heterogeneous consumers and a linear response model, that profit-only VPP pricing favors more flexible consumers, motivating the need for fairness considerations. We then demonstrate that fairness criteria must be considered separately, as they cannot be satisfied simultaneously in practice.

We characterize the entire spectrum of embedding fairness across all levels of $\alpha$, identifying all possible regimes and describing how these regimes evolve with $\alpha$ across performance measures, including CNW, total consumer utility, and social welfare for each fairness criterion. 
In particular, energy fairness exhibits four distinct operating regimes, some improving and others degrading performance measures, with favorable regimes attained at moderate
fairness levels and thus requiring careful choice of the fairness level. Price fairness, while potentially improving social welfare and total consumer utility, never benefits less flexible consumers and can exacerbate participation inequities relative to profit-only pricing. Utility fairness similarly exhibits multiple regimes with mixed system-level effects and consistently protects less flexible consumers without advantaging more flexible consumers. 
We then conduct a case study using real-world data from an experiment in Norway, adopting a tiered pricing scheme to examine the practical implications of implementing fairness considerations. Collectively, our results provide regulators and VPP operators with a principled map for selecting appropriate fairness criteria and fairness levels.

Lastly, our findings point to several promising directions for future research. One extension is to generalize consumers' cost functions beyond quadratic forms to examine how richer behavioral models affect fairness outcomes 
In addition, we can extend our framework to multi-period environments, where response behavior changes over time. Such extensions would enable the study of path-dependent and intertemporal fairness, and their implications for long-run aggregator profits.
More broadly, as VPPs scale and pricing algorithms become increasingly sophisticated--making feature-based and personalized pricing feasible--fundamental questions arise about which consumer features can be legitimately used for pricing, given legal constraints and concerns about participation and fairness.

\bibliographystyle{ormsv080}
\bibliography{references}

\newpage

\begin{APPENDIX}{}
\section{Proofs}
\begin{lemma}[Energy Equivalent Model]\label{lemma:equivalent_model}
    With $N = 2$ and the linear response function defined in \eqref{linear_response}, the profit-only optimization and energy fairness-constrained optimization, formulated with price decision variables $p_i$ can be equivalently reformulated in terms of the decision variables $D_i$ as follows:
    \begin{equation}
        \begin{aligned}
        \max_{D_1, D_2}&~ \pi (D_1+D_2) - (a D_1 + b-a\bar{D}_1) D_1 - (a D_2 + b-a\bar{D}_2) D_2\\
        \text{s.t. }&~ D_1 + D_2 \le D_\mathrm{s},\\
        &~ 0 \le D_i \le \bar{D}_i,\quad \forall i\in[2],\\
        &~\left| \frac{D_1}{\bar{D}_1} - \frac{D_2}{\bar{D}_2} \right|
        \le
        (1 - \alpha)
        \left| \frac{D_1^\ast}{\bar{D}_1} - \frac{D_2^\ast}{\bar{D}_2} \right|,
        \end{aligned}
    \label{eq:unconstrained_wrt_D}
    \end{equation}
    where $\alpha=0$ represents profit-only optimization.
\end{lemma}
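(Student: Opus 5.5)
The reformulation is a change of variables from prices to quantities. We show that every feasible price vector can be replaced, without lowering profit, by one in a "canonical" range where the map $p_i \mapsto d_i(p_i)$ is a bijection onto $[0,\bar D_i]$ with inverse $p_i = aD_i + b - a\bar D_i$.

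\emph{Canonical prices.} Fix $i$ and split the price axis according to \eqref{linear_response}.
\begin{itemize}
\item If $p_i \le b - a\bar D_i$, then $d_i(p_i)=0$, so consumer $i$ contributes $0$ to both revenue and payment. Replacing $p_i$ by $b-a\bar D_i$ leaves $D_i$, the profit and all constraints unchanged.
\item If $p_i \ge b$, then $d_i(p_i)=\bar D_i$. The payment $p_i\bar D_i$ is minimized at $p_i=b$. Since $D_i$ is unchanged, feasibility of the aggregate constraint and of the energy-fairness constraint is preserved, so lowering $p_i$ to $b$ weakly increases profit.
\end{itemize}
Both energy fairness and the aggregate cap depend on $p$ only through $D=d(p)$. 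Hence we may restrict without loss of optimality to $p_i\in[b-a\bar D_i,\,b]$. On this interval $d_i$ is the affine bijection $D_i=(p_i-b)/a+\bar D_i$ onto $[0,\bar D_i]$. Nonnegativity $p_i\ge 0$ is harmless: $b-a\bar D_i>0$ by the assumption $b/a>\max_i\bar D_i$.

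\emph{Substitution.} On the canonical region:
\begin{itemize}
\item the profit $\pi(D_1+D_2)-\sum_i p_iD_i$ becomes the stated objective after substituting $p_i=aD_i+b-a\bar D_i$;
\item the constraint $D_i=d_i(p_i)$ becomes the box $0\le D_i\le\bar D_i$, since the image of the canonical interval is exactly $[0,\bar D_i]$;
\item the aggregate cap is unchanged;
\item the energy-fairness constraint $|d_1(p_1)/\bar D_1-d_2(p_2)/\bar D_2|\le(1-\alpha)\Delta$ is already stated in terms of $D$, with $\Delta=|D_1^\ast/\bar D_1-D_2^\ast/\bar D_2|$.
\end{itemize}
Thus the two problems have the same optimal value. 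Optimal solutions correspond through the affine bijection, and any optimal price vector outside the canonical region is profit-equivalent to one inside it.

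\emph{The point needing care.} Setting $\alpha=0$ makes the fairness constraint read $|D_1/\bar D_1-D_2/\bar D_2|\le\Delta$, and we must check it does not cut off the profit-only optimum. The constant $\Delta$ is defined from the profit-only optimum itself, so that point satisfies the constraint with equality. Hence the constrained maximum equals the unconstrained one, and the $\alpha=0$ case is genuinely the profit-only problem \eqref{aggregator_problem}.

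If the profit-only optimal prices are not unique, $\Delta$ should be evaluated at the canonical optimum. This is well defined: the reformulated objective is strictly concave in $D$, since it equals $(\pi-b)\sum_i D_i+a\sum_i\bar D_iD_i-a\sum_i D_i^2$. So the optimal $D^\ast$ is unique, and $\Delta$ depends only on $D^\ast$.
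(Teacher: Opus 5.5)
Your proposal is correct and follows the paper's proof essentially step for step. The paper uses the same two replacements: zero-response prices are raised to $b-a\bar D_i$ with no change in profit, and saturating prices are lowered to $b$ with weakly higher profit. This restricts attention to $[b-a\bar D_i,b]$, where the response is the affine bijection $p_i=aD_i+b-a\bar D_i$, and the energy-fairness constraint carries over because it depends on prices only through $D$.

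Your extra checks are small refinements the paper leaves implicit:
\begin{itemize}
\item $p_i\ge 0$ holds automatically, since $b-a\bar D_i>0$;
\item the $\alpha=0$ constraint does not cut off the profit-only optimum;
\item $\Delta$ is well defined, because strict concavity makes $D^\ast$ unique.
\end{itemize}
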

\begin{proof}{Proof.}
    This equivalence follows directly from the linear structure of the response function in~\eqref{linear_response}.
    We first consider the profit-only problem. Without loss of generality, let $\hat p_1$ be any feasible price satisfying $\hat p_1 \le b - a \bar D_1$, under the linear response, the corresponding provided energy is zero. Consider the alternative feasible price $\tilde p_1 = b - a \bar D_1$, under which the induced energy for consumer~$1$ is zero. Then both prices yield the same profit level. Indeed,
    $$\Pi(\tilde p_1, p_2) - \Pi(\hat p_1, p_2)= \bigl[\pi\left(0+d_2(p_2)\right) - \tilde p_1 \cdot 0 - p_2 d_2(p_2)\bigr]
     - \bigl[\pi\left(0 + d_2(p_2)\right) - \hat p_1 \cdot 0 - p_2 d_2(p_2)\bigr]= 0.$$
    Hence, any feasible price $\tilde p_1 \le b - a \bar D_1$ is profit equivalent to $\hat p_1$.

    Similarly, for any feasible $\hat p_1 \geq b$, which induces provided energy $\bar D_1$, consider the alternative feasible price $\tilde p_1 = b$, under which the induced energy for consumer~$1$ equals $\bar D_1$, yields a greater or equal profit level, i.e., 
    $$\Pi(\tilde p_1, p_2)- \Pi(\hat p_1, p_2)= \bar{D}_1(\hat p_1-b) \geq 0.$$
    Therefore, it suffices to consider prices in the range $b - a \bar{D}_i  \leq p_i \leq  b$, over which the mapping between price and provided energy is injective through the linear relation $p_i = a D_i + b-a\bar{D}_i$. Substituting this expression into the original problem \eqref{aggregator_problem} yields the equivalent form in \eqref{eq:unconstrained_wrt_D}.

    Then, adding a fairness constraint whose metric is solely determined by the provided energy, not by price, does not affect the equivalence. Without loss of generality, consider a price $p_1 < b-a\bar{D}_1$ or $p_1 > b$, the value of provided energy remains unchanged as $0$ and $\bar{D}$, respectively, resulting in the same value for provided energy.\hfill\Halmos
\end{proof}

\medskip
\begin{proof}{Proof of Lemma~\ref{lemma:opt_sol}.}
The objective is a strictly concave quadratic in $(D_1,D_2)$ because its Hessian is $-2aI_2\prec0$ when $a>0$. The feasible set is a nonempty compact polytope. Hence, a unique maximizer exists, and the Karush--Kuhn--Tucker (KKT) conditions are necessary and sufficient.

Introduce multipliers $\lambda \ge 0$ for the aggregated energy constraint $D_1 + D_2 \le D_{\mathrm s}$,
$\mu_i \ge 0$ for $-D_i \le 0$, and $\nu_i \ge 0$ for $D_i - \bar D_i \le 0$, for $i\in[2]$. The Lagrangian is
\begin{equation*}
\mathcal L(D,\lambda,\mu,\nu)=
\pi(D_1{+}D_2)-\sum_{i=1}^2\Big(aD_i^2+(b-a\bar D_i)D_i\Big)
+\lambda(D_{\mathrm s}-D_1-D_2)+\sum_{i=1}^2\mu_i D_i+\sum_{i=1}^2\nu_i(\bar D_i-D_i).
\end{equation*}
Stationarity gives, for $i\in[2]$,
\begin{equation}
\label{eq:stationarity}
2aD_i = \pi-b+a\bar D_i - \lambda - \mu_i + \nu_i.
\end{equation}
The KKT conditions are given by
\begin{equation*}
\begin{aligned}
&\text{(i) Primal feasibility:} && D_i \in [0, \bar D_i], \quad D_1 + D_2 \le D_{\mathrm s},\\
&\text{(ii) Dual feasibility:}   && \lambda,\, \mu_i,\, \nu_i \ge 0, \quad i \in [2],\\
&\text{(iii) Complementary slackness:} && 
\lambda(D_{\mathrm s}-D_1-D_2) = 0,\quad
\mu_i D_i = 0,\quad
\nu_i(\bar D_i - D_i) = 0,\quad i \in [2].
\end{aligned}
\end{equation*}
We now consider two cases depending on whether the constraint, $D_1 + D_2 \le D_{\mathrm s}$, binds. 
When the constraint is not binding, complementary slackness implies $\lambda = 0$. Otherwise, $\lambda > 0$ and the constraint holds with equality so that $D_1 + D_2 = D_{\mathrm s}$.

\textbf{\boldmath Case $\lambda = 0$.}
With $\lambda = 0$, the stationarity condition~\eqref{eq:stationarity} reduces to
\begin{equation*}
2aD_i = \pi - b + a\bar D_i - \mu_i + \nu_i, \qquad i \in [2].
\end{equation*}
If neither the lower nor upper bound binds, then $\mu_i = \nu_i = 0$, and the first-order
condition yields the unconstrained maximizer
\begin{equation*}
D_i^{(0)} = \frac{\pi - b}{2a} + \frac{\bar D_i}{2}.
\end{equation*}

We now analyze the boundary cases in which the optimal solution $(D_1^\ast,D_2^\ast)$ lies on the boundary of the feasible set. The four possibilities are: (i) $D_1^\ast=0$, (ii) $D_2^\ast=0$, (iii) $D_1^\ast=\bar{D}_1$, and (iv) $D_2^\ast=\bar{D}_2$.

We first rule out cases (i) and (ii). For the sake of contradiction, suppose that $D_1^\ast = 0$ in the case $\lambda = 0$. Then the lower-bound constraint $D_1 \ge 0$ is active, so the associated KKT multiplier satisfies $\mu_1 \ge 0$ and complementary slackness implies $\mu_1 D_1^\ast = 0$. 
Moreover, since $D_1^\ast = 0 < \bar D_1$, the upper-bound constraint for $D_1$ is slack and thus $\nu_1 = 0$.

The stationarity condition \eqref{eq:stationarity} with $\lambda = 0$ gives
\begin{equation*}
2a D_1 = \pi - b + a \bar D_1 - \mu_1 + \nu_1,
\end{equation*}
and evaluating at $D_1 = 0$ and $\nu_1 = 0$ yields
\begin{equation*}
\mu_1 = \pi - b + a \bar D_1.
\end{equation*}
Using the explicit form of the profit function, we also compute
\begin{equation*}
\left.\frac{\partial \Pi(D_1,D_2^\ast)}{\partial D_1}\right|_{D_1 = 0}
= \pi - b + a \bar D_1
= \mu_1 > 0.
\end{equation*}

However, the KKT conditions for a maximization problem require that, at an optimal point, the directional derivative of the Lagrangian in any feasible direction be non-positive. 
Here, increasing $D_1$ from zero is a feasible direction, and the above calculation shows that the directional derivative in this direction is strictly positive. This contradicts the optimality of $D_1^\ast = 0$. Hence $D_1^\ast > 0$. 
By symmetry, the same argument rules out $D_2^\ast = 0$, so cases (i) and (ii) cannot occur.

We can also rule out case (iv), i.e., $D_2^\ast = \bar D_2$. For the sake of contradiction, suppose that $D_2^\ast = \bar D_2$. Then the upper-bound constraint is binding, and complementary slackness implies $\nu_2 \ge 0$. From the stationarity condition under $\lambda = 0$,
\begin{equation*}
2a D_2^\ast = \pi - b + a \bar D_2 - \mu_2 + \nu_2,
\end{equation*}
and using $D_2^\ast = \bar D_2$ and $\mu_2 = 0$ gives
\begin{equation*}
\nu_2 = (\pi - b) - 2a \bar D_2.
\end{equation*}
Thus $\nu_2 \ge 0$ requires $\frac{\pi - b}{2a} \ge \frac{\bar D_2}{2}$.

Under this condition, the unconstrained maximizer for $D_1$ also exceeds $\tfrac{\bar D_1}{2}$, so its projection saturates the upper bound and yields $D_1^\ast = \bar D_1$. Therefore,
\begin{equation*}
D_1^\ast + D_2^\ast
= \bar D_1 + \bar D_2 > D_{\mathrm s},
\end{equation*}
which violates the constraint $D_1 + D_2 \le D_{\mathrm s}$.  
Hence, case (iv) is impossible, and we must have $D_2^\ast < \bar D_2$.

Note that (iii) $D_1^\ast = \bar D_1$ cannot be ruled out, since $D_1^\ast > \bar D_1$ violates neither the KKT conditions nor the feasibility constraints. In particular, when the unconstrained maximizer $D_1^{(0)}$ exceeds $\bar D_1$, the upper bound naturally becomes binding. Therefore, define
\begin{equation*}
    D_1^{\dagger}= \min\left( \frac{\pi - b}{2a} + \frac{\bar{D}_1}{2},\;\bar{D}_1 \right)
    \quad\text{and}\quad
    D_2^{\dagger} = \frac{\pi - b}{2a} + \frac{\bar{D}_2}{2}.
\end{equation*}
If $D_1^{\dagger} + D_2^{\dagger} \le D_{\mathrm s}$, then this candidate is feasible. Hence $ D_i^\ast = D_i^\dagger$ for all $i \in [2]$.

\smallskip
\textbf{\boldmath Case $\lambda > 0$.}
When $\lambda > 0$, complementary slackness implies that $D_1^\ast + D_2^\ast = D_{\mathrm s}$. 
The stationarity condition~\eqref{eq:stationarity} then becomes
\begin{equation*}
2aD_i = \pi - b + a\bar D_i - \lambda - \mu_i + \nu_i, \qquad i \in [2].
\end{equation*}
For each $i$, the multipliers $\mu_i$ and $\nu_i$ correspond respectively to the lower 
and upper bound constraints $D_i \ge 0$ and $D_i \le \bar D_i$, and complementary slackness implies
\begin{equation*}
\mu_i D_i = 0 \quad \text{and} \quad \nu_i(\bar D_i - D_i) = 0.
\end{equation*}

If neither individual cap binds (i.e.,\ $0<D_i^\ast<\bar D_i$ for $i\in[2]$), then $\mu_i=\nu_i=0$ and \eqref{eq:stationarity} gives $2aD_i^\ast=\pi-b+a\bar D_i-\lambda$.
Summing over $i$ and imposing $D_1^\ast+D_2^\ast=D_{\mathrm s}$ yields
$\lambda=\pi-b+\tfrac{a}{2}(\bar D_1+\bar D_2)-aD_{\mathrm s}$ and hence
\begin{equation*}
D_1^{(0)}=\frac{D_{\mathrm s}}{2}+\frac{\bar D_1-\bar D_2}{4}\quad
\text{and}\quad
D_2^{(0)}=\frac{D_{\mathrm s}}{2}+\frac{\bar D_2-\bar D_1}{4}.
\end{equation*}

If $D_i^{(0)}$ lies below $0$, the lower bound $D_i = 0$ binds and $\mu_i > 0$; if it exceeds $\bar D_i$, the upper bound $D_i = \bar D_i$ binds and $\nu_i > 0$. Hence, in all cases, the optimal $D_i^\ast$ is obtained by projecting the unconstrained value $D_i^{(0)}$ onto the feasible interval $[0,\bar D_i]$, which gives
\begin{equation}
D_i^\ast
=\operatorname{proj}_{[0,\bar D_i]}\!\left(D_i^{(0)}
\right),
\qquad i \in [2],
\label{eq:clip}
\end{equation}
that satisfies $D_1^\ast + D_2^\ast = D_{\mathrm s}$.

We now analyze the boundary cases in which the optimal solution $(D_1^\ast,D_2^\ast)$ lies on the boundary of the feasible set. The four possibilities are: (i) $(0,D_{\mathrm{s}})$, (ii) $(\bar{D}_1,D_\mathrm{s}-\bar{D}_1)$, (iii) $(D_{\mathrm{s}},0)$, and (iv) $(D_{\mathrm{s}}-\bar{D}_2,\bar{D}_2)$.


We can exclude the case (iii) $D_1^\ast = D_{\mathrm{s}}$, because this would imply
$$D_1^{(0)}=\frac{D_{\mathrm s}}{2} + \frac{\bar D_1 - \bar D_2}{4} > D_\mathrm{s}
\;\Rightarrow\;
\bar D_1 - \bar D_2 > 2D_\mathrm{s} > 0,$$
which contradicts the assumption that $\bar D_2 > \bar D_1$.

Similarly, we can exclude the case (iv) $D_2^\ast = \bar{D}_2$. To achieve this condition,
$$
D_2^{(0)}=\frac{D_{\mathrm s}}{2} + \frac{\bar D_2 - \bar D_1}{4} \ge \bar{D}_2
\;\Longleftrightarrow\;
D_{\mathrm s} \ge \frac{\bar{D}_1 + 3\bar{D}_2}{2},
$$
but the right-hand side, $\tfrac{\bar{D}_1 + 3\bar{D}_2}{2}$, is larger than $\bar{D}_1 + \bar{D}_2$ since $\bar{D}_2 > \bar{D}_1$. 
This contradicts the assumption that $D_{\mathrm{s}} < \bar{D}_1 + \bar{D}_2$.

Therefore, all cases can be summarized as
\begin{equation}
D_1^\ast
=\operatorname{proj}_{[0,\bar D_1]}
\left(D_1^{(0)}\right)\quad\text{and}\quad
D_2^\ast=D_{\mathrm s}-D_1^\ast. \label{eq:profit_only_case2}
\end{equation}
Note that there are three main cases: each optimal decision either lies in the interior region, or $D_1^\ast$ is located on the boundary. This completes the characterization of the optimal solution. \hfill\Halmos
\end{proof}
\medskip

\begin{proof}{Proof of Theorem~\ref{prop:profit_only}.}
To prove this Theorem, we first establish the following lemma, which provides an equivalent representation of the consumer Nash welfare (CNW) under the energy equivalent model.
\begin{lemma}[\emph{Energy-CNW}]\label{Energy_CNW_proposition}
    Under the energy equivalent model defined in Lemma~\ref{lemma:equivalent_model}, the effect of energy changes on the energy-based consumer Nash welfare (DCNW), defined as $W_\mathrm{D} = \sum_{i\in [N]}\log (D_i)$, is equivalent to their effect on the CNW. Specifically,
    $$
    \frac{\partial W_\mathrm{D}}{\partial D_i} = \frac{\partial W_\mathrm{CNW}}{\partial D_i}, \forall i\in [N], \label{DCNW}
    $$
\end{lemma}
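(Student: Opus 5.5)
The plan is to compute each consumer's utility explicitly as a function of her provided energy under the energy equivalent model of Lemma~\ref{lemma:equivalent_model}, and then compare the two welfare functions term by term.

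First, I will use the restriction in Lemma~\ref{lemma:equivalent_model}: it suffices to consider prices $b-a\bar D_i\le p_i\le b$, on which the inverse response is $p_i = aD_i + b - a\bar D_i$. Substituting this and the cost function \eqref{eq:cost_function} into \eqref{utility_define} gives
\begin{equation*}
U_i = \bigl(aD_i + b - a\bar D_i\bigr)D_i - \tfrac12 aD_i^2 - (b-a\bar D_i)D_i = \tfrac12 aD_i^2 .
\end{equation*}
So each consumer's utility depends on her own energy only, through a pure quadratic with no linear term. This is the key identity. It is not a coincidence: the linear part of $C_i$ is exactly the intercept of the inverse demand, so the consumer's surplus is the triangle under a linear marginal-cost curve.

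Next, I will take logarithms on the region $D_i>0$, where CNW is finite:
\begin{equation*}
\log U_i = \log\tfrac a2 + 2\log D_i ,
\qquad
W_\mathrm{CNW} = N\log\tfrac a2 + 2\,W_\mathrm{D}.
\end{equation*}
Hence $W_\mathrm{CNW}$ is an increasing affine transformation of $W_\mathrm{D}$, with an additive constant independent of $(D_1,\dots,D_N)$. Differentiating gives $\partial W_\mathrm{CNW}/\partial D_i = 2/D_i = 2\,\partial W_\mathrm{D}/\partial D_i$.

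The main obstacle is the constant factor $2$. Taken literally, the displayed equality of partial derivatives does not hold: the two derivatives agree only up to the positive constant $2$. I would therefore prove, and state explicitly, the form in which the lemma is used downstream. That is, every partial derivative, directional derivative, and total derivative along a path $\alpha\mapsto (D_1(\alpha),\dots,D_N(\alpha))$ of $W_\mathrm{CNW}$ has the same sign as the corresponding derivative of $W_\mathrm{D}$. Equivalently, the identity holds exactly for the rescaled quantity $\tfrac12 W_\mathrm{CNW}$. This is all that is needed for the ``$\uparrow/\downarrow$'' regime statements in Theorems~\ref{prop:profit_only}, \ref{thm:energy_fairness}, \ref{thm:price_fairness}, and \ref{thm:utility_fairness}, since those statements concern only the direction of change.

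Finally, I will handle boundary cases. If some $D_i=0$, then $U_i=0$ and both $W_\mathrm{D}$ and $W_\mathrm{CNW}$ equal $-\infty$ simultaneously, so the correspondence extends there too. This matches Regime~$3$ of Theorem~\ref{thm:price_fairness}.
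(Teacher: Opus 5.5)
Your route is the paper's own: substitute the inverse response to get $U_i=\tfrac12 aD_i^2$, hence $W_{\mathrm{CNW}}=N\log\tfrac a2+2W_{\mathrm D}$. Your factor-of-$2$ objection is also correct: the paper's proof calls the difference ``only an additive constant'' and overlooks the slope $2$, so the displayed equality of partial derivatives is literally false. What actually holds is your proportional, sign-preserving version (exact for $\tfrac12 W_{\mathrm{CNW}}$), and that is all the profit-only and energy-fairness proofs use.

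One caveat on your closing remark: $U_i=\tfrac12 aD_i^2$ requires $p_i\le b$. So the correspondence cannot be carried into Regime~$4$ of Theorem~\ref{thm:utility_fairness}, where $p_1>b$ and $W_{\mathrm{CNW}}$ rises while $W_{\mathrm D}$ stays constant. The paper argues about CNW directly there.
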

\begin{proof}{Proof of Lemma \ref{Energy_CNW_proposition}.}
    Substituting the price-energy relation $p_i = a D_i + b-a\bar{D}_i$ into the utility function yields $U_i = \frac{1}{2}aD_i^2$. By the definition of CNW,
        \begin{equation*}
            \begin{aligned}
                 W_\mathrm{CNW} &= \sum_{i\in [N]}\log\left(\frac{1}{2}aD_i^2\right) 
            = \sum_{i\in [N]}\log\left( \frac{1}{2}a\right) + 2\sum_{i\in [N]}\log(D_i) = \sum_{i\in [N]}\log\left( \frac{1}{2}a\right) + 2W_\mathrm{D}.
            \end{aligned}
        \end{equation*}
           
    Thus, CNW and DCNW differ only by an additive constant  $\sum_{i\in [N]}\log\left(\frac{1}{2}a\right)$ independent of the decision variables $D_i$. Thus, the provided energy change has the same impact on these two definitions.  \hfill\Halmos
\end{proof}

This Lemma establishes that DCNW can be used interchangeably with the CNW metric when analyzing the impact of energy changes, both in the profit-only problem and in the energy fairness-constrained problem. Because DCNW admits a simpler functional form in terms of $D_i$, it is more convenient for optimization and sensitivity analysis.

Lemma~\ref{lemma:opt_sol} characterizes the optimal solutions of the profit-only problem and distinguishes two cases. In the first case, where the aggregated energy constraint $D_1 + D_2 \le D_{\mathrm s}$ is nonbinding, $D_i^\ast = D_i^{\dagger}$. As $\bar{D}_1 < \bar{D}_2$, the optimal solution always satisfies $D_1^\ast < D_2^\ast$. Let $c:=\frac{\pi-b}{2a}$, we consider two scenarios:\\
(i) When $c \geq 0$:
\begin{itemize}
    \item If $\bar{D}_2 \leq 2c$, then $c+\frac{\bar{D}_i}{2} \geq \bar{D}_i$ and $D_i^\dagger = \bar{D}_i$, thus $D_2^\dagger-D_1^\dagger = \bar{D}_2 -\bar{D}_1>0$.
    \item If $\bar{D}_1 \leq 2c \leq \bar{D}_2$, then $D_1^\dagger = \bar{D}_1$, while $D_2^\dagger = c+ \frac{\bar{D}_2}{2}> 2c \geq \bar{D}_1$, thus $D_2^\dagger > D_1^\dagger$.
    \item If $\bar{D}_1 > 2c$, then both solutions are interior, thus $D_2^\dagger-D_1^\dagger = \frac{\bar{D}_2 -\bar{D}_1}{2}>0$.
\end{itemize}
(ii) When $c < 0$:
\begin{itemize}
    \item If $\bar{D}_2 \leq -2c$, then $c+\frac{\bar{D}_i}{2} \leq 0$ for $i\in [2]$, thus $D_1^\dagger = D_2^\dagger =0$; however, this contradicts the assumption $\pi > b -a\bar{D}_1$.
    \item If $\bar{D}_1 \leq -2c \leq \bar{D}_2$, then $D_1^\dagger = 0$ and $D_2^\dagger = c+ \frac{\bar{D}_2}{2} > 0$, thus $D_2^\dagger > D_1^\dagger$.
    \item If $\bar{D}_1 \geq -2c$, both solutions are interior, which again implies $D_2^\dagger > D_1^\dagger$.
\end{itemize}

Otherwise, when the aggregated energy constraint is binding, the optimal solution is given by~\eqref{eq:profit_only_case2}, which can be expressed as
$$
D_1^\ast = \operatorname{proj}_{\,[0,\bar{D}_1]}\left(D_1^{(0)}\right) = \min\left(\bar{D}_1,\max\left(0,D_1^{(0)}\right)\right),
$$
where $D_1^{(0)}=\tfrac{D_s}{2} + \tfrac{\bar D_1 -\bar D_2}{4}$. As $D_2^\ast = D_\mathrm{s} - D_1^\ast$, proving $D_2^\ast > D_1^\ast$ is equivalent to prove $D_1^\ast < \frac{D_\mathrm{s}}{2}$. Since $\bar{D}_1 < \bar{D}_2$ by the assumption, we have $D_1^{(0)} < \tfrac{D_\mathrm{s}}{2}$. Moreover, $0 < \tfrac{D_\mathrm{s}}{2}$.
Thus, $\max\left(0,D_1^{(0)}\right) < \tfrac{D_\mathrm{s}}{2}$, and since $\min\left(\bar{D}_1, \frac{D_\mathrm{s}}{2}\right) \leq \tfrac{D_\mathrm{s}}{2}$, we have $D_1^\ast \leq \max\left(0,D_1^{(0)}\right)< \tfrac{D_\mathrm{s}}{2}$. 

We next analyze the DCNW, total consumer utility, and social welfare under the binding aggregated energy constraint. According to the function types, $W_\mathrm{D} =\sum_{i\in [2]} \log(D_i)$, $U_i = \frac{1}{2}aD_i^2$, and $W_\mathrm{SW} = \pi D_\mathrm{s} - C_1(D_1)- C_2(D_2)$. Let $D_1^\ast = D_\mathrm{s}-D_2^\ast$, the sensitivity of DCNW regarding $D_1^\ast$ is 
$$
\frac{\partial W_\mathrm{D}}{\partial D_1^\ast} = \frac{\partial \left(\log(D_1^\ast) + \log(D_\mathrm{s} - D_1^\ast) \right)}{\partial D_1^\ast} = \frac{D_\mathrm{s} - 2D_1^\ast}{D_1^\ast (D_\mathrm{s}-D_1^\ast)} >0,
$$
where $D_\mathrm{s} - 2D_1^\ast >0$ holds since $D_1^\ast < D_2^\ast$ and $D_1^\ast + D_2^\ast = D_\mathrm{s}$. The sensitivity of total consumer utility is 
$$\frac{\partial U}{\partial D_1^\ast} = \frac{\partial \left(\frac{1}{2}aD_1^{\ast2} + \frac{1}{2}a(D_\mathrm{s} - D_1^\ast)^2 \right)}{\partial D_1^\ast}=a(2D_1^\ast - D_\mathrm{s})<0,$$
and the sensitivity of social welfare is
$$
\frac{\partial W_\mathrm{SW}}{\partial D_1^\ast} = \frac{\partial \left(\pi D_\mathrm{s} -C_1(D_1^\ast) - C_2(D_s-D_1^\ast) \right)}{\partial D_1^\ast}
=a\left(D_\mathrm{s} - 2D_1^\ast + \bar D_1 - \bar D_2 \right),
$$
which could be positive or negative depending on the parameters. 

When the upper bound on $D_1^\ast$ is not binding, i.e., $0<D_1^\ast<\bar{D}_1$, the optimal solution is $D_1^\ast = D_1^{(0)}$. Therefore, by the chain rule, we have
\begin{equation*}
    \begin{aligned}
    \frac{\partial W_\mathrm{D}}{\partial (\bar{D}_2-\bar{D}_1)}=&\frac{\partial W_\mathrm{D}}{\partial D_1^\ast}\frac{\partial D_1^\ast}{\partial (\bar{D}_2-\bar{D}_1)}=-\frac{1}{4}\frac{\partial W_\mathrm{D}}{\partial D_1^\ast}<0,\\
    \frac{\partial U}{\partial (\bar{D}_2-\bar{D}_1)}=&\frac{\partial U}{\partial D_1^\ast}\frac{\partial D_1^\ast}{\partial (\bar{D}_2-\bar{D}_1)}=-\frac{1}{4}\frac{\partial U}{\partial D_1^\ast}>0,\\
    \frac{\partial W_\mathrm{SW}}{\partial (\bar{D}_2-\bar{D}_1)}=&\frac{\partial W_\mathrm{SW}}{\partial D_1^\ast}\frac{\partial D_1^\ast}{\partial (\bar{D}_2-\bar{D}_1)}+\frac{\partial W_\mathrm{SW}}{\partial \bar D_2} - \frac{\partial W_\mathrm{SW}}{\partial \bar D_1}
    =\frac{a}{2} (\bar D_2 - \bar D_1)>0.
    \end{aligned}
\end{equation*}
When the upper bound on the $D_1^\ast$ is binding, $D_1^\ast= \bar{D}_1$, further increases in $\bar{D}_2$ do not change optimal solutions, and both DCNW and total consumer utility remain unchanged, but social welfare increases. In contrast, increasing $\bar{D}_1$ reduces parameter difference $\bar{D}_2-\bar{D}_1$, corresponding to an increase in DCNW and a decrease in utility and social welfare. Note that even $\bar D_1 + \bar D_2$ decrease, it should satisfy $\bar D_1 + \bar D_2 > D_\mathrm{s}$. \hfill\Halmos


\end{proof}

\medskip
\begin{proof}{Proof of Theorem~\ref{thm:energy_fairness}.}
    The optimization problem under the energy fairness criterion can be represented based on Lemma~\ref{lemma:equivalent_model}. Let $D_i^\ast$ (resp. $D_i(\alpha)$) denote the no-fairness (resp. $\alpha$-energy fairness) optimal energy of consumer~$i$, and define the initial energy ratio gap $\Delta_D:=\left|\frac{D_1^\ast}{\bar{D}_1}-\frac{D_2^\ast}{\bar{D}_2}\right|$. 
    Because each $D_i$ is restricted to the interval $[0,\bar{D}_i]$, the analysis proceeds by distinguishing whether each optimal provided energy $D_1^\ast$ lies in the interior $(0,\bar D_i)$ or on the boundary, noting that Lemma~\ref{lemma:opt_sol} excludes the cases $D_2^\ast\in\{0, \bar{D}_2\}$.

    Because some of the regimes are connected sequentially (Regime~$1 \to 2$ or Regime~$1 \to 3$), we begin by analyzing the latter regime (Regime~$3$), which also brings the conditions for Regime~$4$ accordingly, and then proceed backward to Regimes~$1$ and $2$.
    
    \noindent\textbf{\boldmath Regimes~$3$ and $4$: $D_1^\ast < \bar D_1$}.\\
    Since $\pi > b -a\bar D_1$ by assumption, $D_1^\ast = c + \frac{\bar D_1}{2} > 0$, where $c =\frac{\pi - b}{2a}$. Suppose $0 < D_i^\ast <\bar{D}_i$ for any $i\in[2]$,
    let $\lambda \geq 0$ denote the Lagrangian multiplier on the aggregated energy constraint $D_1 + D_2 \leq D_\mathrm{s}$. As in Lemma~\ref{lemma:opt_sol}, we have two cases: slack ($\lambda=0$) and binding ($\lambda >0$).

    \textbf{\boldmath Case $\lambda = 0$.}
    Suppose $D_1^\ast + D_2^\ast < D_\mathrm{s}$. According to Lemma~\ref{lemma:opt_sol}, the unconstrained maximizer satisfies $D_i^\ast = c + \frac{\bar D_i}{2}$, and the initial energy ratio gap is 
    \begin{equation*}
    \Delta_D 
    := \left|\frac{{D}_1^\ast}{\bar{D}_1} - \frac{{D}_2^\ast}{\bar{D}_2}\right| 
    = \left|c\right|\left(\frac{1}{\bar{D}_1} - \frac{1}{\bar{D}_2}\right).
    \end{equation*}

    The energy fairness constraint is
    $$
    \left| \frac{D_1}{\bar D_1} - \frac{D_2}{\bar D_2} \right|
    \leq (1-\alpha) \Delta_D,$$
    can be written equivalently as 
    \begin{equation*}
    |D_1\bar D_2 - D_2 \bar D_1| - (1-\alpha)(\bar D_2 - \bar D_1)\frac{|\pi-b|}{2a}  \le 0.
    \end{equation*}
    Let $T_1(\alpha):= (1-\alpha)(\bar D_2 - \bar D_1)\frac{|\pi - b|}{2a}$. Because the box constraints ($D_i\in [0,\bar D_i]$) and aggregated energy constraint ($D_1 + D_2 \le D_\mathrm{s}$) are slack, the Lagrangian with multiplier $\eta \geq 0$ for the energy fairness constraint is 
    \begin{equation*}
    \mathcal{L}
    = \pi (D_1 + D_2)
    - \sum_{i=1}^2 \left( a D_i^2 + (b - a \bar D_i) D_i \right)
    + \eta \left( T_1(\alpha) - |D_1\bar D_2 - D_2 \bar D_1| \right). 
    \end{equation*}
    
    Then, the first-order stationarity conditions yield the following. When $D_1\bar D_2 - D_2 \bar D_1 > 0$,
    \begin{equation}\label{less0optimalD}
    \begin{aligned}
        0 = \pi - 2aD_1(\alpha) - (b-a\bar D_1) - \eta \bar D_2  
            \iff D_1 (\alpha)= c+ \frac{\bar D_1}{2} - \frac{\eta}{2a} \bar D_2,\\
        0 = \pi - 2aD_2(\alpha) - (b-a\bar D_2) + \eta \bar D_1
            \iff D_2 (\alpha)= c+ \frac{\bar D_2}{2} + \frac{\eta}{2a} \bar D_1,
    \end{aligned}
    \end{equation}
    otherwise, when $D_1\bar D_2 - D_2 \bar D_1 < 0$,
    \begin{equation}
        D_1 (\alpha) = c+ \frac{\bar D_1}{2} + \frac{\eta}{2a} \bar D_2
        \quad\text{and}\quad
         D_2 (\alpha) = c+ \frac{\bar D_2}{2} - \frac{\eta}{2a} \bar D_1. \label{greater0optimalD}
    \end{equation}

    Note that the energy fairness constraint is always binding. This is because the unconstrained optimum lies on the boundary
    $|\frac{D_1}{\bar D_1} - \frac{D_2}{\bar D_2}| \le (1-\alpha)\Delta_D$ at $\alpha=0$, since $|\frac{D_1^\ast}{\bar D_1} - \frac{D_2^\ast}{\bar D_2}| = \Delta_D$. As $\alpha$ increases, the feasible region shrinks, and the unconstrained optimum lies strictly outside it for all $\alpha>0$. Under strict concavity of the objective, any slack in the constraint would allow a profitable move toward the unconstrained optimum, contradicting optimality. Therefore, the constraint binds and can be written as
    $$
    |D_1(\alpha)\bar D_2 - D_2(\alpha)\bar D_1|
    = (1-\alpha)|D_1^\ast \bar D_2 - D_2^\ast \bar D_1|.
    $$
    Moreover, since the fairness constraint implies $|D_1(\alpha)\bar D_2 - D_2(\alpha)\bar D_1|>0$ for all $\alpha \in [0,1)$ whenever $|D_1^\ast \bar D_2 - D_2^\ast \bar D_1|>0$. Mathematically,
    $$
    \operatorname{sign}(D_1(\alpha)\bar D_2 - D_2(\alpha)\bar D_1)
    =
    \operatorname{sign}(D_1^\ast \bar D_2 - D_2^\ast \bar D_1),
    \quad \forall \alpha \in [0,1].
    $$
    Therefore, we consider two cases: (i) $D_1^\ast \bar D_2 - D_2^\ast \bar D_1 > 0$ and (ii) $D_1^\ast \bar D_2 - D_2^\ast \bar D_1 < 0$. When $D_1^\ast \bar D_2 - D_2^\ast \bar D_1 = 0$, the energy fairness constraint is already satisfied, and we omit this degenerate case.

    \textbf{Scenario (1).} $D_1^\ast \bar D_2 - D_2^\ast \bar D_1 > 0$ implies that $T_1(\alpha)=(1-\alpha)\left(\bar D_2 - \bar D_1\right)c$, i.e., $c>0$, and $D_1(\alpha)\bar D_2 - D_2(\alpha)\bar D_1>0$ for all $\alpha\in[0,1)$.
    The energy fairness constraint is given by
    \begin{equation*}
    \begin{aligned}
        D_1\bar D_2(\alpha) - D_2(\alpha) \bar D_1 = T_1 (\alpha)
        \iff& c\left(\bar D_2 - \bar D_1\right) - \frac{\eta(\alpha)}{2a} Q = (1-\alpha)\left(\bar D_2 - \bar D_1\right)c\\
        \iff& \frac{\eta(\alpha)}{2a} Q = \alpha \left(\bar D_2 - \bar D_1\right)c\\
        \iff& \eta (\alpha) = \frac{2a\alpha c \left(\bar D_2 - \bar D_1\right)}{Q},
    \end{aligned}
    \end{equation*}
    where $Q := \bar D_1^2 + \bar D_2^2$.
    Plug $\eta(\alpha)$ back into \eqref{less0optimalD}, then
    \begin{equation}
        D_1(\alpha) = c+\frac{\bar D_1}{2} - \alpha c \left(\bar D_2 - \bar D_1\right)\frac{\bar D_2}{Q}
        \quad\text{and}\quad
        D_2(\alpha) = c+ \frac{\bar D_2}{2} + \alpha c \left(\bar D_2 - \bar D_1\right)\frac{\bar D_1}{Q}.
        \label{demand_solution_alpha}
    \end{equation}

    Under this scenario, differentiating $D_1(\alpha)$ and $D_2(\alpha)$ with respect to $\alpha$ gives $D_1'(\alpha) = -c (\bar D_2 - \bar D_1)\frac{\bar D_2}{Q}$ and $D_2'(\alpha) = c (\bar D_2 - \bar D_1)\frac{\bar D_1}{Q}$. Thus, as $\alpha$ increases, $D_1(\alpha)$ decreases and $D_2(\alpha)$ increases. Since $D_1'(\alpha) + D_2'(\alpha) = -\frac{c(\bar D_2- \bar D_1)^2}{Q} < 0$ and $D_1^\ast + D_2^\ast < D_\mathrm{s}$, we must have $D_1(\alpha) + D_2(\alpha) < D_\mathrm{s}$ for all $\alpha$. Let $\tilde{\alpha}_1$ denote the first value of $\alpha$ at which either $D_1(\alpha)=0$ or $D_2(\alpha)=\bar D_2$. More precisely, $\tilde{\alpha}_1:=\min(\alpha_1,\alpha_2,1)$, where
    \begin{equation*}
    \begin{aligned}
    \alpha_1&:=\inf\{\alpha|D_1(\alpha)=0\} = \frac{\left(c+\frac{\bar D_1}{2} \right)Q}{c\left(\bar D_2 - \bar D_1\right)\bar D_2}, \\
    \alpha_2&:=\inf\{\alpha|D_2(\alpha)=\bar{D}_2\} = \frac{\left(\frac{\bar D_2}{2} -c\right)Q}{c\left(\bar D_2 - \bar D_1\right)\bar D_1}.
    \end{aligned}
    \end{equation*}
    The necessary and sufficient conditions for $\alpha_1<1$ and $\alpha_2<1$ are
    \begin{equation*}
    \begin{aligned}
    \alpha_1<1 
    \iff \left(c+\frac{\bar D_1}{2}\right)Q < c\left(\bar D_2-\bar D_1\right) \bar D_2
    &\iff c\bar D_1^2+c\bar D_2^2 + \frac{\bar D_1}{2}Q < c\bar D_2^2 -c\bar D_1 \bar D_2\iff c\bar D_1^2+ \frac{\bar D_1}{2}Q < -c\bar D_1 \bar D_2,\\[4pt]
    \alpha_2<1 
    \iff \left(\frac{\bar D_2}{2} -c\right)Q < c\left(\bar D_2 - \bar D_1\right)\bar D_1
    &\iff \frac{ \bar D_2}{2} \left(\bar D_1^2 + \bar D_2^2\right) < c\bar D_1 \bar D_2 + c\bar D_2^2\\
    &\iff \bar D_1 \bar D_2 \left(c - \frac{\bar D_1}{2}\right) + \bar D_2^2\left(c - \frac{\bar D_2}{2}\right) > 0.
    \end{aligned}
    \end{equation*}
    The first inequality, $\alpha_1<1$, is infeasible since $c\bar D_1^2 + \frac{\bar D_1}{2}Q > 0$. For $\alpha_2 <1$, the condition is also infeasible since 
    \begin{equation*}
    D_i^\ast = c + \frac{\bar D_i}{2} < \bar D_i 
    \iff c - \frac{\bar D_i}{2} < 0, \quad i\in [2].
    \end{equation*}
    Hence, $\tilde{\alpha}_1 = 1$.
        
    From Lemma~\ref{Energy_CNW_proposition}, energy-based consumer Nash welfare (DCNW) can be used interchangeably with consumer Nash welfare (CNW) under the energy fairness constraint. We thus use the change of DCNW $W_\mathrm{D} (\alpha)= \log(D_1(\alpha)) + \log (D_2(\alpha))$ to present CNW change,
    \begin{equation*}
    \begin{aligned}
         W_\mathrm{D}'(\alpha) = \frac{D_1'(\alpha)}{D_1(\alpha)} + \frac{D_2'(\alpha)}{D_2(\alpha)}
        &= \frac{c\left(\bar D_2 - \bar D_1\right)\frac{\bar D_1}{Q} D_1(\alpha) 
        - c\left(\bar D_2 - \bar D_1\right) \frac{\bar D_2}{Q} D_2(\alpha) }
        {D_1(\alpha)D_2(\alpha) } \\
        &= \frac{c\left(\bar D_2 - \bar D_1\right) }{Q D_1(\alpha)D_2(\alpha) } 
        \left(\bar D_1 D_1(\alpha) - \bar D_2 D_2(\alpha)\right) < 0.
    \end{aligned}    
    \end{equation*}
    The last inequality holds because $D_1^\ast < D_2^\ast$ from Proposition~\ref{prop:profit_only}, and $D_1(\alpha)$ decreases while $D_2(\alpha)$ increases in this scenario, implying that $D_1(\alpha) < D_2(\alpha)$ for any $\alpha$. Thus, CNW also decreases as $\alpha$ increases. 
    
    The utility of each consumer is given by $U_i(\alpha) = \frac{1}{2}aD_i(\alpha)^2$ for any $i\in [2]$. Differentiating with respect to $\alpha$ yields
    \begin{equation*}
    \begin{aligned}
         U_1'(\alpha) = aD_1(\alpha)D_1'(\alpha) = -aD_1(\alpha)c\left(\bar D_2 - \bar D_1\right)\frac{\bar D_2}{Q} < 0, \\
         U_2'(\alpha) = aD_2(\alpha)D_2'(\alpha) = aD_2(\alpha)c\left(\bar D_2 - \bar D_1\right)\frac{\bar D_1}{Q} > 0. 
    \end{aligned}
    \end{equation*}
    The total consumer utility satisfies
    $$
    U'(\alpha) = aD_2(\alpha)c(\bar D_2 - \bar D_1)\frac{\bar D_1}{Q}
    -aD_1(\alpha)c(\bar D_2 - \bar D_1)\frac{\bar D_2}{Q}  
    = \frac{ac(\bar D_2 - \bar D_1)}{Q} 
    \left( D_2 (\alpha)\bar D_1 - D_1(\alpha)\bar D_2 \right) < 0,$$
    where the inequality follows from the condition implied by \textbf{Scenario~(1)},
    $D_1(\alpha)\bar D_2-D_2(\alpha)\bar D_1 >0$.
    
    As the total consumer utility decreases as $\alpha$ increases, and the aggregator profit also decreases when the fairness constraint is imposed, social welfare must decrease as well.
    
    These characterize the Regime $3$ for any $\alpha \le \tilde{\alpha}_1$,
    \[
    \boxed{
    U_1~\text{decreases},\quad
    U_2~\text{increases},\quad
    U~\text{decreases},\quad
    W_\mathrm{SW}~\text{decreases},~\text{and}~
    W_\mathrm{CNW}~\text{decreases}.\quad(\text{Regime}~$3$)
    }
    \]
    Since $\tilde{\alpha}_1 = 1$, there is no transition from this regime to another.
    
    \textbf{Scenario (2).} $D_1^\ast \bar D_2 - D_2^\ast \bar D_1 < 0$ implies that $T_1(\alpha)=-(1-\alpha)\left(\bar D_2 - \bar D_1\right)c$, i.e., $c<0$, and $D_2(\alpha)\bar D_1-D_1(\alpha)\bar D_2 <0$ for all $\alpha\in[0,1)$.
    \begin{equation}
    \begin{aligned}
        D_2(\alpha) \bar D_1 - D_1(\alpha)\bar D_2 = -T_1 (\alpha)
        \iff& -c(\bar D_2 - \bar D_1) - \frac{\eta(\alpha)}{2a} Q = -(1-\alpha)\left(\bar D_2 - \bar D_1\right)c\\
        \iff& \frac{\eta(\alpha)}{2a} Q = - \alpha \left(\bar D_2 - \bar D_1\right)c\\
        \iff& \eta (\alpha) = \frac{2a\alpha c (\bar D_2 - \bar D_1)}{Q} >0.\label{eta_alpha}
    \end{aligned}
    \end{equation}
    Plug $\eta(\alpha)$ back into \eqref{greater0optimalD}, $D_1(\alpha)$ and $D_2(\alpha)$ are the same as~\eqref{demand_solution_alpha}, with the only difference being that $c<0$. Within this regime, differentiating with respect to $\alpha$ gives 
    $$D_1'(\alpha) = -c (\bar D_2 - \bar D_1)\frac{\bar D_2}{Q} > 0\quad\text{and}\quad D_2'(\alpha) = c (\bar D_2 - \bar D_1)\frac{\bar D_1}{Q} < 0.$$
    Thus, as $\alpha$ increases, $D_1(\alpha)$ increases and $D_2(\alpha)$ decreases.
    
    Note that $c<0$ implies $D_1'(\alpha) + D_2'(\alpha) > 0$. From Proposition~\ref{prop:profit_only}, $D_1^\ast < D_2^\ast$, $D_1(\alpha)$ could be limited by either $D_1(\alpha)=\bar D_1$, $\frac{D_1(\alpha)}{\bar D_1} = \frac{D_2(\alpha)}{\bar D_2}$, or $D_1(\alpha) + D_2(\alpha) = D_\mathrm{s}$. The second case, $\frac{D_1(\alpha)}{\bar D_1} = \frac{D_2(\alpha)}{\bar D_2}$, occurs only when $\alpha = 1$, since energy fairness constraint is binding with $\frac{D_1(\alpha)}{\bar D_1} - \frac{D_2(\alpha)}{\bar D_2}= (1-\alpha)\Delta_D$, which equals zero only when $\alpha = 1$. 
    Note that we do not need to consider the case $D_2(\alpha)=0$. The minimum feasible value of $D_2(\alpha)$ is attained at $\alpha = 1$, and even in this case we have $D_2(1) = D_1(1)\tfrac{\bar D_2}{\bar D_1} > 0$, since $D_1(\alpha)$ is increasing from $D_1^\ast\ge0$.

    The threshold at which the constraint switches from slack to binding is
    $\tilde{\alpha}_2:=\min(\alpha_3,\alpha_4,1)$, which is determined by 
    \begin{equation*}
    \begin{aligned}
        \alpha_3 &:=\inf\{\alpha|D_1(\alpha)=\bar D_1\} 
        = \frac{\left(\frac{\bar D_1}{2} -c\right)Q}{-c\left(\bar D_2 - \bar D_1\right)\bar D_2},\\
        \alpha_4 &: =\inf\{\alpha|D_1(\alpha) + D_2(\alpha) = D_\mathrm{s}\} 
        = 1 + \frac{Q \left(D_1^\ast +D_2^\ast - D_\mathrm{s}\right) - c\left(\bar D_2 -\bar D_1\right)^2}{c\left(\bar D_2 - \bar D_1\right)^2}.
    \end{aligned}
    \end{equation*}
    The necessary and sufficient conditions for $\alpha_3 <1$ and $\alpha_4 < 1$ are
    \begin{equation*}
    \begin{aligned}
        \alpha_3<1 
        &\iff \left(\frac{\bar D_1}{2} -c\right)Q < -c\left(\bar D_2 - \bar D_1\right)\bar D_2
        \iff \frac{ \bar D_1}{2} \left(\bar D_1^2 + \bar D_2^2\right) < c\bar D_1^2 + c\bar D_1 \bar D_2\\
        &\iff \bar D_1 \bar D_2 \left(c - \frac{\bar D_2}{2}\right) + \bar D_1^2\left(c - \frac{\bar D_1}{2}\right) > 0,\\
        \alpha_4 < 1 
        &\iff Q \left(D_1^\ast +D_2^\ast - D_\mathrm{s}\right) - c\left(\bar D_2 -\bar D_1\right)^2 > 0.
    \end{aligned}
    \end{equation*}
    The inequality $\alpha_3<1$ is infeasible as $c<0$, while the inequality $\alpha_4<1$ could be held as $D_1^\ast +D_2^\ast - D_\mathrm{s}< 0$ and $c< 0$. Hence, $\tilde{\alpha}_2 = \min (\alpha_4, 1)$. 
    
    Regarding the performance measure, the DCNW change satisfies 
    \begin{equation*}
    \begin{aligned}
         W_\mathrm{D}'(\alpha) 
        &= \frac{c\left(\bar D_2 - \bar D_1\right) }{Q D_1(\alpha)D_2(\alpha) } 
        \left(\bar D_1 D_1(\alpha) - \bar D_2 D_2(\alpha)\right) > 0.
    \end{aligned}    
    \end{equation*}
    
    The change in the utility of each consumer is given by 
    \begin{equation*}
    \begin{aligned}
         U_1'(\alpha) = aD_1(\alpha)D_1'(\alpha) = -aD_1(\alpha)c\left(\bar D_2 - \bar D_1\right)\frac{\bar D_2}{Q} > 0, \\
         U_2'(\alpha) = aD_2(\alpha)D_2'(\alpha) = aD_2(\alpha)c\left(\bar D_2 - \bar D_1\right)\frac{\bar D_1}{Q} < 0. 
    \end{aligned}
    \end{equation*}
    The change of total consumer utility satisfies
    $$
    U'(\alpha) 
    = \frac{ac(\bar D_2 - \bar D_1)}{Q} 
    \left( D_2 (\alpha)\bar D_1 - D_1(\alpha)\bar D_2 \right) < 0,$$
    where the last inequality is determined by     
    $$ D_2 (\alpha)\bar D_1 - D_1(\alpha)\bar D_2 
    = c (\bar D_1 - \bar D_2) + \alpha c (\bar D_2 -\bar D_1) 
    = c(1-\alpha) (\bar D_1 -\bar D_2) > 0.
    $$
    
    Because total consumer utility is decreasing in $\alpha$, and the aggregator profit is decreasing under the fairness constraint, social welfare is also decreasing.
    
    These characterize the Regime $4$ for any $\alpha < \tilde{\alpha}_2$,
    \[
    \boxed{
    U_1~\text{increases},\quad
    U_2~\text{decreases},\quad
    U~\text{decreases},\quad
    W_\mathrm{SW}~\text{decreases},~\text{and}~
    W_\mathrm{CNW}~\text{increases}.\quad(\text{Regime}~$4$)
    }
    \]
    
    For $\alpha > \tilde{\alpha}_2$, the $D_1(\alpha) + D_2(\alpha) = D_\mathrm{s}$ and the system remains in Regime $4$, but the dynamics of $D_1(\alpha)$ and $D_2(\alpha)$ change. This is determined by the analysis in the next \textbf{\boldmath Case $\lambda > 0$}.

    \noindent\textbf{\boldmath Case $\lambda > 0$.} 
    Suppose $D_1^\ast + D_2^\ast = D_\mathrm{s}$ and both $D_i^\ast$ are interior solutions. Then, according to Lemma~\ref{lemma:opt_sol}, the unconstrained optimum is
    \begin{equation*}
    D_1^\ast = \frac{D_{\mathrm s}}{2} + \frac{\bar D_1 - \bar D_2}{4}
    \quad \text{and} \quad
    D_2^\ast = \frac{D_{\mathrm s}}{2} + \frac{\bar D_2 - \bar D_1}{4} = D_\mathrm{s}-D_1^\ast.
    \end{equation*}
    The initial energy ratio gap is
    \begin{equation*}
    \Delta_D 
    := \left|\frac{{D}_1^\ast}{\bar{D}_1} - \frac{{D}_2^\ast}{\bar{D}_2}\right| 
    = \left|\frac{D_\mathrm{s} (\bar D_2 - \bar D_1)}{2\bar D_2 \bar D_1} 
    + \frac{\bar D_1^2 - \bar D_2^2}{4\bar D_2 \bar D_1}\right| 
    = \frac{(\bar D_2 - \bar D_1)}{4\bar D_2 \bar D_1}\left|\bar D_1 + \bar D_2 - 2D_\mathrm{s}\right|.
    \end{equation*}

    The energy fairness condition can be expressed as 
    \begin{equation*}
    |D_1\bar D_2 - D_2 \bar D_1| - \frac{(1-\alpha)(\bar D_2 - \bar D_1)}{4} |\bar D_1 + \bar D_2 - 2D_\mathrm{s}| \le 0.
    \end{equation*}
    As in \textbf{\boldmath Case $\lambda = 0$}, the energy fairness constraint is binding. Let $T_2(\alpha) := \frac{(1-\alpha) (\bar D_2 - \bar D_1)}{4}
    |\bar D_1 + \bar D_2 - 2D_\mathrm{s}| \ge 0$. $\left(D_1(\alpha),D_2(\alpha)\right)$ can be obtained with the following conditions,
    \begin{equation*}
        D_1 + D_2 =D_\mathrm{s}
        \quad \text{and}\quad 
        |D_1\bar D_2 - D_2 \bar D_1| = T_2(\alpha) .
    \end{equation*}
     When $D_1 \bar D_2 - D_2 \bar D_1 > 0$, the optimal solution is 
    \begin{equation}
        D_1(\alpha) = \frac{\bar D_1 D_\mathrm{s} + T_2(\alpha)}{\bar D_1 + \bar D_2}
        \quad\text{and}\quad
        D_2(\alpha) = \frac{\bar D_2 D_\mathrm{s} - T_2(\alpha)}{\bar D_1 + \bar D_2}. \label{optimal_Sgeq0}
    \end{equation}
    When $D_1 \bar D_2 - D_2 \bar D_1 < 0$, the optimal solution becomes 
    \begin{equation}
        D_1(\alpha) = \frac{\bar D_1 D_\mathrm{s} - T_2(\alpha)}{\bar D_1 + \bar D_2}
        \quad\text{and}\quad
        D_2(\alpha) = \frac{\bar D_2 D_\mathrm{s} + T_2(\alpha)}{\bar D_1 + \bar D_2}. \label{optimal_Sleq0}
    \end{equation}

    There are two scenarios determined by the absolute value of $\bar D_1 + \bar D_2 - 2D_\mathrm{s}$.
    
    \textbf{Scenario (1).} $\bar D_1 + \bar D_2 < 2D_\mathrm{s}$, which indicates $T_2(\alpha) = \frac{(1-\alpha)(\bar D_2 - \bar D_1)}{4}(2D_\mathrm{s} - \bar D_1 - \bar D_2)$. Also, the $D_1 \bar D_2 - D_2 \bar D_1 > 0$ holds because the unconstrained optimum satisfies $D_1^{\ast} \bar D_2 - D_2^{\ast} \bar D_1 = \frac{\bar D_2 - \bar D_1 }{4} (2D_\mathrm{s} - \bar D_1 - \bar D_2) > 0$. From~\eqref{optimal_Sgeq0}, differentiating with respect to $\alpha$ gives 
    \begin{equation*}
        D_1'(\alpha) =- \frac{(\bar D_2 - \bar D_1) (2D_\mathrm{s} -\bar D_1 - \bar D_2)}{4(\bar D_1 + \bar D_2)} < 0
        \quad \text{and} \quad
        D_2'(\alpha) = \frac{(\bar D_2 - \bar D_1) (2D_\mathrm{s} -\bar D_1 - \bar D_2)}{4(\bar D_1 + \bar D_2)} > 0.
    \end{equation*}
    Here, $D_1'(\alpha) + D_2'(\alpha)= 0$, implying that $D_1(\alpha) + D_2(\alpha) = D_\mathrm{s}$ must hold as $\alpha$ changes. As $\alpha$ increases, the $D_1(\alpha)$ decreases and $D_2(\alpha)$ increases. The threshold $\tilde{\alpha}_3:=\min(\alpha_5,\alpha_6,1)$, at which the constraint switches from slack to binding, is determined by 
     \begin{equation*}
    \begin{aligned}
    \alpha_5&:=\inf\{\alpha|D_1(\alpha)=0\} 
    = 1 + \frac{4\bar D_1 D_\mathrm{s}}
    {(\bar D_2 - \bar D_1)(2D_\mathrm{s} - \bar D_1 -\bar D_2)}>1, \\
    \alpha_6&:=\inf\{\alpha|D_2(\alpha)=\bar{D}_2\} 
    = 1 + \frac{4 \bar D_2 (\bar D_1 +\bar D_2-D_\mathrm{s})}{(\bar D_2 - \bar D_1)(2D_\mathrm{s} - \bar D_1 -\bar D_2)} >1.
    \end{aligned}
    \end{equation*}
    The first inequality follows from the condition of \textbf{Scenario~(1)}, $\bar D_1 + \bar D_2 < 2 D_{\mathrm s}$. The second inequality follows from our modeling setting that $D_{\mathrm s} < \bar D_1 + \bar D_2$. Moreover, under the case $\lambda > 0$, the equality $D_1(\alpha) + D_2(\alpha) = D_{\mathrm s}$ holds for all $\alpha$, and therefore does not affect the threshold. Consequently, $\tilde{\alpha}_3 = 1$.
    
    The DCNW change satisfies
    \begin{equation*}
        \begin{aligned}
            W_\mathrm{D}'(\alpha) = -\frac{(\bar D_2 - \bar D_1) (2D_\mathrm{s} - \bar D_1 -\bar D_2)}{4}
            \frac{(\bar D_2 -\bar D_1) D_\mathrm{s} - 2T_2(\alpha)}
            {(\bar D_1 D_\mathrm{s} + T_2(\alpha)) (\bar D_2 D_\mathrm{s} - T_2(\alpha)) } < 0.
        \end{aligned}
    \end{equation*}
    To verify the sign, note that
    $$
    (\bar D_2 -\bar D_1) D_\mathrm{s} - 2T_2(\alpha) = (\bar D_2 -\bar D_1)
    \left(D_\mathrm{s} - \frac{1-\alpha}{2} (2D_\mathrm{s} - \bar D_1- \bar D_2)\right).$$
    Since $2D_\mathrm{s} - \bar D_1- \bar D_2>0$, the term in parentheses is minimized at $\alpha = 0$, where it equals $\sfrac{(\bar D_1 + \bar D_2)}{2}$. Hence, 
    \begin{equation}
        (\bar D_2 -\bar D_1) D_\mathrm{s} - 2T_2(\alpha) > 0,\label{inter_scenario1}
    \end{equation}
    for all $\alpha \in [0,1]$. Moreover, $\bar D_2 D_{\mathrm s}-T_2(\alpha)>0$ since $D_2(\alpha)>0$.
    
    The change in the utility of each consumer is given by
    \begin{equation*}
        \begin{aligned}
            U_1'(\alpha) = aD_1 (\alpha) D_1'(\alpha) = -\frac{a(\bar D_2 - \bar D_1) (2D_\mathrm{s} - \bar D_1 -\bar D_2)}
            {4 (\bar D_1 + \bar D_2)^2} 
            (\bar D_1 D_\mathrm{s} + T_2(\alpha)) < 0,\\
            U_2'(\alpha) = aD_2 (\alpha) D_2'(\alpha) = \frac{a(\bar D_2 - \bar D_1) (2D_\mathrm{s} - \bar D_1 -\bar D_2)}
            {4 (\bar D_1 + \bar D_2)^2} 
            (\bar D_2 D_\mathrm{s} - T_2(\alpha)) > 0,\\
        \end{aligned}
    \end{equation*}
    and the change of total consumer utility satisfies the following due to~\eqref{inter_scenario1}
    $$
    U'(\alpha) = - \frac{a(\bar D_2 - \bar D_1) (2D_\mathrm{s} - \bar D_1 -\bar D_2)}
            {4 (\bar D_1 + \bar D_2)^2} \left((\bar D_2 -\bar D_1) D_\mathrm{s} - 2T_2(\alpha)\right) < 0.
    $$
    Since the total consumer utility decreases as $\alpha$ increases, and the aggregator profit also decreases when the fairness constraint is imposed, social welfare decreases as well. These imply the Regime $3$. As $\tilde{\alpha}_3 =1$, there is no transition from this regime to another regime.
    
    \textbf{Scenario (2).} $\bar D_1 + \bar D_2 \geq 2D_\mathrm{s}$, which implies $T_2(\alpha) = \frac{(1-\alpha)(\bar D_2 - \bar D_1)}{4}( \bar D_1 + \bar D_2 - 2D_\mathrm{s})$. Also, we have $D_1 \bar D_2 - D_2 \bar D_1 < 0$ because the unconstrained optimum is $D_1^{\ast} \bar D_2 - D_2^{\ast} \bar D_1 = \frac{\bar D_2 - \bar D_1}{4} (2D_\mathrm{s} - \bar D_1 - \bar D_2) < 0$. From~\eqref{optimal_Sleq0}, differentiating with respect to $\alpha$ gives 
    \begin{equation*}
        D_1'(\alpha) = \frac{(\bar D_2 - \bar D_1) (\bar D_1 + \bar D_2 - 2D_\mathrm{s})}{4(\bar D_1 + \bar D_2)} > 0,
        \quad \text{and} \quad
        D_2'(\alpha) = -\frac{(\bar D_2 - \bar D_1) (\bar D_1 + \bar D_2 - 2D_\mathrm{s})}{4(\bar D_1 + \bar D_2)} < 0,
    \end{equation*}
    with $D_1'(\alpha) + D_2'(\alpha)= 0$. As $\alpha$ increases, $D_1(\alpha)$ increases and $D_2(\alpha)$ decreases. Denote the threshold $\tilde{\alpha}_4$ as
    \begin{equation*}
    \begin{aligned}
    \tilde{\alpha}_4&:=\inf\{\alpha|D_1(\alpha)=\bar D_1\} 
    = 1 + \frac{4 \bar D_1 (\bar D_1 +\bar D_2 - D_\mathrm{s})}{(\bar D_2 - \bar D_1)( \bar D_1 + \bar D_2 - 2D_\mathrm{s} )} >1,
    \end{aligned}
    \end{equation*}
    which implies that $\tilde{\alpha}_4 = 1$.

    The DCNW change satisfies
    \begin{equation*}
        \begin{aligned}
            W_\mathrm{D}'(\alpha) = \frac{(\bar D_2 - \bar D_1) ( \bar D_1 + \bar D_2 - 2D_\mathrm{s})}{4}
            \frac{(\bar D_2 -\bar D_1) D_\mathrm{s} + 2T_2(\alpha)}
            {(\bar D_1 D_\mathrm{s} - T_2(\alpha)) (\bar D_2 D_\mathrm{s} + T_2(\alpha)) } > 0,
        \end{aligned}
    \end{equation*}
    The change in the utility of each consumer is given by
    \begin{equation*}
        \begin{aligned}
            U_1'(\alpha) = \frac{a(\bar D_2 - \bar D_1) ( \bar D_1 + \bar D_2 - 2D_\mathrm{s})}
            {4 (\bar D_1 + \bar D_2)^2} 
            (\bar D_1 D_\mathrm{s} - T_2(\alpha)) > 0,\\
            U_2'(\alpha) = -\frac{a(\bar D_2 - \bar D_1) (\bar D_1 + \bar D_2 - 2D_\mathrm{s})}
            {4 (\bar D_1 + \bar D_2)^2} 
            (\bar D_2 D_\mathrm{s} + T_2(\alpha)) < 0.
        \end{aligned}
    \end{equation*}
    The change of total consumer utility satisfies
    $$
    U'(\alpha) = - \frac{a(\bar D_2 - \bar D_1) (\bar D_1 + \bar D_2 - 2D_\mathrm{s})}
            {4 (\bar D_1 + \bar D_2)^2} \left((\bar D_2 -\bar D_1) D_\mathrm{s} + 2T_2(\alpha)\right) < 0
    $$
    Since the total consumer utility decreases as $\alpha$ increases, social welfare also decreases. These characterize the Regime $4$. As $\tilde{\alpha}_4 =1$, there is no transition from this regime to another regime.

    \noindent\textbf{\boldmath Regimes~$1$ and $2$: $D_1^\ast  =\bar D_1$}.\\
    We use the same Lagrangian multiplier $\lambda$ and $\eta$ as in \textbf{\boldmath Regimes~$3$ and $4$}, and introduce an additional Lagrangian multiplier $\nu$ for the upper-bound constraint $D_1 \leq \bar D_1$. We consider two cases: slack ($\lambda = 0$) and binding $(\lambda > 0)$.

    \textbf{\boldmath Case $\lambda > 0$.} 
    Suppose $D_1^\ast + D_2^\ast = D_\mathrm{s}$, we directly get the solution $D_1^\ast = \bar D_1$ and $D_2^\ast = D_\mathrm{s} - \bar D_1<\bar D_2$. The initial energy ratio gap is 
    $$\Delta_D : = \left|1 - \frac{D_\mathrm{s} - \bar D_1}{\bar D_2}\right| = \frac{\bar D_1 + \bar D_2 - D_\mathrm{s}}{\bar D_2}.$$ 
    The energy fairness condition can be written as 
    $$
    \left|D_1\bar D_2 - D_2 \bar D_1\right| \leq (1-\alpha) \bar D_1 \!\left(\bar D_1 + \bar D_2 - D_\mathrm{s}\right).
    $$
    Similar to the analysis in \textbf{\boldmath Regimes~$3$ and $4$}, the sign of $D_1\bar D_2 - D_2 \bar D_1 > 0$ is determined by the unconstrained optimum, $D_1^\ast\bar D_2 - D_2^\ast \bar D_1 = \bar D_1 (\bar D_1 + \bar D_2 - D_\mathrm{s}) > 0$. The Lagrangian can therefore be written as 
    \begin{equation*}
    \begin{aligned}
         \mathcal{L} =& \pi (D_1 + D_2) - \sum_{i=1}^2 \!\left( a D_i^2 + (b - a \bar D_i) D_i \right) + \eta \left((1-\alpha) \bar D_1 \!\left(\bar D_1 + \bar D_2 - D_\mathrm{s}\right) - \left(D_1\bar D_2 - D_2 \bar D_1\right)\right) \\
         &+ \lambda \left(D_s - D_1 -D_2\right) + \nu \left(\bar D_1 - D_1\right).
    \end{aligned} 
    \end{equation*}
    Under the binding conditions, $D_2(\alpha)$ cannot increase while $D_1(\alpha) = \bar D_1$, as doing so would violate the energy fairness constraint. This implies that there is only one interior segment, and consequently $\nu = 0$. The KKT conditions then yield the optimal solution 
    \begin{equation}
        D_1(\alpha) = \frac{\bar D_1 \left(\bar D_1 + \bar D_2 + \alpha \left(D_\mathrm{s}- \bar D_1 -\bar D_2\right) \right)}
        {\bar D_1 + \bar D_2}
        \quad\text{and} \quad
        D_2(\alpha) = D_\mathrm{s} - D_1(\alpha).\label{D_1alpha_equal}
    \end{equation}
    Differentiating with respect to $\alpha$ gives
    \begin{equation}
        D_1'(\alpha) = -\frac{\bar D_1 \left(\bar D_1 +\bar D_2-D_\mathrm{s}\right)} 
        {\bar D_1 + \bar D_2} < 0
        \quad\text{and} \quad
        D_2'(\alpha) = -D_1'(\alpha) > 0,\label{D_2alpha_equal}
    \end{equation}   
    and therefore $D_1'(\alpha) + D_2'(\alpha) = 0$. Thus, as $\alpha$ increases, the $D_1(\alpha)$ decreases while $D_2(\alpha)$ increases. The threshold $\tilde{\alpha}_5:=\min(\alpha_{7},\alpha_{8},1)$, which represents the first value of $\alpha$ at which a boundary condition of the box constraint is violated, where
    \begin{equation*}
    \begin{aligned}
    \alpha_{7}&:=\inf\{\alpha|D_1(\alpha)=0\} 
    = 1 + \frac{D_\mathrm{s}}  {\bar D_1 + \bar D_2 - D_\mathrm{s}} > 1, \\
    \alpha_{8}&:=\inf\{\alpha|D_2(\alpha)=\bar{D}_2\} 
    = 1+ \frac{\bar D_2}{\bar D_1} >1.
    \end{aligned}
    \end{equation*}
    Thus, $\tilde{\alpha}_5 = 1$.

    The change in DCNW satisfies
    \begin{equation*}
        \begin{aligned}
            W_\mathrm{D}'(\alpha) = -\frac{\bar D_1 \!\left(\bar D_1 -\bar D_2-D_\mathrm{s} \right)}{\bar D_1 + \bar D_2} 
            \left( \frac{1}{D_1(\alpha)} - \frac{1}{D_2(\alpha)} \right).
        \end{aligned}
    \end{equation*}
    Since $D_1^\ast < D_2^\ast$ by Proposition~\ref{prop:profit_only}, we have $W_\mathrm{D}'(0) < 0$. As $\alpha$ increases, $\tfrac{1}{D_1(\alpha)}$ increases and $\tfrac{1}{D_2(\alpha)}$ decreases, implying that $W_\mathrm{D}'(\alpha)$ increases. The necessary and sufficient conditions for $W_\mathrm{D}'(\alpha) = 0$ is
    \begin{equation*}
        \begin{aligned}
            D_2(\alpha) - D_1(\alpha) = 0 
            &\iff D_\mathrm{s} - 2D_1(\alpha) = 0\\
            &\iff \left(\bar D_1  +\bar D_2\right)D_\mathrm{s} - 2\bar D_1 \!\left(\bar D_1 + \bar D_2 + \alpha (D_\mathrm{s} - \bar D_1 - \bar D_2) \right)= 0 \\
            &\iff \alpha = \frac{\left(\bar D_1 + \bar D_2\right) \!\left(D_\mathrm{s} - 2\bar D_1\right)}
            {2\bar D_1 \left(D_\mathrm{s} - \bar D_1 - \bar D_2\right)} > 1.
        \end{aligned}
    \end{equation*}
    The last inequality holds because $D_\mathrm{s} - 2\bar D_1 > D_\mathrm{s} - \bar D_1 - \bar D_2$ and $\bar D_1 + \bar D_2 > 2\bar D_1$. Thus, $W_\mathrm{D}'(\alpha) < 0$ for all $\alpha \in [0,1]$, implying DCNW decreases monotonically with $\alpha$. Consequently, the CNW also decreases monotonically.
    
    The change in utility of each consumer is
    \begin{equation*}
        \begin{aligned}
            U_1'(\alpha) &= -a D_1(\alpha) \frac{\bar D_1 \!\left(\bar D_1 +\bar D_2-D_\mathrm{s}\right)}{\bar D_1 + \bar D_2} < 0,\\
            U_2'(\alpha) &= a D_2(\alpha) \frac{\bar D_1 \!\left(\bar D_1 +\bar D_2-D_\mathrm{s} \right)}{\bar D_1 + \bar D_2} > 0.\\
        \end{aligned}
    \end{equation*}
    The total consumer utility satisfies
    $$
    U'(\alpha) = \frac{a \bar D_1 }{\bar D_1 + \bar D_2} \left(\bar D_1 +\bar D_2-D_\mathrm{s} \right)\left( D_\mathrm{s}-2D_1(\alpha)\right).
    $$
    From Proposition~\ref{prop:profit_only}, $D_1^\ast < D_2^\ast$ with $D_1^\ast = \bar D_1$ and $D_2^\ast = D_{\mathrm s} - \bar D_1$ implies $2\bar D_1 < D_{\mathrm s}$, and hence $U'(0) > 0$. As $\alpha$ increases, $D_{\mathrm s} -2D_1(\alpha) > 0$ and is increasing. Together with $\bar D_1 + \bar D_2 -D_{\mathrm s} > 0$, this implies that $U'(\alpha) > 0$ for all $\alpha \in [0,1]$, and therefore total consumer utility monotonically increases.

    To determine the social welfare change, we first derive the aggregator profit change, 
    $$
    \Pi'(\alpha) = D_1'(\alpha) \left(\pi - 2a D_1(\alpha) - b + a\bar D_1 \right)
    + D_2'(\alpha) \left(\pi - 2a D_2(\alpha) - b + a\bar D_2 \right).
    $$
    According to~\eqref{D_2alpha_equal} and $D_1(\alpha) + D_2(\alpha) = D_\mathrm{s}$, $\Pi'(\alpha)$ can be simplified to 
    $$
    \Pi'(\alpha) = aD_1'(\alpha) \left( 2 D_\mathrm{s} - 4D_1(\alpha) + \bar D_1 - \bar D_2 \right).
    $$ 
    Then, the change in social welfare satisfies
    \begin{equation*}
        \begin{aligned}
            W_\mathrm{SW}'(\alpha) &= U'(\alpha) + \Pi'(\alpha)
            = aD_1'(\alpha)\left(2D_1(\alpha) - D_\mathrm{s} \right)
            + aD_1'(\alpha) \left( 2 D_\mathrm{s} - 4D_1(\alpha) + \bar D_1 - \bar D_2 \right)\\
            &=aD_1'(\alpha) \left( D_\mathrm{s} - 2D_1(\alpha) + \bar D_1 -\bar D_2 \right).
        \end{aligned}
    \end{equation*}
    As $D_1'(\alpha) < 0$, the sign of $W_\mathrm{SW}'(\alpha)$ is determined by the second term. According to~\eqref{D_1alpha_equal}, the second term is equivalent to
    $$
    D_\mathrm{s} - 2D_1(\alpha) + \bar D_1 -\bar D_2
    = \frac{(\bar D_1 + \bar D_2 -D_\mathrm{s})
    \left((2\alpha -1 ) \bar D_1 - \bar D_2 \right)}{\bar D_1 + \bar D_2} < 0,
    $$
    because $\bar D_1 + \bar D_2 -D_\mathrm{s}> 0$ and $(2\alpha -1 ) \bar D_1 - \bar D_2 < 0$ for all $\alpha \in [0,1]$. Thus, $W_\mathrm{SW}'(\alpha) > 0$ for all $\alpha \in [0,1]$.
    
    These characterize the Regime~$2$ for any $\alpha\le\tilde \alpha_5$.
     \[
    \boxed{
    U_1~\text{decreases},\quad
    U_2~\text{increases},\quad
    U~\text{increases},\quad
    W_\mathrm{SW}~\text{increases},~\text{and}~
    W_\mathrm{CNW}~\text{decreases}.\quad(\text{Regime}~$2$)
    }
    \]

    As $\tilde{\alpha}_5 =1$, there is no transition from this regime to another regime.
    
     \textbf{\boldmath Case $\lambda = 0$.}  Suppose $D_1^\ast + D_2^\ast < D_\mathrm{s}$. The initial energy ratio gap is 
     $$\Delta_D : = \left|1 - \frac{\pi - b}{2a\bar D_2} - \frac{1}{2}\right| = \left|\frac{a\bar D_2 - (\pi - b)}{2a\bar D_2}\right|.$$ 
     The energy-fairness condition is expressed as 
     $$
     \left|D_1\bar D_2 - D_2 \bar D_1\right| 
     \leq \frac{(1-\alpha)\bar D_1}{2a} \left|a\bar D_2 - (\pi-b)\right|.
     $$
     Since $D_2^\ast = \frac{\pi - b}{2a} + \frac{\bar D_2}{2} < \bar D_2$, it follows that
    \begin{equation}
    \pi - b < a \bar D_2.
    \label{eq:ordering_pi_b}
    \end{equation}
     Similar to \textbf{\boldmath Case $\lambda =0$} in Regimes~$3$ and $4$, $D_1\bar D_2 - D_2 \bar D_1 > 0$. The sign of $D_1\bar D_2 - D_2\bar D_1$ is determined by the unconstrained optimum $D_1^{\ast}\bar D_2 - D_2^{\ast} \bar D_1 = \frac{\bar D_1 (a\bar D_2 - (\pi - b))}{2a}> 0$. Let $T_3(\alpha) := \frac{(1-\alpha) \bar D_1}{2a}(a\bar D_2 - (\pi-b))$. The Lagrangian can be regarded as
    \begin{equation*}
        \mathcal{L}
    = \pi (D_1 + D_2)
    - \sum_{i=1}^2 \left( a D_i^2 + (b - a \bar D_i) D_i \right)
    + \eta \left( T_3(\alpha)- (D_1\bar D_2 - D_2 \bar D_1) \right) + \nu (\bar D_1 - D_1).
    \end{equation*}
    The first-order stationarity conditions yield the optimal solution
    \begin{equation}
        D_1(\alpha) = c+ \frac{\bar D_1}{2} - \frac{\eta}{2a}\bar D_2 - \frac{\nu}{2a}
        \quad \text{and} \quad
        D_2(\alpha) = c + \frac{\bar D_2}{2} + \frac{\eta}{2a} \bar D_1.
        \label{eq:frist_order_energy_fairness}
    \end{equation}
    Due to the upper bound constraint, there are two segments: a cap binding segment ($\nu >0$) and an interior segment ($\nu =0$).    
    
    \textbf{Cap binding segment.} Set $D_1(\alpha) = \bar D_1 $. The KKT conditions give the optimal solution
    \begin{equation}
        D_1(\alpha) = \bar D_1\quad \text{and}\quad
        D_2(\alpha) = \bar D_2 - \frac{(1-\alpha)(a\bar D_2 - (\pi - b)) }{2a}. \label{cap_bindingD_2alpha}
    \end{equation}
    Based on \eqref{eq:frist_order_energy_fairness} and \eqref{cap_bindingD_2alpha}, the corresponding $\nu(\alpha)$ is 
    \begin{equation*}
    \nu(\alpha) = \pi - b-a\bar D_1 - \frac{\alpha \bar D_2 (a \bar D_2 - (\pi - b))}{\bar D_1}.
    \end{equation*}
    Note that since $D_1^\ast = \bar D_1$, we have $c+ \frac{\bar D_1}{2} > \bar D_1$, which implies
    \begin{equation}
        \nu(0) =\pi-b-a\bar D_1> 0.
         \label{nualpha}
    \end{equation}
    
    Differentiating $D_1(\alpha)$ and $D_2(\alpha)$ with respect to $\alpha$ gives
    \begin{equation*}
        D_1'(\alpha) = 0
        \quad \text{and}\quad
        D_2'(\alpha) = \frac{a\bar D_2 - (\pi - b)}{2a} > 0.
    \end{equation*}
    Thus, as $\alpha$ increases, $D_1(\alpha)$ remains constant, $D_2(\alpha)$ increases, and $\nu(\alpha)$ decreases. Let $\tilde{\alpha}_6$ denote the first value of $\alpha$ at which $D_1(\alpha) + D_2(\alpha) = D_\mathrm{s}$ or $\nu(\alpha) = 0$. More precisely, $\tilde{\alpha}_6=\min(\alpha_9,\alpha_{10},1)$, where
    \begin{equation}
    \begin{aligned}
    \alpha_9&:=\inf\{\alpha | D_1(\alpha) + D_2(\alpha) = D_\mathrm{s}\} 
    = 1 - \frac{2 a(\bar D_1 + \bar D_2-D_\mathrm{s})}{a\bar D_2 - (\pi - b)} < 1,\\
    \alpha_{10}& := \inf\{\alpha |\nu(\alpha)=0\} 
    = \frac{\bar D_1(\pi - b - a\bar D_1)}{\bar D_2 (a\bar D_2 - (\pi - b))} > 0,
    \end{aligned} \label{alpha7}
    \end{equation}
    The condition, $\alpha_9<1$ is because $a\bar{D}_2-(\pi-b)>0$ by~\eqref{eq:ordering_pi_b} and $\bar D_1 + \bar D_2>D_\mathrm{s}$.


    Within the current regime, the DCNW change satisfies
    $$
    W_\mathrm{D}'(\alpha) = \frac{D_2'(\alpha)}{D_2(\alpha)} > 0.$$
    The change in the utility of each consumer is given by
    \begin{equation*}
        U_1'(\alpha) = 0
        \quad \text{and}\quad
        U_2'(\alpha) = aD_2(\alpha) D_2'(\alpha) > 0 
    \end{equation*}
    The change of total consumer utility is $U'(\alpha) = U_2'(\alpha) >0$.

    Then the change in social welfare satisfies
    \begin{equation*}
        \begin{aligned}
            W_\mathrm{SW}'(\alpha) &= U'(\alpha) + \Pi'(\alpha)
            = aD_2(\alpha) D_2'(\alpha)
            + D_2'(\alpha) \left(\pi - 2a D_2(\alpha) - b + a\bar D_2 \right)\\
            &=D_2'(\alpha) \left( \pi - aD_2(\alpha) - b + a\bar D_2  \right).
        \end{aligned}
    \end{equation*}
    According to~\eqref{cap_bindingD_2alpha}, we have 
    $$
    \pi - aD_2(\alpha) - b + a\bar D_2 
    = \frac{(1+ \alpha)(\pi - b) + (1-\alpha)a\bar D_2}{2} > 0,
    $$
    because $\pi - b > a\bar D_1 $ from~\eqref{nualpha}. As $D_2'(\alpha) > 0$, the $W_\mathrm{SW}'(\alpha) > 0$ must hold for all $\alpha \in [0,1]$.
    
    These characterize the Regime~$1$ for any $\alpha < \tilde{\alpha}_6$
    \[
    \boxed{
    U_1~\text{remains constant},\quad
    U_2~\text{increase},\quad
    U~\text{increases},\quad
    W_\mathrm{SW}~\text{increases},~\text{and}~
    W_\mathrm{CNW}~\text{increases}.\quad(\text{Regime}~$1$)
    }
    \]

    As $\tilde{\alpha}_6 < 1$, when $\alpha > \tilde{\alpha}_6$, the system dynamics pattern is determined by the sign of $\alpha_9 - \alpha_{10}$,
    \begin{equation*}
        \begin{aligned}
            \alpha_9 - \alpha_{10} = 1 - \frac{2a(\bar D_1 + \bar D_2 - D_\mathrm{s}) + \frac{\bar D_1 (\pi - b - a \bar D_1)}{\bar D_2}}{a\bar D_2 - (\pi - b)}.
        \end{aligned}
    \end{equation*}
    As $a\bar D_2 - (\pi - b) > 0$ since $D_2'(\alpha) > 0$, the sign of $\alpha_9 - \alpha_{10}$ is determined by
    \begin{equation*}
        \begin{aligned}
            \alpha_9 - \alpha_{10} \lessgtr 0
            \iff 2a (\bar D_1 + \bar D_2 - D_\mathrm{s} ) + \frac{\bar D_1 (\pi - b - a\bar D_1)}{\bar D_2 } \lessgtr a\bar D_2 - (\pi - b)
        \end{aligned}
    \end{equation*}
    The left-hand side increases as $D_{\mathrm{s}}$ decreases or $\pi-b$ increases, whereas the right-hand side increases as $\bar D_2$ increases or $\pi-b$ decreases. Since these effects may dominate in either direction depending on the parameter values, $\alpha_9-\alpha_{10}$ can be either positive or negative.
    
    If $\alpha_9 - \alpha_{10} < 0$, the threshold is determined by $\alpha_9$ and $D_1(\alpha) + D_2(\alpha) = D_\mathrm{s}$ at $\alpha=\tilde \alpha_9$. Therefore, the system dynamics follow \textbf{\boldmath Case $\lambda >0$}, which is characterized by Regime~$2$. Otherwise, when $\alpha_9 - \alpha_{10} > 0$, the threshold is determined by $\alpha_{10}$ and the condition $D_1(\alpha) + D_2(\alpha) < D_{\mathrm s}$. In this case, the system dynamics follow the \textbf{Interior segment}, which characterizes Regime~$3$, introduced below.
    
    \textbf{Interior segment.} 
    In this segment, $D_1(\alpha) < \bar D_1$, and hence $\nu = 0$. From the analysis of the \textbf{Cap binding segment}, the system transitions to the interior segment when $\alpha_9 > \alpha_{10}$. Since $\alpha_9 < 1$ by~\eqref{alpha7}, it must be that $\alpha_{10} < 1$, which is equivalent to
    \begin{equation}
        \bar D_2 \bigl(a\bar D_2 - (\pi - b)\bigr)
        + \bar D_1 \bigl(a\bar D_1 - (\pi - b)\bigr) > 0.
        \label{alpha8le1}
    \end{equation}

    The KKT conditions then give the optimal solution
    \begin{equation*}
    \begin{aligned}
         D_1(\alpha) &= c + \frac{\bar D_1}{2} - \frac{c\bar D_2 (\bar D_2 - \bar D_1)}{Q}
         +  \frac{(1-\alpha)\bar D_1 \bar D_2 (a\bar D_2 - (\pi - b))}{2aQ},\\
         D_2(\alpha) &= c + \frac{\bar D_2}{2} + \frac{c\bar D_1 (\bar D_2 - \bar D_1)}{Q}
         - \frac{(1-\alpha)\bar D_1^2 (a\bar D_2 - (\pi - b))}{2aQ},
    \end{aligned} 
    \end{equation*}
    Differentiating with respect to $\alpha$ gives
    \begin{equation*}
        D_1'(\alpha) = -\frac{\bar D_1 \bar D_2 (a\bar D_2 - (\pi - b))}{2aQ} < 0
        \quad \text{and}\quad
        D_2'(\alpha) = \frac{\bar D_1^2 (a\bar D_2 - (\pi - b))}{2aQ} > 0,
    \end{equation*}
    which implies that as $\alpha$ increases, $D_1(\alpha)$ decreases and $D_2(\alpha)$ increases. Let $\tilde{\alpha}_7$ denote the first value of $\alpha$ at which either $D_1(\alpha) = 0$ or $D_2(\alpha) = \bar D_2$. More precisely, $\tilde{\alpha}_7=\min(\alpha_{11},\alpha_{12},1)$, where
    \begin{equation*}
    \begin{aligned}
    \alpha_{11}&:=\inf\{\alpha|D_1(\alpha)=0\} = 1 + \frac{\bar D_1 (a\bar D_1 + \pi - b) + \bar D_2 (a\bar D_2 + \pi - b)}
    {\bar D_2 (a\bar D_2 - (\pi - b))} > 1, \\
    \alpha_{12}&:=\inf\{\alpha|D_2(\alpha)=\bar{D}_2\} = 1 
    + \frac{\bar D_2 \left(\bar D_1 (a\bar D_1 - (\pi - b)) + \bar D_2 (a\bar D_2 - (\pi - b)) \right)}
    {\bar D_1^2 (a\bar D_2 - (\pi - b))}>1.
    \end{aligned}
    \end{equation*}
    Here, $\alpha_{11} > 1$ holds because $a\bar D_i + \pi - b>0$ for any $i\in [2]$ by assumption, while the condition $\alpha_{12} > 1$ follows from~\eqref{alpha8le1}. Thus, $\tilde{\alpha}_7 = 1$.

    Within the current regime, the DCNW change satisfies
    \begin{equation*}
        W_\mathrm{D}'(\alpha) = \frac{a \bar D_2 - (\pi -b)}{2aQ}  \frac{\bar D_1 (\bar D_1 D_1(\alpha) - \bar D_2 D_2(\alpha))}{D_1(\alpha) D_2(\alpha)} < 0,
    \end{equation*}
    where the inequality follows from $a\bar D_2 - (\pi - b) > 0$ by \eqref{eq:ordering_pi_b}, $D_1(\alpha) < D_1^\ast < D_2^\ast < D_2(\alpha)$, and $\bar D_1 < \bar D_2$.
    
    Each consumer’s utility change is given by
    \begin{equation*}
        U_1'(\alpha) = aD_1(\alpha)D_1'(\alpha) < 0
        \quad \text{and}\quad
        U_2'(\alpha) = aD_2(\alpha) D_2'(\alpha) > 0, 
    \end{equation*}
    and the total consumer utility change is 
    $$U'(\alpha) = \frac{a \bar D_2 - (\pi -b)}{2Q} \bar D_1 \left( \bar D_1 D_2(\alpha) - \bar D_2 D_1(\alpha) \right) < 0,$$
    because $D_1(\alpha)\bar D_2 - D_2(\alpha) \bar D_1 > 0$ under the case $\lambda=0$.
    Since total consumer utility decreases as $\alpha$ increases, social welfare also decreases. These imply the Regime~$3$. As $\tilde{\alpha}_7 = 1$, there is no transition from this regime to another regime.\hfill \Halmos
\end{proof}
\medskip

\begin{proof}{Proof of Theorem~\ref{thm:price_fairness}.}
We analyze the two–consumer optimization problem under the price fairness criterion.
\begin{equation*}
\begin{aligned}
\max_{p_1, p_2}&~\pi (D_1+D_2) - p_1D_1 - p_2D_2\\
\text{s.t. }&~ D_i=\left(\frac{p_i-b}{a}+\bar{D}_i\right)\mathbb{I}\left\{b-a\bar{D}_i \le p_i \le b\right\} + \bar{D}\,\mathbb{I}\left\{p_i > b\right\},~\forall i \in [2],\\
&~D_i\in[0,\bar{D}_i],~\forall i\in[2],\\
&~ D_1 + D_2 \le D_\mathrm{s},\\
&~ |p_1 - p_2| \le (1 - \alpha)|p_1^\ast - p_2^\ast|.
\end{aligned}
\end{equation*}

Let $D_i^\ast$ (resp. $D_i(\alpha)$) and $p_i^\ast$ (resp. $p_i(\alpha)$) denote the no-fairness (resp. $\alpha$-price fairness) optimal provided energy and price of consumer~$i$, and define the initial price gap $\Delta_p:=|p_1^\ast-p_2^\ast|$. Because each $D_i$ is piecewise linear in $p_i$, the analysis proceeds by distinguishing whether each optimal provided energy lies in the interior $(0,\bar D_i)$ or $D_1^\ast$ lies on a boundary.

Because the regimes are connected sequentially (Regime~$1 \rightarrow 2 \rightarrow 3$), we begin by analyzing the final regime (Regime~$3$) and then proceed backward to Regime~$2$ and Regime~$1$, so that each preceding case builds upon the characterization of the subsequent, more constrained one.

\noindent\textbf{Regime~$\boldsymbol{3}$: Boundary condition when $\boldsymbol{D_1^\ast=0}$.}\\
Consider the boundary active set $D_1^\ast=0$. The one-dimensional maximization problem reduces to
\begin{equation*}
\max_{D_2 \in \left[0,D_{\mathrm{s}}\right]}\;
\pi D_2 - \big(a D_2^2 + (b - a\bar D_2) D_2\big).
\end{equation*}
The upper bound excludes $\bar D_2$, since feasibility of the optimal solution $(D_1^\ast,D_2^\ast)=(0,D_{\mathrm s})$ requires $D_2^\ast=D_{\mathrm s}\le \bar D_2$.
The unconstrained maximizer is 
\begin{equation*}
D_2^{\ast}=\min\!\left(\frac{\pi - b + a\bar D_2}{2a},\,D_{\mathrm{s}}\right)
\quad\text{and}\quad
p_2^{\ast} = aD_2^{\ast} + b - a\bar D_2.
\end{equation*}
The corresponding profit is
\begin{equation*}
\Pi^{\ast}
= \Pi(0,D_2^{\ast})
= \pi D_2^{\ast}
- \big(a(D_2^{\ast})^2 + (b - a\bar D_2)D_2^{\ast}\big).
\end{equation*}
For instance, if $D_2^{\ast}=\tfrac{\pi - b + a\bar D_2}{2a}$, then $\Pi^{\ast}=\tfrac{(\pi - b + a\bar D_2)^2}{4a}$.

In price space, the condition $D_1^\ast=0$ implies $p_1^\ast\le b-a\bar D_1$. 
If $p^{\ast}\le b-a\bar D_1$, we can set $p_1(\alpha)=p^{\ast}$ for all~$\alpha\in(0,1]$, which achieves perfect price fairness. To verify this,
\begin{equation*}
\begin{aligned}
b-a\bar D_1-p^{\ast}
&= b-a\bar D_1-(aD_2^{\ast}+b-a\bar D_2)\\
&= a(\bar D_2-\bar D_1)-aD_2^{\ast}\\
&= a(\bar D_2-\bar D_1)
     -a\!\left(\frac{\pi-b-\lambda}{2a}+\frac{\bar D_2}{2}\right)\\
&= a\!\left(\frac{\bar D_2}{2}-\frac{\bar D_1}{2}\right)
   -a\!\left(\frac{\pi-b-\lambda}{2a}+\frac{\bar D_1}{2}\right)\\
&\ge a\!\left(\frac{\bar D_2}{2}-\frac{\bar D_1}{2}\right)\ge0,
\end{aligned}
\end{equation*}
where the third equality substitutes $D_2^{\ast}$ from~\eqref{eq:clip} and the first inequality follows from the condition 
$\tfrac{\pi-b-\lambda}{2a}+\tfrac{\bar D_1}{2}\le0$ implied by~\eqref{eq:clip} and $\bar D_1=0$.

Therefore, we can set
\begin{equation*}
p_1(\alpha)=p^{\ast}=aD_2^{\ast}+b-a\bar D_2,
\end{equation*}
which ensures perfect price fairness. Since When $D_1(\alpha)=0$ and $p_2(\alpha)$ is fixed at its value at $\alpha=0$, neither provided energy nor utilities change as $\alpha$ varies. Hence,
\begin{equation*}
\boxed{
U_1,~U_2,~U,~W_{\mathrm{CNW}},~\text{and}~W_{\mathrm{SW}}~\text{remain constant.}
\quad(\text{Regime~3})
}
\end{equation*}

The expressions above are formulated in terms of the initial optimal solution $(D_1^\ast, D_2^\ast)$. 
If this regime follows a transition from Regime~$2$, the corresponding quantities can be expressed as $(D_1(\alpha), D_2(\alpha))$, where $\alpha$ denotes the point at which the regime change occurs. In this case, the qualitative behavior of the solution and the relative ordering of the threshold values remain unchanged.

\noindent\textbf{\boldmath Regime~$2$: $0 < D_i^\ast < \bar D_i$.}\\
In this case, let $\lambda \ge 0$ denote the Lagrangian multiplier on the aggregated energy constraint $D_1 + D_2 \le D_{\mathrm{s}}$. As in Lemma~\ref{lemma:opt_sol}, we analyze two cases: slack $(\lambda = 0)$ and binding $(\lambda > 0)$.

\textbf{\boldmath Case $\lambda = 0$.}
Suppose $D_1^\ast + D_2^\ast < D_\mathrm{s}$ and $0 < D_i^\ast < \bar D_i$ for all $i \in [2]$. 
In this interior regime, the unconstrained maximizer satisfies 
$D_i^\ast = \frac{\pi - b}{2a} + \frac{\bar D_i}{2}$ and the corresponding price is $p_i^\ast = a D_i^\ast + (b - a \bar D_i)$. Substituting the first expression into the second gives 
\begin{equation*}
p_i^\ast 
= a \left( \frac{\pi - b}{2a} + \frac{\bar D_i}{2} \right) + (b - a \bar D_i)
= \frac{\pi + b}{2} - \frac{a \bar D_i}{2}.
\end{equation*}
Hence, the initial price gap is 
\begin{equation*}
\Delta_p 
:= |p_1^\ast - p_2^\ast| 
= \frac{a}{2}\,|\bar D_2 - \bar D_1|
= \frac{a}{2}(\bar D_2 - \bar D_1) > 0.
\end{equation*}

The price fairness condition $|p_1 - p_2| \le (1 - \alpha)\Delta_p$ is equivalently expressed in provided energy space as 
\begin{equation*}
D_1(\alpha) - D_2(\alpha) - \frac{1 + \alpha}{2}(\bar D_1 - \bar D_2) \le 0.
\end{equation*}
Let $\tilde{\alpha}$ denote the first value of $\alpha$ at which the interior solution breaks down, i.e., when $D_1(\alpha)=0$. For $\alpha<\tilde{\alpha}$, Because both provided energy ($D_i\in[0,\bar{D}_i]$) and aggregated energy ($D_1 + D_2 \le D_\mathrm{s}$) constraints are slack, the Lagrangian with multiplier $\eta \ge 0$ for the price fairness constraint is
\begin{equation*}
\mathcal{L}
= \pi (D_1 + D_2)
- \sum_{i=1}^2 \!\left( a D_i^2 + (b - a \bar D_i) D_i \right)
+ \eta \left( - D_1 + D_2 + \frac{1 + \alpha}{2}(\bar D_1 - \bar D_2) \right).
\end{equation*}
Then, the first-order conditions yield
$D_1'(\alpha) = -\frac{\Delta_p}{2a}$ and $D_2'(\alpha) = +\frac{\Delta_p}{2a}$,
so
\begin{equation}
\frac{d}{d\alpha}\big(D_1(\alpha)+D_2(\alpha)\big)=0.
\label{eq:net_change_0}
\end{equation}
Hence $D_1(\alpha)+D_2(\alpha)$ is constant on this segment, if $D_1^\ast+D_2^\ast<D_{\mathrm s}$ at $\alpha=0$, the aggregated energy constraint remains slack for $\alpha<\tilde{\alpha}$.

At $\alpha = 0$, the unconstrained optimum reaches the boundary of the constraint $|p_1 - p_2| \le (1 - \alpha)\Delta_p$ since $|p_1^\ast - p_2^\ast| = \Delta_p$. 
For $\alpha > 0$, the feasible region $\{(p_1,p_2): |p_1 - p_2| \le (1 - \alpha)\Delta_p\}$ shrinks, while the unconstrained optimum would require $|p_1 - p_2| = \Delta_p$, which lies strictly outside the feasible set. Thus, any candidate optimum with slack, $|p_1 - p_2| < (1 - \alpha)\Delta_p$, must lie in the interior of the feasible set. But an interior point cannot be optimal because moving in the direction of the (unconstrained) maximizer strictly increases profit under the strict concavity of the objective function. This contradicts optimality, so the price fairness constraint must bind for all $\alpha > 0$. Consequently,
\begin{equation*}
D_1(\alpha) - D_2(\alpha) = \frac{1 + \alpha}{2}(\bar D_1 - \bar D_2)
\quad\text{and}\quad
\eta(\alpha) = \frac{a}{2}(\bar D_2 - \bar D_1)\alpha > 0.
\end{equation*}

Substituting this into the first-order conditions yields the closed-form solution
\begin{equation}
D_1(\alpha) 
= \frac{\pi - b}{2a} + \frac{\bar D_1}{2} - \frac{\Delta_p}{2a}\,\alpha,
\quad\text{and}\quad
D_2(\alpha) 
= \frac{\pi - b}{2a} + \frac{\bar D_2}{2} + \frac{\Delta_p}{2a}\,\alpha,
\label{eq:demand_changes_price_eq}
\end{equation}
and the multiplier increases linearly as $\eta(\alpha) = \Delta_p \alpha$.

We now take a closer look at $\tilde{\alpha}$, the first value of $\alpha$ at which the interior solution ceases to hold. Since $D_1(\alpha)$ decreases and $D_2(\alpha)$ increases while their sum remains constant, there are two possible breakdown points, either $D_1$ hits $0$ or $D_2$ reaches its upper bound $\bar D_2$. Accordingly, we define
\begin{equation*}
\begin{aligned}
\alpha_1&:=\inf\{\alpha|D_1(\alpha)=0\}=\frac{2a\,D_1^\ast}{\Delta_p}
=\frac{2\big((\pi-b)+a\bar D_1\big)}{a(\bar D_2-\bar D_1)},\\
\alpha_2&:=\inf\{D_2(\alpha)=\bar{D}_2\}=\frac{2a\,(\bar D_2-D_2^\ast)}{\Delta_p}
=\frac{2\big(a\bar D_2-(\pi-b)\big)}{a(\bar D_2-\bar D_1)}.
\end{aligned}
\end{equation*}

We must also account for Regime~$3$, which corresponds to price fairness being satisfied outside the interior price domain $p_i \in [b - a\bar D_i, b]$. The expression in \eqref{eq:demand_changes_price_eq} characterizes the price fairness condition only within this interior region and therefore provides merely one candidate solution. Regime~$3$, by contrast, yields an alternative candidate in which price fairness is attained at a different admissible boundary of the price space. Thus, to determine which regime prevails, all feasible price fairness candidates across the admissible price ranges must be compared.

Therefore, if $(D_1(\alpha), D_2(\alpha))$ yields the same profit as that generated in Regime~$3$, then the system transitions from the current regime to Regime~$3$. Hence $\alpha_3$ is defined by
$$
\alpha_3 := \{\alpha \mid \Pi(D_1(\alpha), D_2(\alpha)) = \Pi^{(3)}\},
$$
where $\Pi^{(3)} = \max_{D_2 \in [0,\,\min(\bar D_2,\,D_s)]} \Pi(0, D_2)$.

When $\lambda = 0$, Regime~$3$ attains its maximum at $D_2 = D_2^\ast$, so $\Pi^{(3)} = \Pi(0, D_2^\ast) = \pi D_2^\ast - p_2^\ast D_2^\ast$. Therefore, $\alpha_3$ satisfies
$$
\pi(D_1(\alpha_3) + D_2(\alpha_3)) - p_1(\alpha_3) D_1(\alpha_3) - p_2(\alpha_3) D_2(\alpha_3)
= \pi D_2^\ast - p_2^\ast D_2^\ast,
$$
and substituting the expressions for $D_i(\alpha)$, $D_i^\ast$, and $p_i^\ast$ yields
$$
\alpha_3 = \frac{\sqrt{2}\,\big((\pi - b) + a \bar D_1\big)}{a(\bar D_2 - \bar D_1)}=\frac{\alpha_1}{\sqrt{2}}<\alpha_1.
$$
Thus, the threshold $\alpha_1$ can be dismissed.

Likewise, we can rule out the threshold $\alpha_2$, because the condition $\alpha_2 < 1$ is equivalent to
\begin{equation*}
    \alpha_2<1 
    \iff 2\big(a\bar D_2-(\pi-b)\big) < a(\bar D_2-\bar D_1)
   \iff \frac{\pi-b}{2a} > \frac{\bar D_1+\bar D_2}{4}.
\end{equation*}
The latter inequality is infeasible when $\lambda = 0$, since it would require
$$D_1^\ast=\frac{\pi-b}{2a}+\frac{\bar{D}_1}{2}>\frac{3\bar{D}_1+\bar{D}_2}{4}>\bar{D}_1,$$
which contradicts the feasibility condition $D_1^\ast \le \bar D_1$.

On the other hand, the condition $\alpha_3<1$ is equivalent to
\begin{equation*}
\alpha_3 < 1
\iff D_1^\ast=\frac{\pi-b}{2a}+\frac{\bar{D}_1}{2}<\frac{\bar{D}_2-\bar{D}_1}{2\sqrt{2}},
\end{equation*}
which is feasible because the right-hand side is positive.

In conclusion, for $\alpha< \alpha_3$, differentiating with respect to $\alpha$ gives 
$D_1'(\alpha) = -\frac{\Delta_p}{2a}$ 
and $D_2'(\alpha) = +\frac{\Delta_p}{2a}$. 
Hence
$$U_1'(\alpha) = \frac{d}{d\alpha}\left(\frac{1}{2}aD_1(\alpha)^2\right) =a D_1(\alpha) D_1'(\alpha) = -\frac{\Delta_p}{2} D_1(\alpha) < 0\quad 
\text{and}\quad U_2'(\alpha) = a D_2(\alpha) D_2'(\alpha) = +\frac{\Delta_p}{2} D_2(\alpha) > 0,$$
where the prime denotes differentiation with respect to $\alpha$. Similarly,
\begin{equation*}
U'(\alpha) = U_1'(\alpha)+U'_2(\alpha)= \frac{\Delta_p}{2}\!\left( D_2(\alpha) - D_1(\alpha) \right)
\quad\text{and}\quad
W'_{\mathrm{CNW}}(\alpha) = \frac{U_1'(\alpha)}{U_1(\alpha)}+\frac{U_2'(\alpha)}{U_2(\alpha)}=\frac{\Delta_p}{a}\!\left( \frac{1}{D_2(\alpha)} - \frac{1}{D_1(\alpha)} \right).
\end{equation*}
Indeed,
\begin{equation}
D_2(\alpha)-D_1(\alpha)
= \frac{\bar D_2 - \bar D_1}{2} + \frac{\Delta_p}{a}\,\alpha
= \frac{\bar D_2 - \bar D_1}{2}\,(1+\alpha) > 0,
\label{eq:D2_minus_D1}
\end{equation}
because $\bar D_2 > \bar D_1$ and $\alpha \in [0,1]$. It implies that $U'(\alpha)>0$ and $W'_{\mathrm{CNW}}(\alpha)<0$.

Lastly, social welfare can be written as 
$$W_{\mathrm{SW}}=\sum_{i\in[2]}\left(\pi D_i-C_i(D_i)\right)=\sum_{i\in[2]}\left[\pi(D_1+D_2)-\frac{1}{2}aD_i^2-(b-a\bar{D}_i)D_i\right].$$ 
Differentiating $W_{\mathrm{SW}}$ with respect to $\alpha$ yields 
$$W'_{\mathrm{SW}}(\alpha)
=\sum_{i\in[2]} D'_i(\alpha)\left(\pi-b + a(\bar{D}_i - D_i(\alpha))\right)
= aD_2'(\alpha)\left(\bar{D}_2 - \bar{D}_1 + D_1(\alpha) - D_2(\alpha)\right)
= aD_2'(\alpha)(\bar{D}_2-\bar{D}_1)\frac{1-\alpha}{2} > 0,$$ 
where the second equality follows from $D_2'(\alpha)=D_1'(\alpha)$, the third equality follows from~\eqref{eq:D2_minus_D1}, and the inequality holds since $D_2'(\alpha)>0$ and $\bar{D}_2>\bar{D}_1$. Therefore, in this regime, social welfare is strictly increasing in $\alpha$. In summary,
\[
\boxed{
U_1~\text{decreases},\quad
U_2~\text{increases},\quad
U~\text{increases},~W_{\mathrm{CNW}}~\text{decreases}~\text{and}~
W_{\mathrm{SW}}~\text{increases}.\quad(\text{Regime}~$2$)
}
\]

For $\alpha \ge \alpha_3$, the system transitions into Regime~$3$.

\textbf{\boldmath Case $\lambda > 0$.}
Suppose $D_1^\ast + D_2^\ast = D_{\mathrm{s}}$ and both $D_i^\ast$ are interior solutions, i.e.,
\begin{equation*}
D_1^\ast = \frac{D_{\mathrm s}}{2} + \frac{\bar D_1 - \bar D_2}{4}
\quad \text{and} \quad
D_2^\ast = \frac{D_{\mathrm s}}{2} + \frac{\bar D_2 - \bar D_1}{4}.
\end{equation*}
This case can be regarded as \textbf{\boldmath Case~$\lambda=0$}, where the aggregated energy constraint $D_1 + D_2 \le D_{\mathrm{s}}$ does not affect the solution, 
since the adjustments in $D_1$ and $D_2$ with respect to $\alpha$ occur in opposite directions and with equal magnitude, as shown in~\eqref{eq:net_change_0}. Consequently, the total allocation $D_1 + D_2$ remains unchanged, implying that the binding condition $D_1 + D_2 = D_{\mathrm{s}}$ has no substantive effect on the dynamics in the current case. Hence, the present case is identical to Case~$\lambda=0$.

\noindent\textbf{\boldmath Regime~$1$: $D_1^\ast=\bar{D}_1$.}\\
\textbf{\boldmath Case $\lambda = 0$.}
Suppose $D_1^\ast + D_2^\ast < D_{\mathrm s}$, $D_1^\ast = \bar D_1$, and $0 < D_2^\ast < \bar D_2$.
This condition requires that the unconstrained interior optimizer for consumer~$1$ attains or exceeds its upper bound because 
$D_1^\ast = \min\left(\frac{\pi - b}{2a} + \frac{\bar D_1}{2},\, \bar D_1\right)$.
Hence,
\begin{equation}
\frac{\pi - b}{2a} + \frac{\bar D_1}{2} \ge \bar D_1
\quad\Longleftrightarrow\quad
\frac{\pi - b}{2a} \ge \frac{\bar D_1}{2}.
\label{eq:cond_1b}
\end{equation}

At $\alpha = 0$, the corresponding prices are
\begin{equation*}
p_1^\ast = b
\quad\text{and}\quad
p_2^\ast = \frac{\pi + b}{2} - \frac{a \bar D_2}{2}.
\end{equation*}
Therefore,
\begin{equation*}
p_1^\ast - p_2^\ast 
= b - \left(\frac{\pi + b}{2} - \frac{a \bar D_2}{2}\right)
= a\left(\frac{\bar D_2}{2}-\frac{\pi - b}{2a}\right) > 0.
\end{equation*}
Hence, the initial price gap is
\begin{equation*}
\Delta_p 
:= |p_1^\ast - p_2^\ast|
= p_1^\ast - p_2^\ast
= a\!\left(\frac{\bar D_2}{2}-\frac{\pi - b}{2a}\right) > 0.
\end{equation*}
When $D_1^\ast=\bar D_1$, two cases may arise. In the first case, $p_1(\alpha)\ge b$ and $p_1(\alpha)$ is strictly increasing in $\alpha$. In the second case, $p_1(\alpha)$ is non-increasing in $\alpha$. We next compare these two cases.

\textbf{\boldmath (i) $p_1(\alpha)$ is strictly increasing.} Suppose that $p_1(\alpha)$ is strictly increasing with $\alpha$, so that
$p_1(\alpha)=p_1^\ast+\Delta p(\alpha)$ with $\Delta p(\alpha)>0$. To satisfy $\alpha$-price fairness, prices must satisfy
$$p_1(\alpha)-p_2(\alpha)=(1-\alpha)\Delta_p,$$
which implies
$p_2(\alpha)=p_2^\ast+\Delta p(\alpha)+\alpha \Delta_p$. The corresponding provided energy are given by $D_1(\alpha)=\bar D_1$ and $D_2(\alpha)=D_2^\ast +\frac{\Delta p(\alpha)+\alpha \Delta_p}{a}$.

The resulting profit difference between the unconstrained case and the $\alpha$-fairness case is
\begin{equation*}
\begin{aligned}
\Pi(0)-\Pi(\alpha)
=\,&\pi\left(\bar D_1 + D_2^\ast\right)-p_1^\ast \bar D_1-p_2^\ast D_2^\ast -\pi\left(\bar D_1 + \left(D_2^\ast +\frac{\Delta p(\alpha)+\alpha \Delta_p}{a}\right)\right)\\
&+\left(p_1^\ast + \Delta p(\alpha)\right)\bar D_1 
+ \left(p_2^\ast+\Delta p(\alpha)+\alpha\Delta_p\right)
\left(D_2^\ast +\frac{\Delta p(\alpha)+\alpha \Delta_p}{a}\right)\\
=\,&\frac{1}{a}\,\Delta p(\alpha)^2
+\left(
\bar D_1+D_2^*
+\frac{p_2^*-\pi+2\alpha\Delta_p}{a}
\right)\Delta p(\alpha)
+\alpha\Delta_p D_2^*
+\frac{\alpha\Delta_p(\alpha\Delta_p+p_2^*-\pi)}{a}\\
=\,& \frac{1}{a}\,\Delta p(\alpha)^2
+\left(
\bar D_1
+\frac{2\alpha\Delta_p}{a}
\right)\Delta p(\alpha)
+\frac{\alpha^2\Delta_p^2}{a}.
\end{aligned}
\end{equation*}
where the third equality follows from $D_2^\ast = \frac{\pi-b}{2a}+\frac{\bar D_2}{2}$ and $p_2^\ast = aD_2^\ast + b- a\bar D_2$.

For a given $\alpha>0$, the minimizer with respect to $\Delta p(\alpha)$ is given by
$$\Delta p(\alpha)^\ast=-\alpha\Delta_p-\frac{a}{2}\bar D_1<0.$$
However, since $\Delta p(\alpha)$ is constrained to be positive, the optimal choice is obtained by
$$\Delta p(\alpha)^\ast =\epsilon,
$$
where $\epsilon>0$ is an arbitrarily small constant that enforces $\Delta p(\alpha)>0$. The corresponding profit difference is
\begin{equation}
\Pi(0)-\Pi(\alpha)\big|_{\Delta p(\alpha)^\ast}
= \frac{1}{a}\epsilon ^2
+\left(
\bar D_1
+\frac{2\alpha\Delta_p}{a}
\right)\epsilon 
+\frac{\alpha^2\Delta_p^2}{a}.
\label{eq:profit_diff_1}
\end{equation}


\textbf{\boldmath (ii) $p_1(\alpha)$ is non-increasing.} 
In this case, we may formulate the problem directly in terms of $(D_1,D_2)$ instead of $(p_1,p_2)$, because $p_i(\alpha)\le b$ until it reaches the boundary $b-a\bar D_i$, which corresponds to $D_i(\alpha)=0$. We begin from the Lagrangian
\begin{equation*}
\mathcal{L}
= \pi(D_1 + D_2)
- \sum_{i=1}^2 \big(a D_i^2 + (b - a\bar D_i) D_i \big)
+ \nu_1(\bar D_1 - D_1)
+ \eta\left(aD_1 - aD_2 -a\bar{D}_1+a\bar{D}_2 - (1-\alpha)\Delta_p\right),
\end{equation*}
where $\nu_1 \ge 0$ corresponds to the upper bound constraint $D_1 \le \bar D_1$, and $\eta \ge 0$ is the multiplier associated with the binding price fairness constraint. This Lagrangian characterization remains valid for all $\alpha$ prior to the point at which any boundary condition changes.

The first-order conditions are
\begin{subequations}
\begin{align}
    \frac{\partial \mathcal{L}}{\partial D_1} 
    &= \pi - (2a D_1 + b - a\bar D_1) - \nu_1 + a\eta = 0,\label{eq:first_order}\\[2pt]
    \frac{\partial \mathcal{L}}{\partial D_2} 
    &= \pi - (2a D_2 + b - a\bar D_2) - a\eta = 0.\label{eq:second_order}
\end{align}
\end{subequations}

When the upper bound is active ($D_1 = \bar D_1$), \eqref{eq:first_order} gives $\nu_1 = \pi - (b + a\bar D_1) + a\eta$. From \eqref{eq:second_order} we obtain
\begin{equation*}
\eta = \frac{\pi - (2a D_2 + b - a\bar D_2)}{a}.
\end{equation*}

Since the price fairness constraint binds, it follows that
\begin{equation*}
p_1(\alpha)-p_2(\alpha)=b-\left(aD_2(\alpha)+b-a\bar D_2\right) = (1 - \alpha)\Delta_p.
\end{equation*}
With $D_1 = \bar D_1$, this gives
\begin{equation*}
D_2(\alpha) = \bar D_2 - \frac{(1 - \alpha)\Delta_p}{a}=D_2^\ast-\frac{\alpha\Delta_p}{a}.
\end{equation*}

Then, the profit loss incurred under the $\alpha$-fairness constraint, i.e., $\Pi(0)-\Pi(\alpha)$, is given by
\begin{equation}
\pi\left(\bar D_1 + D_2^\ast\right)-p_1^\ast \bar D_1-p_2^\ast D_2^\ast 
-\pi\left(\bar D_1 + \left(D_2^\ast -\frac{\alpha \Delta_p}{a}\right)\right)+p_1^\ast\bar D_1 
+ \left(p_2^\ast-\alpha\Delta_p\right)
\left(D_2^\ast -\frac{\alpha \Delta_p}{a}\right)\\
=\frac{\alpha^2\Delta_p^2}{a}.
\label{eq:profit_diff_2}
\end{equation}
This profit loss is smaller than that in case~(i), where $p_1(\alpha)$ is strictly increasing. Indeed, subtracting~\eqref{eq:profit_diff_1} from~\eqref{eq:profit_diff_2} yields
$$\frac{\alpha^2\Delta_p^2}{a} - \left(\frac{1}{a}\epsilon ^2
+\left(
\bar D_1
+\frac{2\alpha\Delta_p}{a}
\right)\epsilon 
+\frac{\alpha^2\Delta_p^2}{a}\right)=-\frac{1}{a}\epsilon ^2
-\left(
\bar D_1
+\frac{2\alpha\Delta_p}{a}
\right)\epsilon<0.$$
Therefore, the case in which $p_1(\alpha)$ is strictly increasing is dominated and can be safely ignored.

We next characterize $\eta(\alpha)$ and $\nu(\alpha)$. Substituting $D_2(\alpha)$ into the expression for $\eta$ yields
\begin{equation*}
\eta(\alpha)= \frac{\pi - \big(2a(\bar D_2 - \tfrac{(1 - \alpha)\Delta_p}{a}) + b - a\bar D_2\big)}{a}= \frac{\pi - b - a\bar D_2}{a} + \frac{2(1 - \alpha)\Delta_p}{a}.
\end{equation*}
Finally, substituting $\eta(\alpha)$ into \eqref{eq:first_order}, gives
\begin{equation*}
\nu_1(\alpha)= \pi - (b + a\bar D_1) + a\,\eta(\alpha) = 2\pi - 2b - a(\bar D_1 + \bar D_2) + 2(1 - \alpha)\Delta_p.
\end{equation*}

Since $D_1=\bar{D}_1$, $\nu_1(0) > 0$. Both dual variables decrease linearly in $\alpha$ because
\begin{equation*}
\eta'(\alpha) = -\frac{2\Delta_p}{a} < 0\quad\text{and}
\quad
\nu_1'(\alpha) = -2\Delta_p < 0.
\end{equation*}
Thus, $D_1(\alpha)$ is binding to $\bar{D}_1$ for small $\alpha$. The associated dual variable 
$\nu_1(\alpha)$ declines until it reaches zero at
\begin{equation*}
{\alpha}_4
= 1 - \frac{2b + a(\bar D_1 + \bar D_2) - 2\pi}{2\Delta_p}.
\end{equation*}
The if-and-only-if condition of ${\alpha}_4<1$ is
$$\frac{\pi - b}{2a} < \frac{\bar D_1 + \bar D_2}{4},$$
which is feasible under~\eqref{eq:cond_1b}.

We also need to consider the point $D_1(\alpha) + D_2(\alpha) = D_{\mathrm{s}}$. Let 
$$
\alpha_5 := \inf\{\alpha \mid D_1(\alpha) + D_2(\alpha) = D_{\mathrm{s}}\}.
$$
Then, $\alpha_5$ satisfies
\begin{equation*}
\begin{aligned}
D_1(\alpha_5) + D_2(\alpha_5) = D_{\mathrm{s}} \iff&
\bar{D}_1 + \frac{p_2(\alpha) - b}{a} + \bar{D}_2 = D_{\mathrm{s}} \iff \bar{D}_1 + \frac{p_2^\ast + \alpha_5 \Delta_p - b}{a} + \bar{D}_2 = D_{\mathrm{s}} \\[3pt]
\iff& \bar{D}_1 + D_2^\ast + \frac{\alpha_5 \Delta_p}{a} = D_{\mathrm{s}} \iff \bar{D}_1 + \bar{D}_2 + \frac{\pi - b}{2a} + \frac{\alpha_5 \Delta_p}{a} = D_{\mathrm{s}}.
\end{aligned}
\end{equation*}
Hence,
$$
\alpha_5 = \frac{a(D_{\mathrm{s}} - \bar{D}_1 - \bar{D}_2) - \tfrac{\pi - b}{2}}{\Delta_p}.
$$

The if-and-only-if condition for $\alpha_5 > 0$ is
$$
D_{\mathrm{s}} > \bar{D}_1 + \bar{D}_2 + \frac{\pi - b}{2a}
= \frac{\bar{D}_1}{2} + \bar{D}_2 + \frac{\bar{D}_1}{2} + \frac{\pi - b}{2a}
\ge \frac{\bar{D}_1}{2} + \bar{D}_2 + \bar{D}_1,
$$
where the last inequality follows from $\tfrac{\pi - b}{2a} + \tfrac{\bar{D}_1}{2} \ge \bar{D}_1$, which is a necessary condition for $D_1^\ast = \bar{D}_1$. However, this contradicts the assumption $D_{\mathrm{s}} < \bar{D}_1 + \bar{D}_2$. Therefore, we can rule out the case where the aggregated energy constraint becomes binding.

For $\alpha \le \alpha_{4}$, we have $D_1(\alpha)=\bar{D}_1$, and only $D_2(\alpha)$ increases. Consequently, $U_1$ remains constant while $U_2$ increases, implying that the total consumer utility $U$ rises and, as a result, $W_{\mathrm{CNW}}$ also increases.

Regarding social welfare, we obtain
$$W'_{\mathrm{SW}}(\alpha)
=\sum_{i\in[2]} D'_i(\alpha)\left(\pi-b + a(\bar{D}_i - D_i(\alpha))\right)
= aD_2'(\alpha)\left(\bar{D}_2- D_2(\alpha)\right)> 0,$$ 
where the inequality holds because $D_2'(\alpha) > 0$ and $D_2(\alpha) \le D_2(\alpha_4) <\bar{D}_2.$ Note that $D_2(\alpha)$ cannot reach $\bar{D}_2$, because doing so would imply $D_1^\ast + D_2^\ast = \bar{D}_1 + \bar{D}_2$, which exceeds the system capacity $D_{\mathrm{s}}$ under the assumption $\bar{D}_1 + \bar{D}_2 > D_{\mathrm{s}}$. In summary,
\[
\boxed{
U_1~\text{remains constant},\quad
U_2~\text{increases},\quad
U~\text{increases},~W_{\mathrm{CNW}}~\text{increases}~\text{and}~
W_{\mathrm{SW}}~\text{increases}.\quad(\text{Regime}~$1$)
}
\]
For $\alpha>\alpha_4$ the upper bound on consumer~$1$ releases and the path coincides with Regime~$2$.


\textbf{\boldmath Case $\lambda > 0$.} Suppose $D_1^\ast + D_2^\ast = D_{\mathrm{s}}$, $D_1^\ast = \bar{D}_1$, and $D_2^\ast = D_{\mathrm{s}} - \bar{D}_1$. 
This case is similar to the one where $D_1^\ast = \bar{D}_1$ in Case~$\lambda=0$. 
However, the aggregated energy constraint is now binding, i.e., $D_1^\ast + D_2^\ast = D_{\mathrm{s}}$. 
In this situation, $D_1$ cannot remain at the same value, because doing so would require $D_2^\ast$ to increase in order to satisfy the fairness constraint, which would violate the aggregated energy constraint. Therefore, this case coincides with Regime~$2$ with Case~$\lambda=0$, as the total provided energy $D_1+D_2$ remains constant.\hfill \Halmos
\end{proof}

\medskip

\begin{proof}{Proof of Theorem~\ref{thm:utility_fairness}.}
If $p_i \in [b-a\bar D_i,b]$, then utility can be expressed solely as a function of $D_i$. In this case,
\begin{equation}
U_i
= p_i D_i - C(D_i)
= (aD_i + b - a\bar D_i)D_i
- \left(\frac{1}{2}aD_i^2 + (b-a\bar D_i)D_i\right)
= \frac{1}{2}aD_i^2.
\label{eq:utility_linear_response}
\end{equation}
By Lemma~\ref{lemma:equivalent_model}, the optimization problem can be solved with respect to $D_i^\ast$, and the corresponding price $p_i$ can be recovered using \eqref{linear_response}. If $D_i^\ast=0$, the utility is still given by \eqref{eq:utility_linear_response}, since $U_i = p_i\cdot0 - C(0) = 0$. When $D_i^\ast=\bar D_i$, the profit-only optimal price is $p_i=b$, and the resulting utility is again given by \eqref{eq:utility_linear_response}. Therefore, in all cases at $\alpha=0$, utility admits the representation \eqref{eq:utility_linear_response}.

Since $D_2^\ast > D_1^\ast$, as established in the proof of Proposition~\ref{thm:energy_fairness}, $U_2^\ast=\tfrac{1}{2}a{D_2^\ast}^2 > \tfrac{1}{2}a{D_1^\ast}^2 = U_1^\ast$. 
Accordingly, the utility fairness constraint for any $\alpha$ can be written as
$$|U_2-U_1| \le (1 - \alpha) \Delta_U,$$
where $\Delta_U:=U_2^\ast-U_1^\ast=\frac{1}{2}a\left({D_2^\ast}^2-{D_1^\ast}^2\right)$.

In the following proof, we analyze each regime defined by the relevant boundary conditions. For example, Regimes~$3$ corresponds to the cases in which $D_1^\ast=\bar D_1$. Within each regime (i.e., as long as the set of binding constraints remains unchanged), the optimal solution varies continuously with $\alpha$. This follows from the continuity of the KKT system with respect to $\alpha$ when the active constraint set is fixed. Continuity of $D_i(\alpha)$ implies continuity of utilities $U_i(\alpha)$ for $i\in\{1,2\}$. Define the utility gap $\Delta_U(\alpha):=U_2(\alpha)-U_1(\alpha)$. At $\alpha=0$, the optimal solution satisfies $\Delta_U(0)>0$, while by construction $\Delta_U(1)=0$. Since $\Delta_U(\alpha)$ is continuous on $[0,1]$, the intermediate value theorem implies that $\Delta_U(\alpha)$ cannot change sign without crossing zero. Therefore, the utility ordering $U_2(\alpha)\ge U_1(\alpha)$ is preserved for all $\alpha\in[0,1]$.

We reuse $\lambda \ge 0$ denote the multiplier for the aggregated energy constraint $D_1 + D_2 \le D_{\mathrm s}$, $\mu_i \ge 0$ for the lower bound $-D_i \le 0$, and $\nu_i \ge 0$ for the upper bound $D_i - \bar D_i \le 0$, for $i \in [2]$. 
Let $\eta \ge 0$ be the multiplier for the utility fairness constraint.

\noindent\textbf{\boldmath Regimes~$3$ and $4$: $D_1^\ast = \bar D_1$.}\\
As in Lemma~\ref{lemma:opt_sol}, we analyze two cases: slack $(\lambda = 0)$ and binding $(\lambda > 0)$.

\textbf{\boldmath Case $\lambda = 0$.} When $D_1^\ast = \bar D_1$, there are two conceivable adjustment paths as $\alpha$ increases:  
(1) both $D_1$ and $D_2$ strictly decrease, or  
(2) $D_1$ remains fixed at its upper bound $\bar D_1$ while $D_2$ changes.

However, path~(1) is impossible.  
For the sake of contradiction, suppose that both $D_1$ and $D_2$ decrease as $\alpha$ increases. More specifically, let $D_1(\alpha) = \bar D_1 - \Delta D_1$ and $D_2(\alpha) = D_2^\ast - \Delta D_2$, where small $\Delta D_i > 0$ for $i\in[2]$. They satisfy
\begin{equation*}
\frac{1}{2} a D_2(\alpha)^2 - \frac{1}{2} a D_1(\alpha)^2
\le (1 - \alpha)\Delta_U.
\end{equation*}
Now consider the alternative feasible solution $(\bar D_1, D_2(\alpha))$. It satisfies the utility fairness constraint because
\begin{equation*}
\frac{1}{2} a D_2(\alpha)^2-\frac{1}{2} a \bar D_1^2
< \frac{1}{2} a D_2(\alpha)^2 - \frac{1}{2} a D_1(\alpha)^2
\le (1 - \alpha)\Delta_U,
\end{equation*}
where the first inequality holds since $D_1(\alpha) < \bar D_1$. Moreover, for fixed $D_2(\alpha)$, the profit $\Pi(\cdot, D_2(\alpha))$ is strictly concave in $D_1$ and the unconstrained maximizer $D_1^\ast = \tfrac{\pi - b}{2a} + \tfrac{\bar D_1}{2}$ is unique. Since $D_1(\alpha) < \bar D_1 \le D_1^\ast $, increasing $D_1$ from $D_1(\alpha)$ to $\bar D_1$ strictly increases profit. Thus, $\Pi(\bar D_1, D_2(\alpha)) > \Pi(D_1(\alpha), D_2(\alpha))$.
This yields a contradiction, therefore, choosing $D_1(\alpha) = \bar D_1 - \Delta D_1 < \bar D_1$ cannot be optimal.

Now, in path (2), let $D_2(\alpha) = D_2^\ast + \Delta D_2$ and $p_2^\ast=aD_2^\ast+b-a\bar{D}_2$. Define $p_1(\alpha) = b + \Delta p_1$, where $\Delta p_1\ge0$ because $D_1(\alpha)=\bar{D}_1$ is fixed at its upper bound. 
The difference between the aggregator profit without the utility fairness constraint and the profit under $\alpha$–utility fairness can therefore be written as
\begin{equation}
\begin{aligned}
\Delta \Pi 
&= \Pi(0) - \Pi(\alpha) \\
&= \Bigl[\pi\bigl(\bar D_1 + D_2^\ast\bigr) - b \bar D_1 - p_2^\ast D_2^\ast\Bigr] - \Bigl[\pi\bigl(\bar D_1 + D_2^\ast + \Delta D_2\bigr) - (b + \Delta p_1)\bar D_1 - (p_2^\ast + a \Delta D_2)\bigl(D_2^\ast + \Delta D_2\bigr)\Bigr]\\
&=a (\Delta D_2)^2
+ \bigl(a D_2^\ast + p_2^\ast - \pi\bigr)\Delta D_2
+ \bar D_1 \Delta p_1\\
&=a (\Delta D_2)^2
+ \bigl(2a D_2^\ast + b - a\bar D_2 - \pi\bigr)\Delta D_2
+ \bar D_1 \Delta p_1.
\end{aligned}
\label{eq:profit_loss_utility}
\end{equation}

The utility function is defined as $U_i := p_i D_i - C_i(D_i)$, which simplifies to $\frac{1}{2} a D_i^2$ when $p_i \in [b - a\bar{D}_i, b]$. Assuming $\Delta D_2\le\bar{D}_2-D_2^\ast$, the difference in total consumer utility is 
\begin{equation*}
\begin{aligned}
U_2(\alpha)-U_1(\alpha)
&= \frac{1}{2}a\left(D_2^\ast+\Delta D_2\right)^2
   - \bigl[(p_1+\Delta p_1)\bar{D}_1 - C(\bar{D}_1)\bigr] \\
&= \Bigl(\tfrac{1}{2}a(D_2^\ast)^2 - p_1 \bar D_1 + C(\bar D_1)\Bigr)
   + a D_2^\ast \Delta D_2
   + \frac{a}{2}(\Delta D_2)^2
   - \bar D_1 \Delta p_1 \\
&= (U_2^\ast - U_1^\ast)
   + a D_2^\ast \Delta D_2
   + \frac{a}{2}(\Delta D_2)^2
   - \bar D_1 \Delta p_1 \\
&= \Delta_U
   + a D_2^\ast \Delta D_2
   + \frac{a}{2}(\Delta D_2)^2
   - \bar D_1 \Delta p_1.
\end{aligned}
\end{equation*}
The utility fairness constraint is binding. Suppose, for contradiction, that the optimal solution $(\Delta p_1^\ast, \Delta D_2^\ast)$ does not bind this constraint. Then, by slightly decreasing $\Delta p_1^\ast$, the utilities can still satisfy the utility fairness constraint, while strictly decreasing the aggregator's profit loss~\eqref{eq:profit_loss_utility}. This contradicts the optimality of the slack solution, implying that the utility fairness constraint must bind.
Therefore, the fairness constraint can be written as
$$a D_2^\ast \Delta D_2+ \frac{a}{2}(\Delta D_2)^2- \bar D_1 \Delta p_1+\alpha\Delta_U= 0.$$

Since we want to maximize the profit, we equivalently minimize the profit loss $\Delta \Pi$. Thus, we consider
\begin{equation}
\begin{aligned}
\min_{\Delta p_1,\, \Delta D_2}\quad 
a(\Delta D_2)^2 + \bigl(2a D_2^\ast + b - a\bar D_2 - \pi\bigr)\Delta D_2 + \bar D_1 \Delta p_1 \\
\text{s.t.}\quad 
a D_2^\ast \Delta D_2 + \frac{a}{2}(\Delta D_2)^2 - \bar D_1 \Delta p_1 + \alpha\Delta_U = 0, \\
\Delta D_2 \in\left[- D_2^\ast,\bar D_2 - D_2^\ast\right], \\
D_2^\ast + \Delta D_2 + \bar D_1 \le D_{\mathrm s}, \\
\Delta p_1 \ge 0,
\end{aligned}
\label{eq:u1}
\end{equation}
where the second box constraint follows from the box constraint $D_2 \in [0,\bar D_2]$, the third constraint corresponds to the aggregated energy constraint, and the last constraint reflects the case in which $D_1$ is fixed at its upper bound $\bar D_1$, so that $p_1$ can only stay the same or increase.

The problem~\eqref{eq:u1} can be reformulated by eliminating $\Delta p_1$ based on the first constraint. 
\begin{subequations}
\begin{align}
\min_{\Delta D_2}\quad 
& \frac{3a}{2}(\Delta D_2)^2 
  + \bigl(3a D_2^\ast + b - a\bar D_2 - \pi\bigr)\Delta D_2
  + \alpha\,\Delta_U \nonumber\\
\text{s.t.}\quad
& a D_2^\ast \Delta D_2 
  + \frac{a}{2}(\Delta D_2)^2 
  + \alpha\Delta_U \ge 0, \label{subeq:utility_fairness}\\
& \Delta D_2 \in\left[- D_2^\ast,\bar D_2 - D_2^\ast\right], \label{subeq:D2_1}\\
& D_2^\ast + \Delta D_2 + \bar D_1 \le D_{\mathrm s}.\label{subeq:D2_2}
\end{align}
\label{eq:u1_reduced}
\end{subequations}
The objective function is strictly convex, and its unconstrained minimizer is
\begin{equation*}
\Delta D_2^{\mathrm{uc}}
= -\frac{3a D_2^\ast + b - a\bar D_2 - \pi}{3a}
= \frac{\pi - b + a\bar D_2}{3a} - D_2^\ast,
\end{equation*}
which satisfies \eqref{subeq:D2_1} and \eqref{subeq:D2_2}.

Let $r^{-}(\alpha)$ and $r^{+}(\alpha)$ denote the roots of the fairness constraint~\eqref{subeq:utility_fairness},
\begin{equation*}
r^{-}(\alpha)
= -D_2^\ast -
\sqrt{(D_2^\ast)^2 - \frac{2\alpha\Delta_U}{a}}
\quad\text{and}\quad
r^{+}(\alpha)
= -D_2^\ast +
\sqrt{(D_2^\ast)^2 - \frac{2\alpha\Delta_U}{a}},
\end{equation*}
Therefore, for the unconstrained minimizer $\Delta D_2^{\mathrm{uc}}$ to be feasible, for $\alpha \in [0,1]$, it must satisfy either (i) $(D_2^\ast)^2 - \frac{2\alpha \Delta_U}{a} \le 0$ or (ii) $\Delta D_2^{\mathrm{uc}} \le r^{-}(\alpha)$ or $\Delta D_2^{\mathrm{uc}} \ge r^{+}(\alpha)$. 
Condition (i) holds when
$$\alpha \ge \frac{a(D_2^\ast)^2}{2\Delta_U}=\frac{(D_2^\ast)^2}{(D_2^\ast)^2-\bar D_1^2}>1.$$
Since $\alpha \in [0,1]$, this condition cannot be satisfied and is therefore not observable.
The former case in (ii) is impossible, as it would imply $\Delta D_2^{\mathrm{uc}} < r^{-}(\alpha) \le -D_2^\ast$, which violates~\eqref{subeq:D2_1}.
Thus, the optimal solution can be written as
\begin{equation*}
\Delta D_2^\ast(\alpha)=
\begin{cases}
r^+(\alpha) &\text{if}\quad \Delta D_{2}^{uc}  < r^+(\alpha),\\
\Delta D_{2}^{uc} &\text{if}\quad\Delta D_{2}^{uc} \in [r^+(\alpha),\bar D_2-D_2^\ast]\\
\bar D_2-D_2^\ast & \text{otherwise.}
\end{cases}
\end{equation*}
The upper bound $\bar D_2 - D_2^\ast$ follows from the upper bound on $\Delta D_2$ in~\eqref{eq:u1_reduced}. More specifically, we consider
$$\min\left\{\bar D_2 - D_2^\ast,D_{\mathrm s} - D_2^\ast - \bar D_1\right\}=\bar D_2 - D_2^\ast,$$
where the equality follows from the condition $\bar D_1 + \bar D_2 > D_{\mathrm s}$.

We now analyze the conditions $\Delta D_2^{\mathrm{uc}} \ge r^{+}(\alpha)$ and $\Delta D_2^{\mathrm{uc}} \le \bar D_2 - D_2^\ast$. These conditions hold if and only if
\begin{equation}
\begin{aligned}
\Delta D_2^{\mathrm{uc}} \ge r^{+}(\alpha) &\iff \alpha\ge\frac{a}{2\Delta_U}
\left[
(D_2^\ast)^2
- \left(\frac{a\bar D_2 + \pi - b}{3a}\right)^2
\right]=: \alpha_1,\\
\Delta D_2^{\mathrm{uc}} \le \bar D_2 - D_2^\ast &\iff \bar D_2 \ge \frac{\pi-b}{2a}.
\end{aligned}
\label{eq:threshold_R3_to_R4}
\end{equation}
Note that, when $\lambda=0$, the last inequality always holds because $D_2^\ast=\frac{\pi-b}{2a}+\frac{\bar D_2}{2}\le \bar D_2$ implies 
$\bar D_2\ge \frac{\pi-b}{2a}$.
Then, we can write the optimal solution $D_2^\ast(\alpha)$ as follows.
\begin{equation}
D_2^\ast(\alpha)=D_2^\ast+\Delta D_2^\ast(\alpha)=
\begin{cases}
D_2^\ast + r^+(\alpha)=\sqrt{(D_2^\ast)^2 - \frac{2\alpha\Delta_U}{a}} &\text{if}\quad \alpha< \alpha_1\\[5pt]
D_2^\ast+\Delta D_{2}^{uc}=\frac{\pi-b+a\bar D_2}{3a}  &\text{otherwise.}
\end{cases}
\label{eq:opt_D_for_R3R4}
\end{equation}
We can also derive $p_1^\ast(\alpha)$ by getting $\Delta p_1$ from the equality constraint in~\eqref{eq:u1}.
\begin{equation*}
p_1^\ast(\alpha)=p_1^\ast+\Delta p_1^\ast(\alpha)=
\begin{cases}
p_1^\ast &\text{if}\quad \alpha < \alpha_1\\[5pt]
p_1^\ast + \Delta \tilde p (\alpha)  &\text{otherwise,}
\end{cases}
\end{equation*}
where
$$\Delta \tilde p (\alpha)=\frac{
aD_2^\ast \Delta D_2^{\mathrm{uc}}
+\frac{a}{2}\big(\Delta D_2^{\mathrm{uc}}\big)^2
+\alpha\Delta_U
}{\bar D_1}> 0\quad \text{for all}~\alpha>\alpha_1.$$
Therefore, if $\alpha < \alpha_1$, only $D_2$ adjusts, whereas if $\alpha \ge \alpha_1$, only $p_1$ adjusts.

We now verify whether the transition can occur. The condition $\alpha_1 > 0$ holds if and only if
\begin{equation}
    3aD_2^\ast > a\bar D_2+\pi-b.
    \label{eq:alpha_1_greater_then_0_utility}
\end{equation}
This condition always holds because $2aD_2^\ast = a\bar D_2 + \pi - b > 0$. The condition $\alpha_1\le1$ holds if and only if
\begin{equation}
    a\bar D_2+\pi-b\ge 3a\bar D_1.
    \label{eq:alpha1_less_1}
\end{equation}
When $\lambda=0$, this condition simplifies to 
$$a\bar D_2+\pi-b\ge 3a\bar D_1 \iff  \frac{2}{3}D_2^\ast\ge\bar{D}_1.$$
This always holds because
$$\frac{2}{3}D_2^\ast = \frac{2}{3}\left(\frac{\pi-b}{2a}+\frac{\bar D_2}{2}\right)=\frac{2}{3}\frac{\pi-b}{2a} + \frac{\bar D_2}{3}\ge \frac{2}{3}\frac{\pi-b}{2a} + \frac{\bar D_1}{3}\ge\frac{2}{3}\bar D_1+\frac{\bar D_1}{3}=\bar D_1,$$
where the last inequality follows $\frac{\pi-b}{2a}\ge\frac{\bar D_1}{2}$. Therefore, Regime~$3$ cannot be terminate regime when $\lambda=0$.


Overall, when $\alpha\le\alpha_1$, only $D_2$ decreases. This implies that $U_2$ decreases. In contrast, $D_1$ remains constant, and hence $U_1$ also remains constant. Consequently, both the total consumer utility ($U$) and the CNW ($U_{\mathrm{CNW}}$) decrease. Social welfare decreases since $U$ and $\Pi$ decrease. In summary,
\[
\boxed{
U_1~\text{remains constant},\quad
U_2~\text{decreases},\quad
U~\text{decreases},\quad
W_{\mathrm{CNW}}~\text{decreases},~\text{and}~ W_{\mathrm{SW}}~\text{decreases}. \qquad (\text{Regime~$3$})
}
\]

When $\alpha>\alpha_1$, only $p_1$ increases, implying that $U_2$ remains constant while $U_1$ increases. Consequently, both $U$ and $W_{\mathrm{CNW}}$ increase. Regarding social welfare,
$$
W'_{\mathrm{SW}}(\alpha)
=\sum_{i\in[2]} D'_i(\alpha)\left(\pi-b + a(\bar{D}_i - D_i(\alpha))\right)
= 0,
$$
because $D_i(\alpha)$ remains constant for all $i$, and thus no change in provided energy occurs even though $p_1$ varies.
Consequently,
\[
\boxed{
U_1~\text{increases},~
U_2~\text{remains constant},~
U~\text{increases},~
W_{\mathrm{CNW}}~\text{increases},~\text{and}~W_{\mathrm{SW}}~\text{remains constant}. \quad (\text{Regime~$4$})
}
\]
When $\lambda=0$, Regime~$4$ cannot be the initial regime because~\eqref{eq:alpha_1_greater_then_0_utility} always holds.

\textbf{\boldmath Case $\lambda > 0$.} 
When $\lambda>0$, the logic is very similar to the case $\lambda=0$. However, since $D_2^\ast$ differs, we need to verify whether~\eqref{eq:threshold_R3_to_R4} holds. In this case, the condition $\bar D_2 \ge \frac{\pi-b}{2a}$ does not always hold. In particular, when $\bar D_2 < \frac{\pi-b}{2a}$ and $D_{\mathrm{s}} > \frac{3\bar D_1 + \bar D_2}{2}$, we obtain the case in which $D_1^\ast = \bar D_1$ and $D_2^\ast = D_{\mathrm{s}} - \bar D_1$. Therefore, under $\lambda>0$, Regime~$4$ can arise as the initial regime.

On the other hand, Regime~$3$ cannot be the final regime, as in the case $\lambda = 0$. Suppose, to the contrary, that Regime~$3$ is the final regime, then $\alpha_1 \ge 1$, which is equivalent to
\begin{equation}
    \frac{3}{2}\bar D_1 \ge \frac{\pi-b}{2a} + \frac{\bar D_2}{2}
    \label{eq:cont1}
\end{equation}
by~\eqref{eq:alpha1_less_1}.
For $\lambda>0$ to hold, a necessary condition is
$$
\min\left(
\bar D_1 + \frac{\pi-b}{2a} + \frac{\bar D_2}{2},
\frac{\pi-b}{a} + \frac{\bar D_1 + \bar D_2}{2}
\right) \ge D_{\mathrm s}
$$
Combining this with~\eqref{eq:cont1} yields
\begin{equation}
\min\left(
\frac{5}{2}\bar D_1,
2\bar D_1 + \frac{\pi-b}{2a}
\right) \ge D_{\mathrm s}.
\label{eq:contradiction1_for_R3}
\end{equation}
Under Regime~$3$, we have $D_1^\ast = \bar D_1$.  
The necessary condition is
\begin{equation}
    \frac{D_{\mathrm s}}{2} + \frac{\bar D_1 - \bar D_2}{4} \ge \bar D_1
    \label{eq:D1_barD1_positive_lambda}
\end{equation}
Substituting this lower bound of $D_{\mathrm s}$ into the previous inequality~\eqref{eq:contradiction1_for_R3} implies
$$
\min\left(
\frac{3}{2}\bar D_1 - \frac{1}{4}\bar D_2,
\frac{5\bar D_1 - \bar D_2}{4} + \frac{\pi-b}{2a}
\right) \ge \bar D_1
$$
which immediately implies
\begin{equation}
    2\bar D_1 \ge \bar D_2
    \label{eq:contradiction2_for_R3}
\end{equation}
However, the standing assumption $\bar D_1 + \bar D_2 > D_{\mathrm s}$ together with~\eqref{eq:D1_barD1_positive_lambda} implies
$$
\bar D_1 + \bar D_2 > D_{\mathrm s}
\ge \frac{3}{2}\bar D_1 + \frac{1}{2}\bar D_2
\implies \bar D_2 > 2\bar D_1
$$
which contradicts~\eqref{eq:contradiction2_for_R3}.  
Therefore, Regime~$3$ cannot be the final regime even when $\lambda>0$.

In summary, for both the cases $\lambda = 0$ and $\lambda > 0$, when $D_1^\ast = \bar D_1$, there are two possible regimes, Regime~$3$ and Regime~$4$. We show that both Regime $3$ and Regime $4$ can serve as initial regimes, and that a transition from Regime $3$ to Regime $4$ must necessarily occur. In other words, Regime $3$ cannot be a terminal regime.

In the following analysis, we consider the case in which $D_1^\ast < \bar D_1$. If, under this case, the dynamics evolve such that $D_1(\alpha)$ reaches the upper bound $\bar D_1$, the system then possibly enters one of the regimes described above. Note that the expressions derived above are formulated in terms of the initial optimal solution $(D_1^\ast, D_2^\ast)$. If a regime arises following a transition from another regime, the corresponding quantities are instead expressed as $(D_1(\hat\alpha), D_2(\hat\alpha))$, where $\hat\alpha$ denotes the value at which the regime transition occurs.

\noindent\textbf{\boldmath Regimes~$1$ and $2$: $D_1^\ast<\bar{D}_1$.}\\
As in the previous case, we consider two cases: $\lambda = 0$ (slack) and $\lambda > 0$ (binding).

\textbf{\boldmath Case $\lambda = 0$.} Suppose $D_1^\ast + D_2^\ast < D_\mathrm{s}$ and $0 < D_i^\ast < \bar D_i$ for all $i \in [2]$. Note that $D_i^\ast>0$, when $D_1^\ast+D_2^\ast<D_{\mathrm{s}}$. Consider $\alpha$ such that $(D_1(\alpha), D_2(\alpha))$ is the interior of all constraints, i.e., $\alpha<\tilde{\alpha}$ where
$$\tilde{\alpha}:=\inf\left\{\alpha|D_1(\alpha)+D_2(\alpha)=D_\mathrm{s}~\text{or~} D_i(\alpha)\in\{0,\bar{D}_i\},~\forall i\in[2]\right\}.$$

The utility fairness constraint can be simplified as follows.
\begin{equation*}
    \begin{aligned}
    U_2-U_1\le(1-\alpha)\left(U_2^\ast-U_1^\ast\right)&\iff \frac{1}{2}aD_2^2-\frac{1}{2}aD_1^2\le(1-\alpha)\left(\frac{1}{2}a(D_2^\ast)^2-\frac{1}{2}a(D_1^\ast)^2\right)\\
    &\iff D_2^2-D_1^2 \le (1 - \alpha) \left({D_2^\ast}^2-{D_1^\ast}^2\right).
    \end{aligned}
\end{equation*}

Then, the Lagrangian for $\alpha<\tilde \alpha$ is
\begin{equation*}
\begin{aligned}
\mathcal{L} = a D_1^2 + (b-a\bar{D}_1 - \pi) D_1 + a D_2^2 + (b-a\bar{D}_2 - \pi) D_2+\eta\left(D_2^2 - D_1^2 - (1-\alpha) \Delta_{D^2}\right),
\end{aligned}
\end{equation*}
where $\Delta_{D^2}:={D_2^\ast}^2-{D_1^\ast}^2$.
The stationarity conditions are:
\begin{equation}
\begin{aligned}
2a (1-\eta) D_1 + (b-a\bar{D}_1 - \pi) &= 0,\\
2a (1+\eta) D_2 + (b-a\bar{D}_2 - \pi) &= 0. 
\end{aligned}
\label{eq:stationary_utility}
\end{equation}
Using $D_i^{*} = \frac{\pi - b + a\bar{D}_i}{2a}$, we can rewrite \eqref{eq:stationary_utility} as
\begin{equation}
D_1(\alpha) = \frac{D_1^{\ast}}{1 - \eta(\alpha)}\quad\text{and}\quad
D_2(\alpha) = \frac{D_2^{\ast}}{1 + \eta(\alpha)},
\label{eq:D_alpha}
\end{equation}
where $\eta(\alpha)$ is determined by the utility fairness constraint
\begin{equation}
\left( \frac{D_2^{\ast}}{1+\eta(\alpha)} \right)^2
 - \left( \frac{D_1^{\ast}}{1-\eta(\alpha)} \right)^2
= (1-\alpha)\,\Delta_{D^2}.
\label{eq:eta_eq}
\end{equation}
Additionally, note that $\eta(\alpha)<1$, which is implied by primal feasibility, $D_1(\alpha)=\frac{D_1^\ast}{1-\eta(\alpha)}>0$.

Differentiating \eqref{eq:eta_eq} with respect to $\alpha$ yields
\begin{equation}
\eta'(\alpha)=\frac{\Delta_{D^2}}{
2\left[
\dfrac{{D_2^\ast}^2}{(1+\eta(\alpha))^3}
+ \dfrac{{D_1^\ast}^2}{(1-\eta(\alpha))^3}
\right]}>0,
\label{eq:increasing_eta}
\end{equation}
which implies that $\eta(\alpha)$ is an increasing function of $\alpha$.

Note that the sign of $U_i'(\alpha)$ is determined by that of $D_i'(\alpha)$, since 
$U_i'(\alpha)=a\,D_i(\alpha)D_i'(\alpha)$. 
Since $\eta(\alpha)$ is increasing in $\alpha$ by \eqref{eq:increasing_eta}, and \eqref{eq:D_alpha} implies that $D_1(\alpha)$ is increasing while $D_2(\alpha)$ is decreasing in $\alpha$, we conclude that $U_1'(\alpha) > 0$ and $U_2'(\alpha) < 0$ for all $\alpha$.

We now clarify the definition of $\tilde{\alpha}$. It is the minimum value of $\alpha$ at which the interior solution first hits a boundary. Here, we can exclude the cases (i) $D_1(\alpha)=0$, (ii) $D_2(\alpha)=\bar D_2$, and (iii) $D_2(\alpha)=0$. The multiplier $\eta(\alpha)$ is strictly increasing in $\alpha$ (see~\eqref{eq:increasing_eta}) and remains finite. 
$0 \le \eta(\alpha) < \eta(1) < \infty$.
Therefore, $D_1(\alpha)$ is increasing based on~\eqref{eq:D_alpha}, which means (i) $0\le D_1^\ast<D_1(\alpha)$. Similarly, $D_2(\alpha)$ is decreasing, which implies (ii) $D_2(\alpha)<D_2^\ast<\bar{D}_2$. Lastly, $D_2(\alpha)$ does not hits $0$ since the minimum value is achieved when $\alpha=1$, and it should satisfy $D_2(\alpha)>D_2(1)=D_1(1)>0$.

At the boundary $D_2(\alpha)=\bar D_2$, the system may lie in either Regime~$3$ or Regime~$4$, as shown in the analysis of Regimes~$3$ and $4$. Furthermore, since the dynamics of Regime~$4$, characterized by $p_1(\alpha)\ge b$, cannot be fully explained by the provided energy dynamics in \eqref{eq:D_alpha}. Therefore, this case can also trigger a transition. Therefore, $\tilde{\alpha}$ can be expressed as
\begin{equation*}
\begin{aligned}
\tilde\alpha &:= \min\left(\alpha_2,\alpha_3,\alpha_4\right),~\text{where}\\
\alpha_2&:=\inf\left\{\alpha \mid D_1(\alpha) + D_2(\alpha) = D_{\mathrm s}\right\}\\
\alpha_3&:=\inf\left\{\alpha \,\Big|\, D_1(\alpha) = \bar D_1, D_2(\alpha)>\frac{a\bar D_2+\pi-b}{3a}\right\},\\
\alpha_4&:=\inf\left\{\alpha | \Pi^{(4)}(\alpha)\ge \Pi^{(1,2)}(\alpha)\right\},
\end{aligned}
\end{equation*}
where $\Pi^{(4)}(\alpha)$ denotes the profit induced in Regime~$4$ with $D_1(\alpha)=\bar D_1$, and $\Pi^{(1,2)}(\alpha)$ denotes the profit induced in Regimes~$1$ or Regime~$2$, which is fully determined by the provided energy in \eqref{eq:D_alpha}. The condition defining $\alpha_3$ follows from the requirement that $\alpha_1>0$ in~\eqref{eq:alpha_1_greater_then_0_utility}.

We now examine $\alpha_2$, $\alpha_3$, and $\alpha_4$. For~$\alpha_2$, the case $D_1(\alpha) + D_2(\alpha) = D_{\mathrm{s}}$, which is corresponding to $\alpha_2$, the left-hand side varies with respect to $\alpha$ according to
\begin{equation*}
\frac{dD_1(\alpha)}{d\alpha} + \frac{dD_2(\alpha)}{d\alpha}
= \eta'(\alpha)\left[
\frac{D_1^{\ast}}{(1-\eta(\alpha))^2}
- \frac{D_2^{\ast}}{(1+\eta(\alpha))^2}
\right].
\end{equation*}
At $\alpha = 0$, the term in brackets is negative because $D_2^{\ast} > D_1^{\ast}$ and $\eta(\alpha) = 0$, implying that the total provided energy $D_1(\alpha) + D_2(\alpha)$ initially decreases as $\alpha$ increases. However, as $\eta(\alpha)$ increases, the first term grows while the second term diminishes. Therefore, their sum may increase beyond a certain threshold. Consequently, the total provided energy $D_1(\alpha) + D_2(\alpha)$ attains its maximum at one of the endpoints, that is, at $\alpha = 0$ or $\alpha = 1$. Note that when $\alpha = 0$, we have $D_1^\ast + D_2^\ast < D_{\mathrm{s}}$, since we are considering the case in which $\lambda = 0$.

When $\alpha=1$, using $D_1(1)^2=D_2(1)^2$, we can get $\eta(1)$,
\begin{equation}
\eta(1) = 
\frac{D_2^{\ast} - D_1^{\ast} }{D_2^{\ast} + D_1^{\ast}}=\frac{c-1}{c+1}<\infty,
\label{eq:eta_1}
\end{equation}
where 
\begin{equation}
    c:=\frac{D_2^{\ast}}{D_1^{\ast}}=\frac{\pi-b+a\bar{D}_2}{\pi-b+a\bar{D}_1}>1.
    \label{eq:definition_c}
\end{equation}

Substituting $\eta(1)=\frac{D_2^{\ast}-D_1^{\ast}}{D_2^{\ast}+D_1^{\ast}}$ into \eqref{eq:D_alpha} gives
\begin{equation*}
D_1(1)+D_2(1)
=\frac{D_1^{\ast}}{1-\eta(1)}+\frac{D_2^{\ast}}{1+\eta(1)}
=\frac{D_1^{\ast}}{2D_1^{\ast}/(D_1^{\ast}+D_2^{\ast})}
+\frac{D_2^{\ast}}{2D_2^{\ast}/(D_1^{\ast}+D_2^{\ast})}
=D_1^{\ast}+D_2^{\ast}<D_{\mathrm{s}}.
\end{equation*}
Therefore, no $\alpha \in [0,1]$ satisfies $D_1(\alpha)+D_2(\alpha)=D_{\mathrm{s}}$, so this case requires no further consideration.

For $\alpha_3$, we first derive the $\eta(\alpha_3)$ as follows.
\begin{equation}
\frac{D_1^\ast}{1 - \eta(\alpha_3)} = \bar D_1
\quad \iff \quad
\eta(\alpha_3) = 1 - \frac{D_1^\ast}{\bar D_1}.
\label{eq:eta_3}
\end{equation}
By~\eqref{eq:D_alpha}, the corresponding $D_2(\alpha_3)$ is
\begin{equation*}
D_2(\alpha_3)
= \frac{D_2^\ast}{1+\eta(\alpha_3)}
= \frac{D_2^\ast}{\,2 - \frac{D_1^\ast}{\bar D_1}\,}
= \frac{\bar D_1 D_2^\ast}{\,2\bar D_1 - D_1^\ast\,}.
\end{equation*}
Therefore, a necessary condition for Regime~$3$ to arise is
$$D_2(\alpha_3)>\frac{\pi-b+a\bar{D}_2}{3}=\frac{2}{3}D_2^\ast \iff D_1^\ast>\frac{\bar D_1}{2} \iff \frac{\pi-b}{2a}>0.$$
Moreover, by~\eqref{eq:eta_eq},
$$\alpha_3= 1 - \frac{D_2(\alpha_3)^2 - D_1(\alpha_3)^2}{\Delta_{D^2}}
= 1 - \frac{\left( \frac{\bar D_1 D_2^\ast}{\,2\bar D_1 - D_1^\ast\,} \right)^2
- \bar D_1^2}{\Delta_{D^2}}.
$$
If $\alpha_3<1$, then Regime~$3$ must occur. The if-and-only-if condition for $\alpha_3<1$ is
$$\eta(\alpha_3)<\eta(1) \iff 1-\frac{D_1^\ast}{\bar{D}_1}<\frac{c-1}{c+1} \iff \frac{D_1^\ast}{\bar{D}_1}>\frac{2}{c+1} \iff D_1^\ast+D_2^\ast>2\bar{D}_1.$$
When $\lambda = 0$, this condition holds if
$$D_1^\ast + D_2^\ast=\left(\frac{\pi-b}{2a}+\frac{\bar D_1}{2}\right)+\left(\frac{\pi-b}{2a}+\frac{\bar D_2}{2}\right)>2\bar D_1\iff \frac{\pi-b}{2a}>\frac{3\bar D_1-\bar D_2}{4}.$$
To conclude, the condition for Regime~$3$ to arise is,
$$ \max\left(0,\frac{3\bar D_1-\bar D_2}{4}\right)<\frac{\pi-b}{2a}< \frac{\bar D_1}{2},$$
where the last inequality follows from $D_1^\ast <\bar{D}_1$. There exists a parameter value for which this interval is feasible
(e.g., $(a,b,\pi,\bar D_1,\bar D_2,D_{\mathrm s}) = (1,9,9.4,1.2,3.5,4.5)$)
or infeasible
(e.g., $(a,b,\pi,\bar D_1,\bar D_2,D_{\mathrm s}) = (1,9,9.4,1.5,3.5,4.8)$).
Thus, Regime~$3$ may or may not arise.

We next analyze $\alpha_4$. We first derive the profit expressions $\Pi^{(4)}$ and $\Pi^{(1,2)}$ and then compare them.  
In Regime~$4$, provided energy of consumer~$1$ is saturated so that $D_1(\alpha)=\bar D_1$, while $D_2(\alpha)$ remains interior and equals $D_2(\alpha_4)$.  
Therefore, for all $\alpha \ge \alpha_4$,
\begin{equation*}
U_2(\alpha)=U_2(\alpha_4)=\tfrac12 a D_2(\alpha_4)^2,
\qquad
U_1(\alpha)=p_1(\alpha)\bar D_1 - C(\bar D_1).
\end{equation*}
Since the utility fairness constraint binds,
$U_2(\alpha)-U_1(\alpha)=(1-\alpha)\Delta_U,$
which implies
$$p_1(\alpha)\bar D_1
=U_2(\alpha_4)+C(\bar D_1)-(1-\alpha)\Delta_U.$$
The aggregator profit in Regime~$4$ is given by
$\Pi^{(4)}(\alpha)
=\pi(\bar D_1+D_2(\alpha_4))-p_1(\alpha)\bar D_1-p_2(\alpha_4)D_2(\alpha_4)$.
Substituting the expression for $p_1(\alpha)\bar D_1$ yields
\begin{equation*}
\Pi^{(4)}(\alpha)
=\pi\bar D_1+\pi D_2(\alpha_4)-p_2(\alpha_4)D_2(\alpha_4)-U_2(\alpha_4)-C(\bar D_1)+(1-\alpha)\Delta_U.
\end{equation*}
Using $p_2(\alpha_4)=aD_2(\alpha_4)+(b-a\bar D_2)$ and $U_2(\alpha_4)=\tfrac12 aD_2(\alpha_4)^2$, we obtain
\begin{equation*}
\Pi^{(4)}(\alpha)
=\pi\bar D_1+(\pi-b+a\bar D_2)D_2(\alpha_4)-\tfrac32 aD_2(\alpha_4)^2-C(\bar D_1)+(1-\alpha)\Delta_U.
\end{equation*}
By~\eqref{eq:opt_D_for_R3R4}, $D_2(\alpha_4)=\tfrac{\pi-b+a\bar D_2}{3a}=\frac23 D_2^\ast$.
Substituting this expression back, we obtain
\begin{equation*}
\Pi^{(4)}(\alpha)
=\pi\bar D_1+\frac23 a(D_2^\ast)^2-C(\bar D_1)+(1-\alpha)\Delta_U.
\end{equation*}
Using $\pi-b=2aD_1^\ast-a\bar D_1$ and $C(\bar D_1)=b\bar D_1-\tfrac12 a\bar D_1^2$, we can equivalently write
\begin{equation*}
\Pi^{(4)}(\alpha)
=\frac23 a(D_2^\ast)^2-\tfrac12 a\bar D_1^2+2aD_1^\ast\bar D_1+(1-\alpha)\Delta_U.
\end{equation*}

For $\Pi^{(1,2)}(\alpha)$, in either Regime~$1$ or Regime~$2$, the solution is interior, and prices are pinned down by provided energy through the inverse price response functions. Consequently, the induced profit is fully determined by $(\tilde D_1(\alpha), \tilde D_2(\alpha))$, which evolve according to~\eqref{eq:D_alpha}. Depending on $\alpha$, however, $(\tilde D_1(\alpha), \tilde D_2(\alpha))$ may or may not constitute the optimal solution. For example, if Regime~$4$ is reached before $\tilde D_1(\alpha)$ attains $\bar D_1$, then $(\tilde D_1(\alpha), \tilde D_2(\alpha))$ is no longer optimal.  
Using $\pi - b + a \bar D_i = 2 a D_i^\ast$, we can write
\begin{equation*}
\begin{aligned}
\Pi^{(1,2)}(\alpha)&=(\pi-b+a\bar D_1)\tilde D_1(\alpha)-a\tilde D_1(\alpha)^2+(\pi-b+a\bar D_2)\tilde D_2(\alpha)-a\tilde D_2(\alpha)^2\\
&=a\big(2D_1^\ast\tilde D_1(\alpha)-\tilde D_1(\alpha)^2+2D_2^\ast\tilde D_2(\alpha)-\tilde D_2(\alpha)^2\big).
\end{aligned}
\end{equation*}

Then, the difference between $\Pi^{(4)}$ and $\Pi^{(1,2)}$ is given by
\begin{equation}
\begin{aligned}
\Delta^\Pi(\alpha)
:=& \Pi^{(4)}(\alpha)-\Pi^{(1,2)}(\alpha)\\
=& a\left(
\frac{2}{3}(D_2^\ast)^2
-\frac12 \bar D_1^2
+2D_1^\ast\bar D_1
+\frac12(1-\alpha)\Delta_D^2
-2D_1^\ast \tilde D_1(\alpha)
-2D_2^\ast \tilde D_2(\alpha)
+\tilde D_1(\alpha)^2
+\tilde D_2(\alpha)^2
\right)\\
=& a\left(
\frac{2}{3}(D_2^\ast)^2
-\frac12 \bar D_1^2
+2D_1^\ast\bar D_1
+\frac12\big(\tilde D_2(\alpha)^2-\tilde D_1(\alpha)^2\big)
-2D_1^\ast \tilde D_1(\alpha)
-2D_2^\ast \tilde D_2(\alpha)
+\tilde D_1(\alpha)^2
+\tilde D_2(\alpha)^2
\right)\\
=& a\left(
\frac{2}{3}(D_2^\ast)^2
-\frac12 \bar D_1^2
+2D_1^\ast\bar D_1
-2D_1^\ast \tilde D_1(\alpha)
-2D_2^\ast \tilde D_2(\alpha)
+\frac12 \tilde D_1(\alpha)^2
+\frac32 \tilde D_2(\alpha)^2
\right).
\end{aligned}
\label{eq:diff_profit_3_and_4}
\end{equation}
Therefore, if $\Delta^\Pi(\alpha_3)>0$, it follows that $\alpha_4<\alpha_3$.
At $\alpha_3$, we have $\tilde D_1(\alpha_3)=\bar D_1$, and thus
\begin{equation*}
\Delta^\Pi(\alpha_3)
=a\left(
\frac{2}{3}(D_2^\ast)^2
-2D_2^\ast \tilde D_2(\alpha_3)
+\frac32 \tilde D_2(\alpha_3)^2
\right)
=\frac{3a}{2}\left(
\tilde D_2(\alpha_3)-\frac{2}{3}D_2^\ast
\right)^2 \ge 0,
\end{equation*}
with equality if and only if $\tilde D_2(\alpha_3)=\frac23 D_2^\ast$.
Since the above condition implies that the profit under Regime~$4$ exceeds the profit under the energy adjustment dynamics in~\eqref{eq:D_alpha}, it remains to consider the condition under which $p_1(\alpha)\ge b$ holds.
\begin{equation*}
\begin{aligned}
p_1(\alpha)\bar D_1\ge b\bar D_1 
&\iff
U_2(\alpha_4)+C(\bar D_1)-(1-\alpha)\Delta U\ge b\bar D_1\\
&\iff
\frac12 a \left(\frac23D_2^\ast\right)^2 
+ b\bar D_1
-\frac12 a \bar D_1^2
-\frac12 a\left(D_2(\alpha)^2-\bar D_1^2\right)
\ge b\bar D_1 \\
&\iff
D_2(\alpha)\le \frac23 D_2^\ast.
\end{aligned}
\end{equation*}
Therefore, the necessary condition that regime transitions directly to Regime~$4$ without passing through Regime~$3$ is
\begin{equation}
    D_2(\alpha_3)\le \frac23 D_2^\ast 
    \iff 
    \frac{\bar D_1 D_2^\ast}{2\bar D_1 - D_1^\ast}\le \frac23 D_2^\ast 
    \iff 
    \frac{\pi-b}{2a}\le 0.
\label{eq:alpha_4}
\end{equation}


We now examine how the performance measures vary with $\alpha\le\tilde\alpha$.
Regarding the total consumer utility, 
\begin{equation*}
    \frac{dU}{d\alpha} = a D_1(\alpha) D_1'(\alpha) + a D_2(\alpha) D_2'(\alpha)
= \eta'(\alpha) \underbrace{\left[ \frac{{D_1^{\ast}}^2}{(1-\eta(\alpha))^3} - \frac{{D_2^{\ast}}^2}{(1+\eta(\alpha))^3} \right]}_{(\star)}.
\end{equation*}

Since $\eta(0)=0$, it follows that $\frac{dU(0)}{d\alpha}=-\eta'(0)\Delta_{D^2}<0$.
Therefore, $\frac{dU(\alpha)}{d\alpha}$ remains negative for $\alpha < \alpha_4$, where $\alpha_4$ denotes the value of $\alpha$ at which the bracketed term $(\star)$ becomes zero. Formally, with $c$~defined in~\eqref{eq:definition_c}, $\alpha_5$ is defined by
\begin{equation}
\frac{{D_1^{\ast}}^2}{(1-\eta(\alpha_5))^3}
= \frac{{D_2^{\ast}}^2}{(1+\eta(\alpha_5))^3}
\quad\Longrightarrow\quad
0<\eta(\alpha_5)=\frac{c^{2/3}-1}{c^{2/3}+1}<\frac{c-1}{c+1}=\eta(1),
\label{eq:eta_5}
\end{equation}
where the last inequality holds because the function $f(x)=\tfrac{x-1}{x+1}$ is increasing for all $x \ge 1$. Since $\eta(\alpha)$ is also increasing, we obtain $\alpha_5 < 1$. Moreover, since $\eta(0)=0$ and $\eta(\alpha_5)>0$, it follows that $\alpha_5>0$. Consequently, Regime~$1$ cannot be terminal, and Regime~$2$ cannot arise as the initial regime.

For CNW, we have
\begin{equation*}
\begin{aligned}
\frac{d{W_{\mathrm{CNW}}}}{d\alpha}
=& 2\left( \frac{D_1'(\alpha)}{D_1(\alpha)} + \frac{D_2'(\alpha)}{D_2(\alpha)} \right)
= 2\left( \frac{\eta'(\alpha)}{1-\eta(\alpha)} - \frac{\eta'(\alpha)}{1+\eta(\alpha)} \right)= \frac{4\,\eta(\alpha)\,\eta'(\alpha)}{1-\eta(\alpha)^2}>0.
\end{aligned}
\end{equation*}
Thus, for $\alpha<\min(\alpha_3,\alpha_4,\alpha_5)$, $U$ decreases, and so does $W_{\mathrm{SW}}$. Overall, Regime~$1$ is characterized as follows.
\[
\boxed{
U_1~\text{increases},\quad
U_2~\text{decreases},\quad
U~\text{decreases},~W_{\mathrm{CNW}}~\text{increases},~\text{and}~
W_{\mathrm{SW}}~\text{decreases}.\quad(\text{Regime}~$1$)
}
\]

On the other hand, if $\alpha_5 < \min(\alpha_3,\alpha_4)$, then Regime~$1$ arises first for $\alpha < \alpha_5$, and Regime~$2$ follows for $\alpha_5 \le \alpha < \min(\alpha_3,\alpha_4)$. In Regime~$2$, $U_1$, $U_2$, and $W_{\mathrm{CNW}}$ move in the same direction as in Regime~$1$, however, the total consumer utility $U$ increases.
Social welfare is
\begin{equation*}
\begin{aligned}
\frac{d W_{\mathrm{SW}}}{d \alpha}
=&\sum_{i\in[2]} D'_i(\alpha)\left(\pi-b + a\bar{D}_i - 2aD_i(\alpha)\right)\\
=& \eta'(\alpha)\left(\frac{D_1^\ast}{(1-\eta(\alpha))^2}\left(\pi-b+a\bar{D}_1-2aD_1(\alpha)\right)-\frac{D_2^\ast}{(1+\eta(\alpha))^2}\left(\pi-b+a\bar{D}_2-2aD_2(\alpha)\right)\right)\\
=& -2a\eta(\alpha)\eta'(\alpha)\left(\frac{D_1^\ast}{(1-\eta(\alpha))^2}D_1(\alpha)+\frac{D_2^\ast}{(1+\eta(\alpha))^2}D_2(\alpha)\right)\\
=& -2a\eta(\alpha)\eta'(\alpha)\left(\frac{{D_1^\ast}^2}{(1-\eta(\alpha))^3}+\frac{{D_2^\ast}^2}{(1+\eta(\alpha))^3}\right)<0.\\
\end{aligned}
\end{equation*}
where the third equality follows from~\eqref{eq:stationary_utility} and the last equality follows from~\eqref{eq:D_alpha}. The last inequality follows since $\eta(\alpha)$, $\eta'(\alpha)$, and $\eta(\alpha)<1$. Therefore,
\[
\boxed{
U_1~\text{increases},\quad
U_2~\text{decreases},\quad
U~\text{increases},~W_{\mathrm{CNW}}~\text{increases},~\text{and}~
W_{\mathrm{SW}}~\text{decreases}.\quad(\text{Regime}~$2$)
}
\]

Finally, we examine which of the following regime transitions are possible: (i) $2 \to 3$, (ii) $2 \to 4$, (iii) $1 \to 3$, and (iv) $1 \to 4$.
The possibility of transitions (iii) and (iv) will be addressed in the case $\lambda>0$. Transitions (i) and (ii) correspond, respectively, to
$$
(i)\ \alpha_5 < \alpha_3 < \alpha_4,
\quad\text{and}\quad
(ii)\ \alpha_5 < \alpha_4 < \alpha_3.
$$

First, by~\eqref{eq:alpha_4}, if $\frac{\pi-b}{2a} > 0$, Regime~$4$ follows Regime~$3$. Therefore, it suffices to show that the condition $\alpha_5 < \alpha_3$ can be satisfied. The condition $\alpha_5 < \alpha_3$ holds if and only if
$$
\alpha_5 < \alpha_3
\iff \eta(\alpha_5) < \eta(\alpha_3)
\iff \frac{c^{2/3}-1}{c^{2/3}+1} < 1 - \frac{D_1^\ast}{\bar D_1}
\iff (D_1^\ast)^{1/3}(D_2^\ast)^{2/3} < 2\bar D_1 - D_1^\ast
$$
Substituting the closed-form expressions $D_i^\ast = \tfrac{\pi-b}{2a} + \tfrac{\bar D_i}{2}$, the inequality simplifies to
$$
\left(\frac{\pi-b}{a} + \bar D_1\right)
\left(\frac{\pi-b}{a} + \bar D_2\right)^2
<
\left(3\bar D_1 - \frac{\pi-b}{a}\right)^3
$$
This inequality can be satisfied for some parameter values.  
For instance, when $(a,b,\pi,\bar D_1,\bar D_2,D_{\mathrm s}) = (1,9,9.4,1.2,3.5,4.5)$, the inequality holds.

For case (ii), we show that there exists a set of parameters
$(a,b,\pi,\bar D_1,\bar D_2,D_{\mathrm s})$
such that
$$\alpha_5 < \alpha_4 < \alpha_3 <1 \iff \eta(\alpha_5) < \eta(\alpha_4) < \eta(\alpha_3)<\eta(1),$$
where the equivalence holds because $\eta(\alpha)$ is increasing in $\alpha$. This ordering implies that the dynamics proceed through Regime~$1$, Regime~$2$, and then Regime~$4$. As discussed above, once Regime~$4$ is reached, Regime~$3$ cannot occur. We do not pursue a fully analytical characterization of this case for two reasons. First, computing $\eta(\alpha_4)$ requires solving an equation of cubic order in $\eta$. Second, since our objective is to establish the feasibility of the transition, it suffices to provide an explicit parameter instance for which the ordering holds.

Consider the parameter set
$$(a,b,\pi,\bar D_1,\bar D_2,D_{\mathrm s})=(1,11,9.5,2,10,10).$$
Then
$$D_1^\ast=\frac{\pi-b}{2a}+\frac{\bar D_1}{2}=0.25,
\qquad
D_2^\ast=\frac{\pi-b}{2a}+\frac{\bar D_2}{2}=4.25.$$
First, by~\eqref{eq:eta_5}, with $c=D_2^\ast/D_1^\ast$, we obtain
$$\eta(\alpha_5)
=\frac{c^{2/3}-1}{c^{2/3}+1}
\approx 0.737.$$
Next, $\eta(\alpha_4)$ is defined as the value of $\eta$ that makes
$\Delta^\Pi$ in~\eqref{eq:diff_profit_3_and_4} equal to zero.  
Writing~\eqref{eq:diff_profit_3_and_4} as a function of $\eta$, we have
$$
\Delta^\Pi(\eta)
=
a\Bigg(
\frac{2}{3}(D_2^\ast)^2
-\frac12 \bar D_1^2
+2D_1^\ast\bar D_1
-2D_1^\ast \frac{D_1^\ast}{1-\eta}
-2D_2^\ast \frac{D_2^\ast}{1+\eta}
+\frac12\left(\frac{D_1^\ast}{1-\eta}\right)^2
+\frac32\left(\frac{D_2^\ast}{1+\eta}\right)^2
\Bigg).
$$
Solving $\Delta^\Pi(\eta)=0$ yields
$$
\eta(\alpha_4)\approx 0.851.
$$
Finally, by~\eqref{eq:eta_3},
$$
\eta(\alpha_3)=1-\frac{D_1^\ast}{\bar D_1}=0.875.$$
Therefore, for this parameter set,
$$\eta(\alpha_5) < \eta(\alpha_4) < \eta(\alpha_3) < \eta(1)=\frac{c-1}{c+1} \approx 0.889,$$
where $\eta(1)$ from~\eqref{eq:eta_1}. This establishes that the transition corresponding to case (ii) is feasible.

\textbf{\boldmath Case $\lambda > 0$.} Suppose $D_1^\ast + D_2^\ast = D_\mathrm{s}$ and $0 < D_i^\ast < \bar D_i$ for all $i \in [2]$. We first consider $(D_1(\alpha), D_2(\alpha))$ which is the interior of all constraints except $D_1+D_2\le D_{\mathrm{s}}$. Then, the fairness constraint and the supply constraint jointly imply
\begin{equation*}
D_2(\alpha)^2 - D_1(\alpha)^2 = (1-\alpha)\Delta_{D^2},\quad D_1(\alpha) + D_2(\alpha) = D_s,
\end{equation*}
where $\Delta_{D^2}:={D_2^\ast}^2-{D_1^\ast}^2$.
Then,
\begin{equation}
D_1(\alpha) = \frac{D_s - \delta(\alpha)}{2},
\qquad
D_2(\alpha) = \frac{D_s + \delta(\alpha)}{2},
\label{eq:D_alpha_delta}
\end{equation}
where
\begin{equation*}
\delta(\alpha):=D_2(\alpha) - D_1(\alpha)
= \frac{D_2(\alpha)^2 - D_1(\alpha)^2}{D_1(\alpha)+D_2(\alpha)} = \frac{(1-\alpha)\Delta_{D^2}}{D_s}.
\end{equation*}
$\delta(\alpha)$ is decreasing function with $\alpha$. i.e., $\delta'(\alpha)<0$.

With an interior solution, $U_i=\tfrac{1}{2}aD_i^2$, therefore
$$U'_1(\alpha)=aD_1(\alpha)D_1'(\alpha)=-aD_1(\alpha)\frac{\delta'(\alpha)}{2}>0\quad{and}\quad U'_2(\alpha)=aD_2(\alpha)D_2'(\alpha)=aD_2(\alpha)\frac{\delta'(\alpha)}{2}<0.$$
Total consumer utility is
$$\frac{dU}{d\alpha}=U_1'(\alpha)+U_2'(\alpha)=\frac{a\delta'(\alpha)}{2}\left(D_2(\alpha)-D_1(\alpha)\right)<0.$$
The last inequality holds because $D_1(\alpha)< D_1(1)=D_2(1)<D_2(\alpha)$ for all $\alpha\in[0,1)$.
For CNW, we have
\begin{equation*}
\begin{aligned}
\frac{d{W_{\mathrm{CNW}}}}{d\alpha}
=& 2\left( \frac{D_1'(\alpha)}{D_1(\alpha)} + \frac{D_2'(\alpha)}{D_2(\alpha)} \right)
= \delta'(\alpha)\left( \frac{1}{D_2(\alpha)} - \frac{1}{D_1(\alpha)} \right)>0.
\end{aligned}
\end{equation*}
Lastly, since $U$ decreases, social welfare is also decreasing with $\alpha$. Overall, this corresponds to Regime~$1$.

The dynamics of~\eqref{eq:D_alpha_delta} may change when either the constraint $D_1 + D_2 \le D_s$ switches from binding to slack or $D_1 \le \bar D_1$ becomes binding. The former case does not introduce a new regime. When the supply constraint becomes slack, i.e., $D_1 + D_2 < D_s$, the solution remains interior and therefore corresponds to either Regime~$1$ or Regime~$2$ under $\lambda = 0$.

Unlike the case $\lambda = 0$, Regime~$1$ can be a terminal regime. By~\eqref{eq:D_alpha_delta}, we have $D_1(1) = D_2(1) = \frac{D_{\mathrm s}}{2}$, which can be chosen to be smaller than $\bar D_1$. For instance, let $(a,b,\pi,\bar D_1,\bar D_2) = (1, 11, 12, 2, 10)$ with $D_{\mathrm s} = 4$. If the constraint $D_1(\alpha)=\bar D_1$ becomes binding before the constraint $D_1 + D_2 \le D_s$ turns slack, let $\alpha_6$ denote the smallest value of $\alpha$ at which this occurs. At $\alpha=\alpha_6$, we have
\begin{equation*}
D_1(\alpha_6)=\bar D_1
\quad\text{and}\quad
D_2(\alpha_6)
=\frac{D_s}{2}+\frac{\delta(\alpha_6)}{2}
=\frac{D_s}{2}+\frac{D_s}{2}-\bar D_1
=D_s-\bar D_1.
\end{equation*}
At this point, the solution transitions from Regime~$1$ to either Regime~$3$ or Regime~$4$. As characterized in the analysis of Regimes~$3$ and~$4$, the resulting regime is determined by whether
\begin{equation}
D_2(\alpha_6)=D_s-\bar D_1>\frac{\pi-b+a\bar D_2}{3a}.
\label{eq:cond_R3R4}
\end{equation}
where $D_2^\dagger$ is defined in Lemma~\ref{lemma:opt_sol}. If this condition holds, the solution transitions to Regime~$3$, otherwise, it transitions to Regime~$4$. This condition may or may not hold. For instance, with $(a,b,\pi,\bar{D}_1,\bar{D}_2)=(1, 11, 12,2,10)$, if $D_{\mathrm{s}}=6$, \eqref{eq:cond_R3R4} holds. On the other hand, if $D_{\mathrm{s}}=5$, \eqref{eq:cond_R3R4} does not hold.

Finally, we consider the remaining case that has not been addressed so far, namely
\begin{equation*}
D_1^{\ast}=0
\quad\text{and}\quad
D_2^{\ast}=D_s.
\end{equation*}
Since $D_1^\ast$ (respectively, $D_2^\ast$) attains its minimum (respectively, maximum) value at $\alpha=0$, we parameterize deviations from this initial solution as
\begin{equation*}
D_1(\alpha)=\Delta D_1
\quad\text{and}\quad
D_2(\alpha)=D_s-\Delta D_2,~\text{where}~\Delta D_1, \Delta D_2\ge 0,
\end{equation*}
where $\Delta D_1$ and $\Delta D_2$ represent the corresponding changes in provided energy with $\alpha$.

We evaluate the profit loss incurred when moving from $\alpha=0$ to a given $\alpha>0$ as
\begin{equation*}
\begin{aligned}
\Pi(0,D_s)-\Pi(\Delta D_1,D_s-\Delta D_2)
=&\;\pi D_s-(aD_s+b-a\bar D_2)D_s
-\pi(\Delta D_1+D_s-\Delta D_2) \\
&+(a\Delta D_1+b-a\bar D_1)\Delta D_1
+\big(a(D_s-\Delta D_2)+b-a\bar D_2\big)(D_s-\Delta D_2) \\
=&\;
-a(\Delta D_1)^2
+(b-a\bar D_1-\pi)\Delta D_1 +a(\Delta D_2)^2
+(\pi-b-2aD_s+a\bar D_2)\Delta D_2.
\end{aligned}
\end{equation*}
Hence, characterizing the optimal response amounts to minimizing this profit reduction subject to feasibility and fairness constraints. The resulting optimization problem is given by
\begin{equation*}
\begin{aligned}
\min_{\Delta D_1,\Delta D_2}\quad
&-a(\Delta D_1)^2 + (b-a\bar D_1-\pi)\Delta D_1
+ a(\Delta D_2)^2 + (\pi-b-2aD_s+a\bar D_2)\Delta D_2 \\
\text{s.t.}\quad
&\Delta D_1-\Delta D_2 \le 0, \\
&(D_s-\Delta D_2)^2-(\Delta D_1)^2=(1-\alpha)\Delta_{D^2}, \\
&\Delta D_1 \ge 0,\quad \Delta D_2 \ge 0.
\end{aligned}
\end{equation*}

Using the equality constraint, $(\Delta D_1)^2=(D_s-\Delta D_2)^2-(1-\alpha)\Delta_{D^2}$. Substituting this expression into the quadratic part of the objective function yields
\begin{equation*}
-a(\Delta D_1)^2 + a(\Delta D_2)^2 = -a\big((D_s-\Delta D_2)^2-(1-\alpha)\Delta_{D^2}\big)
+ a(\Delta D_2)^2 = -aD_s^2 + 2aD_s\Delta D_2 + a(1-\alpha)\Delta_{D^2},
\end{equation*}
which differs from a linear function only by constants. Dropping constant terms, the objective reduces to
\begin{equation*}
\min_{\Delta D_1,\Delta D_2}
\quad
(b-a\bar D_1-\pi)\Delta D_1
+(\pi-b+a\bar D_2)\Delta D_2.
\end{equation*}

Since the objective is linear and the feasible set is defined by the above constraints, the inequality constraint is binding at the optimum, implying
\begin{equation*}
\Delta D_1=\Delta D_2.
\end{equation*}
Therefore, the constraint $D_1 + D_2 \le D_s$ remains binding for all $\alpha$, with $D_1$ increasing and $D_2$ decreasing as $\alpha$ varies. This behavior coincides with the interior solution under $\lambda>0$, which corresponds to Regime~$1$.
\hfill \Halmos

\end{proof}

\newpage
\section{Experimental Details}
\label{appendix:experimental_details}

To produce the numerical results, we implement the model in \texttt{Pyomo} and utilize off-the-shelf optimization solvers. We first solve the optimization problem under energy fairness in terms of the decision variable $D$. Because this formulation yields a convex optimization problem, we employ the solver \texttt{IPOPT} to obtain its global optimum efficiently.

In contrast, when enforcing price fairness or utility fairness, the induced provided energy becomes piecewise linear in prices. This creates a nonconvex feasible region, as regime-dependent price transitions introduce kinks in both the objective function and the constraints. Such nonconvexities prevent standard convex optimization solvers from certifying global optimality. Consequently, we adopt a partition-enumeration procedure: we decompose the feasible region into a finite collection of partitions, solve the subproblem associated with each partition, and retain the best solution across all partitions.

For the price fairness case, each partition induces a convex optimization problem. In contrast, the utility fairness formulation yields a nonconvex problem. To ensure global optimality, we solve these subproblems using the global nonlinear solver \texttt{Couenne} and complement this with a grid-search procedure to verify solution quality. 

More specifically, the optimization problem in~\eqref{eq:conjunction} exhibits a piecewise-defined price-energy relationship. Depending on whether $p_i$ lies below $b-a\bar{D}_i$, between $b-a\bar{D}_i$ and $b$, or above $b$, the provided energy takes the value $0$, an affine function of $p_i$, or the saturation level $\bar{D}_i$, respectively. These breakpoints introduce non-differentiabilities in the objective and constraints, yielding a nonconvex feasible region. However, once the regime of each consumer is fixed, the price-energy map becomes linear. This motivates our approach: instead of solving the full nonconvex problem directly, we decompose the feasible region into finitely many convex subregions and solve each subproblem separately.

Each consumer $i$ belongs to one of the following sets,
\begin{equation*}
\mathcal{I}_0 := \{i : p_i < b-a\bar{D}_i\},\qquad
\mathcal{I}_1 := \{i : b-a\bar{D}_i \le p_i \le b\},\qquad
\mathcal{I}_2 := \{i : p_i > b\}.
\end{equation*}
With $N$ consumers, there are at most $3^N$ such assignments. Let $\mathcal{Z}$ denote this finite collection of partitions, and let $\mathcal{I}=(\mathcal{I}_0,\mathcal{I}_1,\mathcal{I}_2)\in\mathcal{Z}$ denote one of them. Once $\mathcal{I}$ is fixed, provided energy expressions remain unchanged with respect to prices, eliminating all kinks in the feasible set. The price fairness problem under a given partition therefore admits a tractable representation.
\begin{equation}
\begin{aligned}
P(\mathcal{I}):=\max_{\mathbf{p}} \quad 
& \sum_{i\in\mathcal{I}_1} \left(\pi - p_i\right)
  \left(\frac{p_i-b}{a} + \bar D_i\right)
+ \sum_{i\in\mathcal{I}_2} \left(\pi - p_i\right)\bar{D}_i \\[3pt]
\text{s.t.}\quad
& p_i \le b-a\bar{D}_i,\quad i\in\mathcal{I}_0,\\
& b-a\bar{D}_i \le p_i \le b,\quad i\in\mathcal{I}_1,\\
& p_i \ge b,\quad i\in\mathcal{I}_2,\\[3pt]
& p_i - p_j \le (1-\alpha)\Delta_p,\quad
  p_j - p_i \le (1-\alpha)\Delta_p,\quad \forall i\ne j.
\end{aligned}
\label{eq:conjunction}
\end{equation}

Because both the objective and constraints are linear or quadratic in $\mathbf{p}$, $P(\mathcal{I})$ is a convex quadratic program. Hence, although the original problem is nonconvex, the feasible region decomposes as
\begin{equation*}
\mathcal{F} = \bigcup_{\mathcal{I}\in\mathcal{Z}} \mathcal{F}(\mathcal{I}),
\end{equation*}
and the global maximizer can be obtained by
\begin{equation*}
\Pi^\ast
= \max_{\mathcal{I}\in\mathcal{Z}} \; 
\max_{\mathbf{p}\in\mathcal{F}(\mathcal{I})} \Pi(\mathbf{p}),
\end{equation*}

We solve $P(\mathcal{I})$ for each $\mathcal{I}\in\mathcal{Z}$ using
\texttt{IPOPT} and retain the solution attaining the highest objective value.

Under utility fairness, the fairness constraint involves consumers' utilities,
\begin{equation*}
U_i - U_j \le (1-\alpha)\Delta_U,\qquad
U_j - U_i \le (1-\alpha)\Delta_U.
\end{equation*}
Utilities are given by
\begin{equation*}
U_i =
\begin{cases}
0, & i\in\mathcal{I}_0,\\[2pt]
\dfrac{1}{2a}\,(p_i-b+a\bar{D}_i)^2, & i\in\mathcal{I}_1,\\[6pt]
p_i\bar{D}_i + \dfrac{1}{2}a\bar{D}_i^2 - b\bar{D}_i, & i\in\mathcal{I}_2.
\end{cases}
\end{equation*}
These nonlinear expressions destroy convexity even within a fixed partition. Accordingly, each subproblem is solved using the global nonlinear solver \texttt{Couenne}, and we verify optimality using a supplementary grid search. This ensures that the best solution across all partitions is globally optimal.

Lastly, Algorithm~\ref{alg:partition-enumeration} summarizes our approach. By enumerating all partitions and solving the induced subproblems exactly, we obtain a certified global maximizer for both price and utility fairness formulations.

\begin{algorithm}[htbp]
\caption{Partition–enumeration solver}
\label{alg:partition-enumeration}
\begin{algorithmic}[1]
\STATE \textbf{Input:} $(a,b,\pi,\bar{\mathbf{D}},D_{\mathrm{s}})$, fairness level $\alpha$.
\STATE Construct $\mathcal{Z}$, the finite set of partitions
$\mathcal{I}=(\mathcal{I}_0,\mathcal{I}_1,\mathcal{I}_2)$ of $[N]$.
\STATE Initialize $\Pi^\ast \leftarrow -\infty$ and $\mathcal{I}^\ast \leftarrow \emptyset$.
\FOR{each $\mathcal{I}\in\mathcal{Z}$}
    \STATE Solve $P(\mathcal{I})$ and obtain $\Pi(\mathcal{I})$ and $\mathbf{p}(\mathcal{I})$.
    \IF{$\Pi(\mathcal{I}) > V^\ast$}
        \STATE $\Pi^\ast \leftarrow \Pi(\mathcal{I})$, $\mathcal{I}^\ast \leftarrow \mathcal{I}$,
        $\mathbf{p}^\ast \leftarrow \mathbf{p}(\mathcal{I})$.
    \ENDIF
\ENDFOR
\STATE \textbf{Output:} $(\mathcal{I}^\ast,\mathbf{p}^\ast)$ and $\Pi^\ast$.
\end{algorithmic}
\end{algorithm}

\section{Supplementary Case Study}
\subsection{Details of the Case Study}
\label{appendix:case_study_preprocessing}
This section describes the data preprocessing procedures in detail. We restrict our sample to participants who consented to take part in the experiment. The iFlex field experiment was conducted in two phases: a pilot phase in the winter of $2019/20$ (Phase~$1$) and a full-scale pricing experiment in the winter of $2020/21$ (Phase~$2$). Phase~$1$ involved a smaller number of households and primarily served as a pilot study. Participants may have joined Phase~$1$ only, Phase~$2$ only, or both phases. To avoid potential learning or familiarity effects and to ensure sufficient sample size, we exclude Phase~$1$ participants and focus on households that participated exclusively in Phase~$2$.

Within Phase~$2$, participants are assigned either to a control group or to a price (treatment) group. Households in the control group are not exposed to any experimental price signals throughout the experiment period, whereas households in the treatment group are randomly assigned hourly different incentive price for some randomly selected days. Since our analysis focuses on price responsiveness, we further restrict the sample to households in the price treatment group. Finally, to incorporate household characteristics such as income, we retain only participants who completed the post-experiment survey. As a result, our analysis includes $1{,}233$ of the $7{,}410$ participants. We then retain hourly observations that vary over time for each household, such as electricity consumption and incentive price signals. The detailed sample selection process is summarized in Figure~\ref{fig:filtering_flow}.
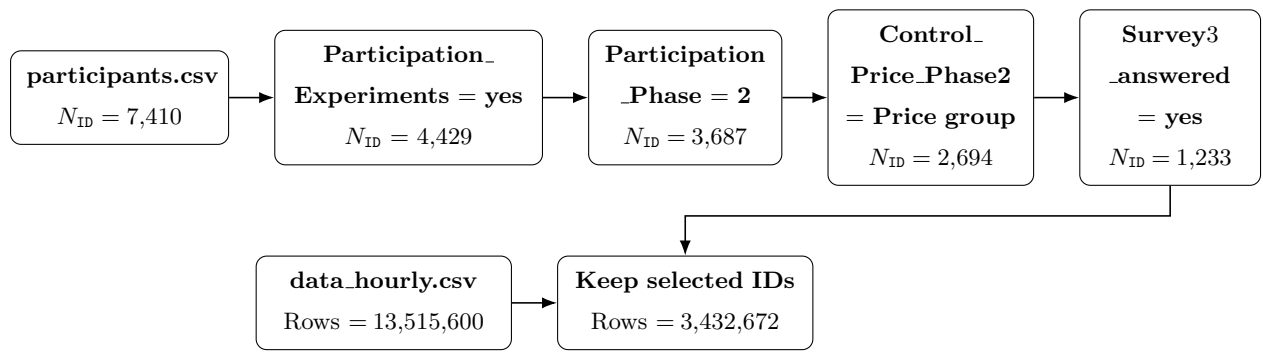
\begin{figure}[htbp]
\caption{Data Filtering Process.\vspace{5mm}}
\centering
\begin{tikzpicture}[
  font=\footnotesize,
  node distance=6mm and 6mm,
  box/.style={
    draw, rounded corners,
    align=center,
    inner xsep=6pt, inner ysep=6pt,
    text width=0.15\linewidth
  },
  arrow/.style={-Latex, line width=0.6pt}
]

\node[box, text width=0.15\linewidth] (hh0) {\textbf{participants.csv}\\
$N_{\texttt{ID}}=7{,}410$};

\node[box, text width=0.19\linewidth, right=of hh0] (hh1) {\textbf{Participation\_ Experiments $=$ yes}\\
$N_{\texttt{ID}}=4{,}429$};

\node[box, text width=0.13\linewidth, right=of hh1] (hh2) {\textbf{Participation}\\
\textbf{\_Phase $=$ 2}\\
$N_{\texttt{ID}}=3{,}687$};

\node[box, text width=0.14\linewidth, right=of hh2] (hh3) {\textbf{Control\_ Price\_Phase2}\\
\textbf{$=$ Price group}\\
$N_{\texttt{ID}}=2{,}694$};

\node[box, text width=0.12\linewidth,right=of hh3] (hh4) {\textbf{Survey$3$ \_answered}\\
\textbf{$=$ yes}\\
$N_{\texttt{ID}}=1{,}233$};

\draw[arrow] (hh0) -- (hh1);
\draw[arrow] (hh1) -- (hh2);
\draw[arrow] (hh2) -- (hh3);
\draw[arrow] (hh3) -- (hh4);

\node[box, text width=0.18\linewidth, below=12mm of hh2] (row1) {\textbf{Keep selected IDs}\\
Rows $=3{,}432{,}672$};

\node[box, text width=0.18\linewidth, left=of row1] (row0) {\textbf{data\_hourly.csv}\\
Rows $=13{,}515{,}600$};


\draw[arrow] (row0) -- (row1);

\draw[arrow] (hh4.south) -- ($(hh4.south)+(0,-4mm)$) -| (row1.north);

\end{tikzpicture}
\label{fig:filtering_flow}
\end{figure}

\subsection{Additional Results under Small $D_{\mathrm{s}}$}
\label{appendix:additional_case_study}
We consider the case $D_{\mathrm{s}} = 0.3\sum_{g\in[3]} n_g \bar{D}_g$ in this section. Figure~\ref{fig:demand_case_study_30} presents the outcomes under energy fairness.
With a limited aggregation target, aggregated energy is primarily constrained by $D_\mathrm{s}$. In contrast to the large $D_\mathrm{s}$ case, more flexible consumers (cluster~$1$) reduce their provided energy, accompanied by increases from less flexible consumers (clusters~$2$ and~$3$). Treating clusters~$2$ and~$3$ as a combined group, the resulting provided energy adjustment directions are fully consistent with our theoretical predictions for the two-agent setting. As more flexible consumers reduce their provided energy, their utility also decreases. Similar to the large $D_\mathrm{s}$ case, the magnitude of more flexible consumers’ utility change dominates those of less flexible consumers and the aggregator, leading to a decline in both total consumer utility and social welfare. In contrast, the individual utility of less flexible consumers increases, which is consistent with policymakers’ fairness objectives that prioritize benefits for low-income households. Moreover, as the absolute level of provided energy becomes more evenly distributed across consumers, the CNW increases.

\begin{figure}[htbp]
\FIGURE{
    \includegraphics[width=\textwidth]{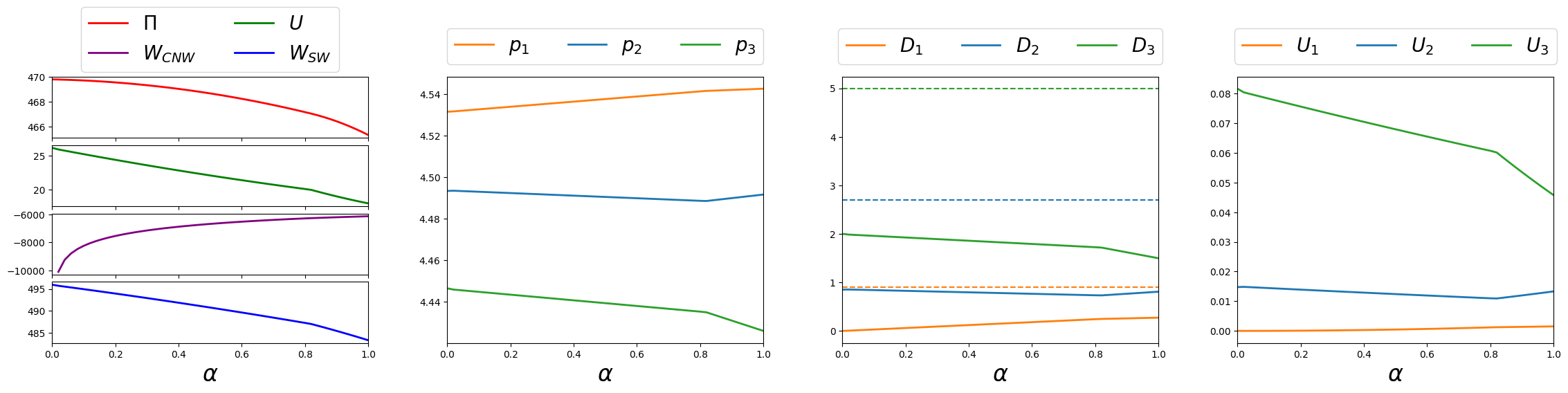}
}
{Energy fairness \label{fig:demand_case_study_30}
\vspace{.3cm}}
{}
\end{figure}

Figure~\ref{fig:price_case_study_30} reports the outcomes under price fairness. In this setting, the aggregator does not need to aggregate any energy from cluster~$1$, and consumers in this group therefore do not participate in the program regardless of the price they face. As a result, the initial price for cluster~$1$ is economically irrelevant. We thus exclude the price of cluster~$1$ when computing the initial price gap. Without this adjustment, the indeterminacy in price~$1$ would mechanically inflate the measured price gap, since it could take arbitrary values without affecting any performance measures. For any value of~$\alpha$, price fairness does not induce participation from cluster~$1$, and provided energy from this group remains zero throughout. Consequently, CNW is ill-defined in this case due to zero utility for cluster~$1$, and we therefore exclude it from the plot. Although the lack of participation by cluster~$1$ is driven by the small $D_{\mathrm{s}}$, the main message of price fairness remains unchanged--price fairness does not harm more flexible consumers, nor does it benefit less flexible consumers.

\begin{figure}[htbp]
\FIGURE{
    \includegraphics[width=\textwidth]{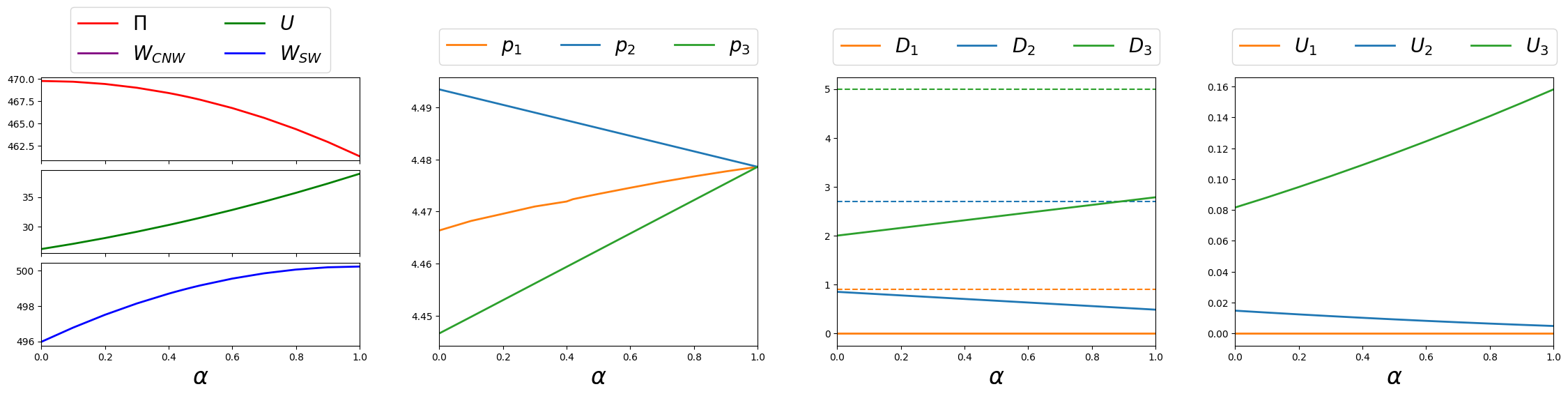}
}
{Price Fairness \label{fig:price_case_study_30}
\vspace{.3cm}}
{consumer Nash welfare is not reported because it takes the value $-\infty$.}
\end{figure}

Figure~\ref{fig:utility_case_study_30} reports the outcomes under utility fairness. Similar to the large $D_{\mathrm{s}}$ case, utility fairness benefits the low-utility group (cluster~$1$) while harming the high-utility group (cluster~$3$). By enforcing utility fairness, consumers in cluster~$1$ gain the opportunity to participate in the program, a feature not observed under price fairness. However, as in the large $D_{\mathrm{s}}$ case, the associated aggregator's profit loss is the largest among all fairness criteria considered. A key difference arising in the small $D_{\mathrm{s}}$ setting is that, with the exception of CNW, all performance measures decline as $\alpha$ increases. Taken together, these results indicate that while utility fairness achieves explicit inclusion of previously excluded consumers (cluster~$1$), it does so at the cost of both substantial aggregator's profit loss and a reduction in total consumer utility.

\begin{figure}[htbp]
\FIGURE{
    \includegraphics[width=\textwidth]{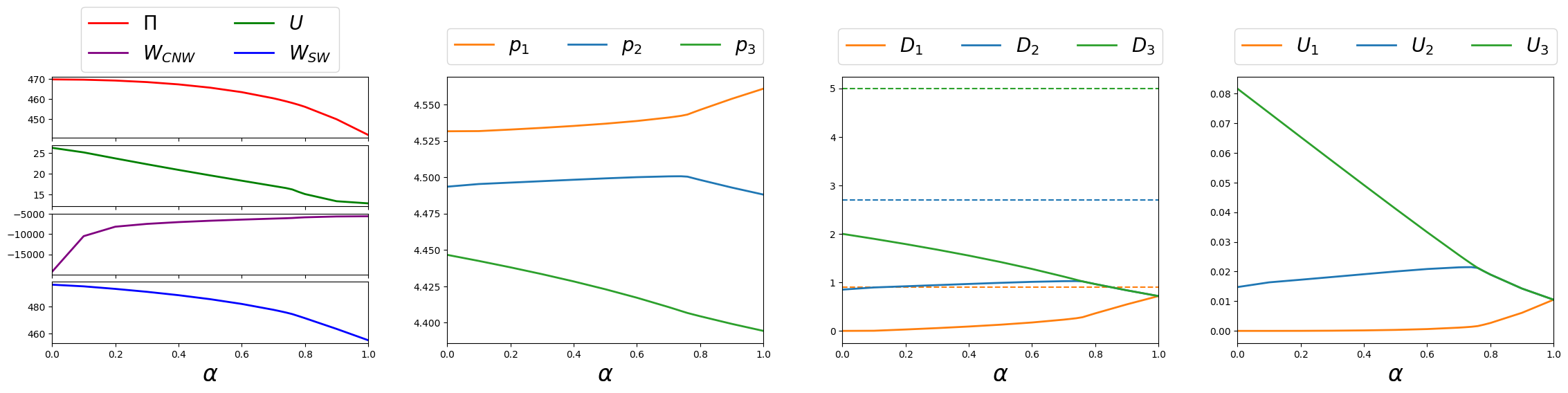}
}
{Utility Fairness \label{fig:utility_case_study_30}
\vspace{.3cm}}
{}
\end{figure}
\end{APPENDIX}

\end{document}